\documentclass[a4paper,11pt]{article}

\usepackage{fullpage}
\usepackage{times}
\usepackage{soul}
\usepackage{url}
\usepackage[utf8]{inputenc}
\usepackage[small]{caption}
\usepackage{graphicx}
\usepackage{amsmath}
\usepackage{booktabs}

\usepackage[ruled,vlined]{algorithm2e}

\urlstyle{same}
\usepackage{enumitem}

\usepackage{libertine}

\usepackage{amssymb,amsmath,amsfonts,amstext,amsthm}

\usepackage{hyperref}
\usepackage[svgnames]{xcolor}
\usepackage[capitalise,nameinlink]{cleveref}
\hypersetup{colorlinks={true},linkcolor={DarkBlue},citecolor=[named]{DarkGreen}}

\usepackage{tikz}  %needed for \textcolor as well as tikz
\usetikzlibrary{arrows}
\usetikzlibrary{patterns,snakes}
\usetikzlibrary{decorations.shapes}
\tikzstyle{overbrace text style}=[font=\tiny, above, pos=.5, yshift=5pt]
\tikzstyle{overbrace style}=[decorate,decoration={brace,raise=5pt,amplitude=3pt}]
\usetikzlibrary{shapes.geometric}
\usetikzlibrary{shapes,positioning}
\usetikzlibrary{calc,positioning}
\usetikzlibrary{shapes.multipart}

\definecolor{cadmiumgreen}{rgb}{0.0, 0.42, 0.24}

\usepackage{bbm}

\newtheorem{theorem}{Theorem}[section]
\newtheorem{corollary}[theorem]{Corollary}

\newtheorem{lemma}[theorem]{Lemma}

\theoremstyle{definition}

\usepackage[numbers]{natbib}

\usepackage{mathtools}
\usepackage{xcolor} %For colors in comments, we can remove afterwards.

\usepackage{authblk}

\usepackage{subcaption}
%\captionsetup{compatibility=false}

%%%% our definitions %%%%

\newcommand{\calN}{\mathcal{N}}

\newcommand{\calA}{\mathcal{A}}
\newcommand{\calM}{\mathcal{M}}
\newcommand{\calD}{\mathcal{D}}

\newcommand{\calZ}{\mathcal{Z}}
\newcommand{\calI}{\mathcal{I}}
\newcommand{\calJ}{\mathcal{J}}
\newcommand{\dist}{\mathtt{dist}}

\newcommand{\SC}{\text{\normalfont SC}}

\newcommand{\xx}{\mathbf{x}}

\newcommand{\pp}{\mathbf{p}}
\newcommand{\yy}{\mathbf{y}}

\newcommand{\DDM}{\text{\sc DistributedMedian}}
\newcommand{\ddm}{\text{\normalfont DM}}
\newcommand{\DMM}{\text{\sc MinimizeMedian}}
\newcommand{\dmm}{\text{\normalfont MM}}
\newcommand{\CDM}{\text{\sc ContinuousDistributedMedian}}
\newcommand{\cdm}{\text{\normalfont CDM}}

\newcommand{\RR}{\mathbb{R}}
\newcommand{\opt}{\text{\normalfont OPT}}

\newcommand{\wc}{\text{\normalfont wc}}

%%%% end of our definitions %%%%

\begin{document}

\allowdisplaybreaks

\title{\bf Approximate mechanism design for \\ distributed facility location}

\author[1]{Aris Filos-Ratsikas}
\author[2]{Alexandros A. Voudouris}

\affil[1]{Department of Computer Science, University of Liverpool, UK}
\affil[2]{School of Computer Science and Electronic Engineering, University of Essex, UK}

\renewcommand\Authands{ and }
\date{}

\maketitle   

\begin{abstract}	
We consider a single-facility location problem, where agents are positioned on the real line and are partitioned into multiple disjoint districts. The goal is to choose a location (where a public facility is to be built) so as to minimize the total distance of the agents from it. This process is distributed: the positions of the agents in each district are first aggregated into a representative location for the district, and then one of the district representatives is chosen as the facility location. This indirect access to the positions of the agents inevitably leads to inefficiency, which is captured by the notion of {\em distortion}. We study the discrete version of the problem, where the set of alternative locations is finite, as well as the continuous one, where every point of the line is an alternative, and paint an almost complete picture of the distortion landscape of both general and strategyproof distributed mechanisms.

\end{abstract}    

%%%%%%
%%%%%%

\section{Introduction}\label{sec:intro}
Social choice theory deals with the aggregation of different, often contrasting opinions into a common decision. There are many applications where the nature of the aggregation process is distributed, in the sense that it is performed in the following two steps: smaller groups of people first reach a consensus, and then their representative choices are aggregated into a final collective decision.
This can be due to multiple reasons, such as scalability (local decisions are much easier to coordinate when dealing with a large number of individuals), or the inherent roles of the participants (for example, being member states in the European Union or electoral bodies in different regional districts). However, although often necessary, this distributed nature is known to lead to outcomes that do not accurately reflect the views of society. A prominent example of this fact is the 2016 US presidential election, where Donald Trump won despite receiving only 46.1\% of the popular vote, as opposed to Hillary Clinton's 48.2\%.

To quantify the inefficiency that arises in distributed social choice settings, recently
Filos-Ratsikas {\em et al.}~\cite{filos2020distortion} adopted and extended the notion of \emph{distortion},
which is broadly used in social choice theory to measure the deterioration of an aggregate objective (typically the utilitarian social welfare) due to the lack of complete information, and thus provides a systematic way of comparing different mechanisms. 
In their work, Filos-Ratsikas {\em et al.} considered a very general social choice scenario with unrestricted agent preferences, and showed asymptotically tight upper and lower bounds on the distortion of plurality-based mechanisms. We follow a similar approach in this paper for a fundamental structured domain of agent preferences, the well-known \emph{facility location} problem on the line of real numbers. 

The facility location problem is one of the most important in social choice, and has been considered in both the economics and  the computer science literature. It is a special case of the \emph{single-peaked preferences} domain \cite{black1986theory,moulin1980strategy} equipped with linear agent cost functions. 
Furthermore, it is the most prominent setting where the agents have \emph{metric preferences}, and as such it has been studied extensively in the related distortion literature for centralized settings \cite{anshelevich2017randomized,anshelevich2018approximating}. 
Finally, facility location was the paradigm used by Procaccia
and Tennenholtz~\cite{procaccia2009approximate} to put forward their agenda of \emph{approximate mechanism design without money}, which resulted in a plethora of works in computer science ever since. 

In the agenda of Procaccia and Tennenholtz, the goal is to design mechanisms that are \emph{strategyproof} (that is, they do not provide incentives to the agents to lie about their true preferences) and have good performance in terms of some aggregate objective, as measured by having low \emph{approximation ratio}. The need for approximation  now comes from the strategyproofness requirement, rather than the lack of information. In fact, the distortion and the approximation ratio are essentially two sides of the same coin, differentiated by the reason for the loss in efficiency. We will be concerned with \emph{distributed mechanisms}, both strategyproof and not, in a quest to quantify the effect of distributed decision making on facility location, both independently and in conjunction with strategyproofness. Hence, our work follows the agendas of both approximate mechanism design \cite{procaccia2009approximate} and of distributed distortion \cite{filos2020distortion}, and can be cast as  \emph{approximate mechanism design for distributed facility location}.

\subsection{Our setting and contribution}
We study the distributed facility location problem on the real line $\RR$. As in the standard centralized problem, there is a set of agents with ideal positions and a set of alternative locations where the facility can be built. We consider both the \emph{discrete} setting, where the set of alternatives is some finite subset of $\RR$, as well as the \emph{continuous} setting, where the set of alternatives is the whole $\RR$. 
In the distributed version, the agents are partitioned into \emph{districts}, and the aggregation of their positions into a single facility location is performed in two steps: In the first step, the agents of each district select a \emph{representative} location for their district, and in the second step, {\em one of the representatives} is chosen as the final facility location; in Section~\ref{sec:extensions}, we discuss how our results extend to the case of {\em proxy voting}, where the location can be chosen from the set of all alternatives.

Our goal is to find the mechanism with the smallest possible \emph{distortion}, which is defined as the worst-case ratio (over all instances of the problem) between the social cost of the location chosen by the mechanism and the minimum social cost over all locations;  the social cost of a location is the total distance between the agent positions and the location. Note that the optimal location is calculated as if the agents are {\em not} partitioned into districts, and thus the distortion accurately measures the effect of selecting the facility location in a distributed manner to the efficiency of the system. 
We are also interested in \emph{strategyproof mechanisms}, for which the distortion quantifies the loss in performance both due to lack of information and due to requiring strategyproofness.
We mainly focus on the case of {\em symmetric} districts, which have equal size; in Section~\ref{sec:extensions} we discuss the case of {\em asymmetric} districts and other extensions. 
Our results are as follows (see also \cref{tab:results}): 
\begin{itemize}
\item For the discrete setting, 
the best possible distortion by any mechanism is $3$, 
and the best possible distortion by any strategyproof mechanism is $7$.

\item For the continuous setting, 
the best possible distortion by any mechanism is between $2$ and $3$, 
and the best possible distortion by any strategyproof mechanism is $3$.
\end{itemize}

\setlength{\tabcolsep}{1em}
\begin{table}
\centering
{\renewcommand{\arraystretch}{1.3}{
\begin{tabular}{lcc}
\noalign{\hrule height 1pt}\hline
 						    & Discrete 		& Continuous	    \\
\noalign{\hrule height 1pt}\hline
General mechanisms   	    & $3$                & $\in [2,3]$     \\ 
Strategyproof mechanisms    & $7$                & $3$                 \\ 
\noalign{\hrule height 1pt}\hline
\end{tabular}
}}
\caption{An overview of our bounds on the distortion of general and strategyproof mechanisms for discrete and continuous distributed facility location, when the districts are symmetric. 
In the discrete setting, the lower bound of $7$ also holds for {\em ordinal} (not necessarily strategyproof) mechanisms. The upper bound of $3$ for general mechanisms is achieved by $\DMM$, while the upper bound of $7$ for strategyproof mechanisms is achieved by $\DDM$, which is an ordinal mechanism. 
In the continuous setting, the upper bound of $3$ for general and strategyproof mechanisms is achieved by the continuous version of $\DDM$, which is actually an implementation of $\DMM$.}
\label{tab:results}
\end{table}

The mechanisms we design are adaptations of well-known mechanisms for the centralized facility location problem. In the discrete setting, the mechanism with the best possible distortion of $3$ selects the representative of each district to be the location that minimizes the social cost of the agents therein, and then chooses the median representative as the facility location; we refer to this mechanism as $\DMM$. By modifying the first step so as to select the representative of a district to be the location that is the closest to the median agent in the district, we obtain the $\DDM$ mechanism, which is the best possible strategyproof mechanism with distortion $7$.  When we move to the continuous setting, selecting the median agent within each district minimizes the social cost of the agents therein, and thus $\DDM$ is an implementation of $\DMM$. The proofs of our upper bounds in Sections~\ref{sec:discrete} and~\ref{sec:continuous} rely on a characterization of the structure of worst-case instances (in terms of distortion) for each of these mechanisms, which is obtained by carefully modifying the positions of some agents without decreasing the distortion.

For the lower bounds, we employ the following main idea. We construct instances of the problem for which any mechanism with low distortion (depending on the bound we are aiming for) must satisfy some constraints about the representative $y$ it can choose for a particular district, namely, either that $y$ is some specific location (in the discrete setting), or that it must lie in some specific interval (in the continuous setting). Then, because of the distributed nature of the mechanism, we can exploit the fact that $y$ must represent this district in any instance that contains it, and use such instances to either argue about the distortion of the mechanism, or to impose constraints on the representatives of other districts. At the heart of all of our constructions lies one type of crucial lemma (see \cref{lem:011winner-general} for an example) which establishes that, at least for the instances we consider, any mechanism must select the median representative in the second step of aggregation. The proofs of these lemmas are similar in the sense that they use the idea highlighted above repeatedly and inductively, and in conjunction with arguments involving strategyproofness when necessary. However, they are also notably different because they apply to different settings (discrete vs continuous) or to mechanisms with different distortion bounds ($3$ or $2$ for general vs $7$ or $3$ for strategyproof).

Interestingly, when given as input the particular instances we use in the proof of the lower bound in the discrete setting, strategyproof mechanisms exhibit an {\em ordinal} behavior. Consequently, the very same proof can be used to show that the lower bound of $7$ also holds for {\em ordinal mechanisms}, which do not take into account the actual positions of the agents, but instead base their decisions only on the preference rankings that the positions of the agents induce over the alternative locations. Furthermore, this bound is tight since $\DDM$ is in fact ordinal (whereas $\DMM$ is not). Finally, observe that ordinality is not a meaningful property in the continuous setting, as every single position induces a different preference ranking over locations.

\subsection{Related work}\label{sec:related}
The notion of the distortion of social choice mechanisms, as well as the corresponding research agenda, was initiated by \citet{procaccia2006distortion}, who considered an unrestricted preference setting in which the agents have {\em normalized} cardinal valuations and the objective is to choose a single winning alternative.
A subsequent stream of papers studied several variants of the problem, 
including 
the original single-winner setting~\citep{boutilier2015optimal,Caragiannis2011embedding}, 
multi-winner elections~\citep{caragiannis2017subset}, 
participatory budgeting~\citep{benade2017preference,goel2016knapsack}, 
as well as settings showcasing tradeoffs between the distortion and cardinal information~\citep{ABFV20,mandalefficient,mandalec20},
Moreover, there are quite a few papers that have studied the distortion of strategyproof mechanisms~\citep{bhaskar2018truthful,caragiannis2018truthful,filos2014truthful,Aris14}. 
In its original definition, the distortion measured the performance of ordinal mechanisms in terms of a cardinal objective, namely the utilitarian social welfare (the total utility of the agents for the chosen outcome). However, if one interprets more generally the lack of information as the reason for the loss in efficiency, the distortion actually captures much wider scenarios, like the distributed social choice setting studied by \citet{filos2020distortion}.

Although the number of papers dealing with (variants of) the aforementioned normalized setting is substantial, the literature on the distortion flourished after the work of \citet{anshelevich2018approximating} and \citet{anshelevich2017randomized}, who studied settings in which the agents have {\em metric} preferences. 
Such preferences are constrained by the fact that the utility (or cost in the particular case) of every agent for different alternatives must satisfy the triangle inequality, which effectively results in the distortion bounds being small constants, rather than asymptotic bounds depending on the number of agents and alternatives, as it is typically the case in the normalized setting. 
Similar investigations have given rise to a plethora of papers on this topic; see~\citep{abramowitz2017utilitarians,abramowitz2019awareness,anshelevich2016blind,anshelevich2017tradeoffs,anshelevich2018ordinal,fain2018random,feldman2016facility,goel2017metric,goel2018relating,gross2017agree,kempe2019analysis,munagala2019improved}. 
For a comprehensive introduction to the distortion literature, we refer the reader to the survey of Anshelevich {\em et al.}~\citep{survey}.

As already mentioned earlier, the facility location problem plays an important role in the literature at the intersection of computer science and economics. From a purely algorithmic perspective, facility location problems have a long history in the area of approximation~(e.g., see \citep{shmoys1997approximation}). At the same time, many works in economics have studied such problems \citep{bogomolnaia2007euclidean,peters1992pareto,rader1963existence} in the context of \emph{Euclidean preferences} \citep{hotelling1990stability}, a special case of the celebrated class of \emph{single-peaked preferences} \citep{black1986theory,moulin1980strategy}; see also \citep{elkind2014recognizing,peters2017recognising}. 
The problem became extremely popular in the economics and computation community after \citet{procaccia2009approximate} used it to put forward their agenda of \emph{approximate mechanism design without money}, following the similar agenda of \citet{nisan2001algorithmic} for settings with money.  Since then, the facility location problem has been studied extensively, for 
different objectives~\citep{alon2009strategyproof,cai2016facility,feigenbaum2013approximately,feldman2013strategyproof}, 
multiple facilities~\citep{escoffier2011strategy,fotakis2013power,lu2009tighter,lu2010asymptotically}, 
different domains~\citep{schummer2002strategy,tang2020characterization,sui2013analysis,sui2015approximately}, 
different cost functions \cite{filos2015facility,fotakis2013strategyproof}, and 
several variants of the problem~\cite{cheng2013obnoxious,cheng2011mechanisms,DFV21,duan2019heterogeneous,fong2018facility,fotakis2013winner,serafino2015truthful,serafino2016heterogeneous}; See also the recent survey of Chan {\em et al.}~\citep{FLsurvey}.

The most related setting to our work is an extension studied by Procaccia and Tennenholtz~\cite{procaccia2009approximate} with agents (or, super-agents, for clarity) controlling multiple locations, whose cost is the total distance between their locations and the facility. They showed that the mechanism that first selects the median location of each super-agent and then the median of those is strategyproof and $3$-approximate for the social cost. This implies an upper bound of $3$ on the distortion of strategyproof mechanisms in our continuous setting, by interpreting the super-agents as district representatives; we show that this bound can be obtained by simple extensions of our techniques for the discrete setting. Procaccia and Tennenholtz also showed a matching lower bound, which however requires the super-agents to be truthful, and thus does not have any implications for our setting. This model was later extended by Babaioff {\em et al.}~\cite{babaioff2016mechanism} to a setting where the locations are themselves strategic agents, and the agents of the higher level are strategic \emph{mediators}.

%%%%%%
%%%%%%

\section{Preliminaries}\label{sec:prelim}

We consider the {\em discrete} and the {\em continuous} distributed facility location problem. In both settings, there is a set $\calN$ of $n$ {\em agents} who are positioned on the line of real numbers; let $x_i \in \RR$ denote the {\em position} of agent $i$, and denote by $\xx=(x_i)_{i \in \calN}$ the {\em position profile} of all agents. The agents are partitioned into $k \geq 2$ {\em districts}; let $\calD$ be the set of districts, and denote by $d(i)$ the district containing agent $i$. Let $\calN_d = \{i \in \calN: d(i) = d\}$ be the set of agents that belong to district $d \in \calD$. 
We consider {\em symmetric} districts, which consist of the same number of agents $\lambda = \frac{n}{k}$.  
We will use the notation $\xx_d = (x_i)_{i \in \calN_d}$ for the restriction of $\xx$ to the positions of the agents in district $d$, 
and we will refer to $\xx_d$ as a \emph{district position profile}.
We say that two districts $d$ and $d'$ are \emph{identical} if $\xx_d = \xx_{d'}$.

For two points $x, y \in \RR$, let $\delta(x,y) =\delta(y,x) = |x-y|$ denote their absolute distance. Given a position profile $\xx$, the {\em social cost} of a point $z \in \RR$ is the total distance of the agents from $z$:
$$\SC(z| \xx) = \sum_{i \in \calN} \delta(x_i,z).$$
Our goal is to select a location $z^*$ from a set of {\em alternative locations} $\calZ \subseteq \RR$ to minimize the social cost, that is,
$$z^* \in \arg\min_{z \in \calZ} \SC(z | \xx).$$
In the discrete setting, the set of alternative locations is finite and denoted by $\calA$.
On the other hand, in the continuous setting, the set of alternative locations is the whole $\RR$. 
Therefore, we have that either $\calZ = \calA$ in the discrete version, or $\calZ=\RR$ in the continuous version. 

We use the term \emph{instance} to refer to a tuple $\calI=(\xx,\calD,\calZ)$ consisting of a position profile $\xx$, a set of districts $\calD$, and a set of alternative locations $\calZ$; we omit the set of agents $\calN$ from the definition of the instance as it is implied by $\xx$. In the continuous setting, since the set of alternative locations is clear, we will simplify our notation and use a pair $(\xx,\calD)$ to denote an instance. 

If we had access to the positions of all the agents, it would be easy to select the optimal location in both versions of the problem. However, in our setting this is not possible as the positions are assumed to not be globally known, only \emph{locally}. To decide the facility location we deploy {\em distributed mechanisms}, to which we will simply refer as \emph{mechanisms} from now on.
A mechanism $\calM$ consists of the following two steps of aggregation: 
\begin{itemize}
\item First step: For every district $d \in \calD$, $\calM$ aggregates the positions of the agents therein into the {\em representative} location $z_d \in \calZ$ of $d$. This step is \emph{local}, in the sense that the representative $z_d$ is a result of the corresponding district profile $\xx_d$ only. Formally, for any two instances that contain two identical districts $d_1$ and $d_2$, $\calM$ must choose the same representative for both districts, that is, $z_{d_1}=z_{d_2} \in \calZ$. Essentially, this property stipulates that the representative of a district is chosen only by the members of the district, and independently of agents in other districts.  

\item Second step: $\calM$ aggregates the district representatives into a single {\em facility location}. For a given instance $\calI = (\xx, \calD, \calZ)$, we denote by $\calM(\calI)$ the facility location chosen by the mechanism. 
\end{itemize}

\subsection{The distortion of mechanisms}

Due to the lack of global information, the facility location chosen by a mechanism will inevitably be suboptimal. To quantify this inefficiency, we adopt and extend the notion of distortion to our setting. 
The {\em distortion} of an instance $\calI = (\xx,\calD,\calZ)$ subject to using a mechanism $\calM$ is the ratio between the social cost of the location $\calM(\calI)$ chosen by the mechanism given $\calI$ as input and the social cost of the optimal location $\opt(\calI) = \arg\min_{z \in \calZ} \SC(z|\xx)$ for the instance:
\begin{align*}
\dist(\calI | \calM) = \frac{\SC(\calM(\calI)|\xx)}{ \SC( \opt(\calI) |\xx)}.
\end{align*}
Then, the distortion of mechanism $\calM$  is the worst-case distortion over all possible instances:
\begin{align*}
\dist(\calM) = \sup_{\calI} \dist(\calI | \calM).
\end{align*}

We will now argue that it is without loss of generality to focus on mechanisms satisfying a simple unanimity property (within the districts). In particular, we say that a mechanism $\calM$ is \emph{unanimous}, if it chooses the representative of a district to be $z \in \calZ$,  whenever all agents of the district are positioned at $z$. The proof of the following lemma can be found in the appendix.

\begin{lemma}\label{lem:unanimous}
Any mechanism with finite distortion must be unanimous. 
\end{lemma}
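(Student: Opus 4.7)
The plan is to prove the contrapositive: any mechanism $\calM$ that is not unanimous admits an instance on which it incurs infinite distortion. So suppose there is some $z \in \calZ$ and a district profile $\xx_d$ in which every agent is located at $z$, yet $\calM$ assigns this district a representative $z' \neq z$ with $z' \in \calZ$.

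Next, I would construct the instance that witnesses infinite distortion. Let $\calI$ be the instance with $k$ districts in which \emph{every} district is identical to $\xx_d$, i.e.\ every one of the $n$ agents is placed at $z$. By the locality requirement on the first aggregation step spelled out in the definition of a mechanism (identical district profiles must yield identical representatives), each of the $k$ districts in $\calI$ is assigned the same representative $z'$. Since the second step of any mechanism must choose the facility from among the district representatives, the output must be $\calM(\calI) = z'$.

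Finally, I would compare the resulting social cost to the optimum. Since all agents sit at $z \in \calZ$, we have $\SC(z \mid \xx) = 0$, so $\opt(\calI) = z$ with optimal social cost $0$. On the other hand, $\SC(z' \mid \xx) = n \cdot \delta(z, z') > 0$, because $z' \neq z$. Hence
\[
\dist(\calI \mid \calM) = \frac{n \cdot \delta(z,z')}{0} = +\infty,
\]
which contradicts the assumption that $\calM$ has finite distortion. The same argument works verbatim in both the discrete and continuous settings, since in either case $z \in \calZ$ is a legitimate alternative.

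There is no real obstacle here; the only subtlety is making careful use of the locality property to ensure we may legitimately duplicate the offending district across the entire instance, and of the fact that the second aggregation step is constrained to pick from the representatives, which together force the bad output $z'$ even though the ideal point $z$ is itself an alternative location.
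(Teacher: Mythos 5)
Your proof is correct and follows essentially the same approach as the paper: exhibit a district profile with all agents at $z \in \calZ$ whose representative is some $z' \neq z$, use locality to build an instance where the mechanism is forced to output $z'$, and conclude infinite distortion since the optimum at $z$ has social cost $0$. The only cosmetic difference is that you replicate the offending district across all $k$ districts (so every representative is $z'$), whereas the paper passes to the instance consisting of that single district; both are valid.
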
	

\subsection{Strategyproofness}
Besides achieving low distortion, we are also interested in mechanisms which ensure that the agents report their positions {\em truthfully}, that is, they have no incentive to misreport hoping to change the outcome of the mechanism to a location that is closer to their position. Formally, let $\calI=(\xx,\calD,\calZ)$ be an instance, where $\xx$ is the {\em true} position profile of the agents, and let $\calJ = (\yy,\calD,\calZ)$ be any instance with position profile $\yy=(y_i,\xx_{-i})$, in which agent $i$ reports $y_i$ and all other agents report their positions according to $\xx$. A mechanism $\calM$ is {\em strategyproof} if the location chosen by $\calM$ when given as input $\calI$ is closer to the position $x_i$ of any agent $i$ than the location chosen by $\calM$ when given as input $\calJ$. In other words, for every agent $i$ and $y_i \in \RR$, it must hold that
\[\delta(x_i,\calM(\xx,\calD,\calZ)) \leq \delta(x_i,\calM( (y_i,\xx_{-i}), \calD,\calZ)).\]  
This added requirement of strategyproofness imposes further restrictions, and potentially impacts the achievable distortion as well. So, our goal is to design strategyproof mechanisms with as low distortion as possible.

We now define the class of mechanisms that are strategyproof within districts. Intuitively, such mechanisms prevent the agents from misreporting in hopes of changing the representative of their district to a location closer to them. Observe that a strategyproof mechanism could in principle allow such a local manipulation, only to eliminate it in the second step (for example, by completely ignoring the representatives and choosing an arbitrary fixed facility location). We show that for mechanisms with a finite distortion, this is impossible.

Formally, a mechanism $\calM$ is \emph{strategyproof within districts} if for any district $d \in \calD$, the representative of $d$ on input $\calI = (\xx,\calD,\calZ)$ is closer to the true position $x_i$ of every agent $i$ than the representative of $d$ on input $\calJ = ((y_i,\xx_{-i}),\calD,\calZ)$. 
We can now show the following useful property of stratefyproof mechanisms; the proof is deferred to the appendix.

\begin{lemma}\label{lem:spiout}
Any strategyproof mechanism with finite distortion is strategyproof within districts.	
\end{lemma}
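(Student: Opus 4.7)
I would argue by contradiction. Assume $\calM$ is strategyproof with finite distortion but not strategyproof within districts. Then there exist a district profile $\xx_d$, an agent $i \in \calN_d$ with true position $x_i$, and a misreport $y_i$ such that the representative $z_d$ that $\calM$ assigns to $\xx_d$ and the representative $z_d'$ that $\calM$ assigns to $(y_i,\xx_{d,-i})$ satisfy $|x_i - z_d'| < |x_i - z_d|$. The locality of the first step is essential here: it says that these two representatives depend only on the district's own profile, not on the rest of the instance, so I am free to plug either profile into any district of any instance I construct.

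I then build a chain of $k+1$ instances $I_0, I_1, \ldots, I_k$, where in $I_m$ the first $m$ districts carry the deviated profile $(y_i,\xx_{d,-i})$ (and so, by locality, representative $z_d'$) and the remaining $k-m$ districts carry the true profile $\xx_d$ (and so representative $z_d$). Since the second step of $\calM$ selects one of the representatives, and since in $I_0$ (respectively $I_k$) all $k$ representatives coincide with $z_d$ (respectively $z_d'$), we obtain $\calM(I_0) = z_d$ and $\calM(I_k) = z_d'$, and more generally $\calM(I_m) \in \{z_d, z_d'\}$ for every $m$. Therefore there is a smallest threshold $m^* \in \{1,\ldots,k\}$ with $\calM(I_{m^*-1}) = z_d$ and $\calM(I_{m^*}) = z_d'$.

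The contradiction is a single strategyproofness check at this threshold. In the instance $I_{m^*-1}$, the $m^*$-th district has profile $\xx_d$, so one of its agents is truthfully positioned at $x_i$. If that agent misreports $y_i$, only that district's profile flips from $\xx_d$ to $(y_i,\xx_{d,-i})$, turning $I_{m^*-1}$ into exactly $I_{m^*}$ and moving the facility from $z_d$ to $z_d'$, which by assumption is strictly closer to $x_i$. This contradicts the strategyproofness of $\calM$.

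The main obstacle I anticipate is justifying the identities $\calM(I_0) = z_d$ and $\calM(I_k) = z_d'$ cleanly, so that the threshold $m^*$ is well-defined; this is where the finite distortion assumption enters. Combined with \cref{lem:unanimous} and the convention that the second step returns one of the district representatives, finite distortion pins down the output on these extremal instances to be the unique common representative. Everything else in the argument is a straightforward bookkeeping pass along the chain together with a single invocation of strategyproofness at the switching step.
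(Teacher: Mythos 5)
Your proposal is correct and is essentially the paper's own argument: both build a chain of instances over $k$ identical districts, flipping the deviating district's profile one at a time, and derive a contradiction with strategyproofness from the fact that the output must be $z_d$ at one end of the chain and $z_d'$ at the other (the paper pushes the invariant $\calM = z_d$ forward step by step, while you locate a single switching index, but these are the same hybrid argument). The only cosmetic quibble is that the identities $\calM(I_0)=z_d$ and $\calM(I_k)=z_d'$ follow directly from the requirement that the second step outputs one of the representatives, so the finite-distortion assumption is not really what pins them down.
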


%%%%%%
%%%%%%

\section{Mechanisms for the discrete setting}\label{sec:discrete}

We begin the exposition of our results from the discrete setting. We consider two natural mechanisms, which we call $\DMM$ ($\dmm$) and $\DDM$ ($\ddm$). Given the representatives of the districts, both mechanisms select the facility location to be the median representative. The main difference between the two mechanisms is on how they select the representatives of the districts: $\dmm$ selects the representative of each district to be the alternative location that minimizes the social cost of the agents within the district, while $\ddm$ 
selects the representative of each district to be the location which is closer to the median agent in the district.
In case there are at least two median representatives or at least two locations minimizing the social cost within some district, the mechanisms select the left-most such option.

As one might expect, the fact that $\dmm$ minimizes the social cost within the districts may give the opportunity to some agents therein to misreport their positions hoping to affect the outcome. On the other hand, by choosing the median location both within and over the districts, $\ddm$ does not allow such manipulations. Formally, we have the following statement, whose proof is deferred to the appendix.

\begin{theorem}\label{thm:sp-mechanisms}
$\dmm$ is not strategyproof, while $\ddm$ is strategyproof.
\end{theorem}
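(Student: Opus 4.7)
The plan is to handle the two halves of the theorem separately: a concrete manipulation instance for $\dmm$, and a reduction argument for $\ddm$.

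For $\dmm$, I would construct a small instance in which shifting a single agent's report swings the social-cost--minimizing alternative of one district across the district median. Take $k=2$, $\lambda=3$, $\calA = \{-1, 4, 100\}$, district~$1$ at positions $(0,1,10)$, and district~$2$ at $(100,100,100)$. The social costs in district~$1$ at $-1, 4, 100$ are $14, 13, 289$, so its representative is $4$; district~$2$'s representative is $100$ by unanimity, and the left-most tiebreaking over the two representatives yields facility $4$, so the agent at position $1$ incurs cost $3$. If this agent misreports $-100$, the new social costs in district~$1$ become $111, 114, 390$, so the representative flips to $-1$ and hence the facility becomes $-1$; the agent's true cost drops to $2$, contradicting strategyproofness.

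For $\ddm$, I would show strategyproofness by re-expressing the mechanism as a median-of-medians on ``effective'' reports. Let $\pi:\RR\to\calA$ denote the closest-alternative map with left-most tiebreaking; this is a weakly increasing step function. Because $\pi$ preserves order, the identity
\[
\pi\bigl(\text{med}(z_1,\ldots,z_m)\bigr) \;=\; \text{med}\bigl(\pi(z_1),\ldots,\pi(z_m)\bigr)
\]
holds for all inputs. Applying this within each district gives $r_d = \pi(\text{med}(\xx_d)) = \text{med}_j \pi(x_{d,j})$, so the facility equals $\text{med}_d(\text{med}_j e_{d,j})$ computed on the effective reports $e_{d,j} := \pi(x_{d,j}) \in \calA$.

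From here a standard single-peaked argument finishes the job. Agent~$i$'s induced preference on facilities $F\in\calA$, given by $F\mapsto|x_i - F|$, is single-peaked with peak $\pi(x_i)$. Holding all other effective reports fixed, each of the two nested medians acts on $\pi(y_i)$ as a clamping to an interval in $\calA$, and a composition of two such clampings is itself a clamping of $\pi(y_i)$ to some interval $[L,R]$ determined by the remaining reports. Thus $F(y_i) = \mathrm{clamp}(\pi(y_i), L, R)$, and the truthful effective report $\pi(x_i)$ produces the point of $[L,R]\cap\calA$ closest to the peak $\pi(x_i)$; single-peakedness rules out any strictly better outcome. Since agent~$i$ realizes the effective report $\pi(x_i)$ simply by reporting $y_i=x_i$, truth-telling is weakly optimal. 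I anticipate that the only delicate step is tracking tiebreaking carefully so that the identity $\pi\circ\text{med}=\text{med}\circ\pi$ remains valid under the left-most conventions of both $\pi$ and the outer median; once this reformulation is secured, the rest is routine single-peaked reasoning.
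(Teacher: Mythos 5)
Your proposal is correct, and the $\ddm$ half takes a genuinely different route from the paper's. For $\dmm$, both you and the paper exhibit a concrete in-district manipulation of the social-cost minimizer (the paper uses a two-agent district on $\calA=\{0,1\}$ with positions $x+\varepsilon$ and $1-x$; your three-agent instance on $\{-1,4,100\}$ checks out numerically and is equally valid). For $\ddm$, the paper argues strategyproofness directly by a two-case analysis on whether the deviating agent is the median of her district, showing in each case that any change to the representative (and hence to the median representative) can only move the facility to the far side of her position. You instead rewrite $\ddm$ as a median-of-medians applied to the snapped reports $e_{d,j}=\pi(x_{d,j})$ via the commutation $\pi\circ\mathrm{med}=\mathrm{med}\circ\pi$, and then invoke the standard clamping characterization of order statistics. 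This identity does hold with the paper's conventions: $\pi$ with left-most tie-breaking is weakly increasing, and for any weakly increasing map the $j$-th order statistic of the images equals the image of the $j$-th order statistic, so the "delicate step" you flag is in fact safe; you should also note explicitly that when the two clamping intervals are disjoint the composition degenerates to a constant in $\calA$, which trivially admits no manipulation, and that in the case $\pi(x_i)<L$ one needs $x_i<L$ (which follows since $L\in\calA$ and $\pi(x_i)$ is the closest alternative to $x_i$) to conclude that $F=L$ minimizes $|x_i-F|$ over the achievable set. The paper's argument is shorter and self-contained; yours is more structural, makes strategyproofness transparent as an instance of the generalized-median/clamping phenomenon, and as a byproduct makes visible that $\ddm$ depends only on the snapped reports, i.e., that it is ordinal, a fact the paper establishes separately.
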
 

In the rest of this section, we focus on bounding the distortion of the two mechanisms. 
To do so, we first show in \cref{sec:discrete-worst-case} that the instances achieving the worst-case distortion have a very particular structure, which is common for both mechanisms. We then exploit this structure in \cref{sec:discrete-distortion} to show an upper bound of $3$ on the distortion of $\dmm$ and an upper bound of $7$ on the distortion of $\ddm$.

\subsection{Worst-case instances}\label{sec:discrete-worst-case}
We start by characterizing the structure of worst-case instances for any mechanism $\calM \in \{\dmm,\ddm\}$. 
Let $\wc(\calM)$ be the class of instances $\calI=(\xx,\calD,\calA)$ such that
\begin{itemize}
\item[(P1)]
For every agent $i \in \calN$, 
\begin{itemize}
\item $x_i \geq \calM(\calI)$ if $\calM(\calI) < \opt(\calI)$, or 
\item $x_i \leq \calM(\calI)$ if $\calM(\calI) > \opt(\calI)$.
\end{itemize}

\item[(P2)]
For every $z \in \calA$ which is representative for a set of districts $\calD_z \neq \varnothing$,  the positions of all agents in the districts of $\calD_z$ are in the interval defined by $z$ and $\opt(\calI)$.
\end{itemize}

\begin{algorithm}[t]
	\SetKwProg{mechanism}{Mechanism}{}{}
	\mechanism{$\dmm(\xx,\calD,\calA)$}{
		\For{each district $d \in \calD$}{
			$z_d \leftarrow \text{left-most location in } \arg\min_{z \in \calA} \sum_{i \in \calN_d} \delta(x_i,z)$
		}
		\Return {\sc Median}$(\{z_d\}_{d \in \calD})$
	}
	
	\vspace{10pt}
	
	\SetKwProg{mechanism}{Mechanism}{}{}
	\mechanism{$\ddm(\xx,\calD,\calA)$}{
		\For{each district $d \in \calD$}{
			$z_d \leftarrow \arg\min_{z \in \calA} \delta(\text{\sc Median}(\xx_d),z)$
		}
		\Return {\sc Median}$(\{z_d\}_{d \in \calD})$
	}
	
	\vspace{10pt}
	
	\SetKwProg{rule}{Rule}{}{}	
	\rule{{\sc Median}$(\yy)$}{
		$\eta \leftarrow |\yy|$ \\
		sort $\yy = (y_1, ..., y_\eta)$ in non-decreasing order \\
		\Return $\yy_{\lfloor \eta/2 \rfloor}$
	}
	
	\caption{The $\DMM$ and $\DDM$ mechanisms.} \label{alg:DMM} 
	
\end{algorithm}

\noindent We will show the following lemma.

\begin{lemma}\label{lem:discrete-structure}
The distortion of $\calM \in \{\dmm,\ddm\}$ is equal to 
$$\sup_{\calI \in \wc(\calM)} \dist(\calI | \calM).$$
\end{lemma}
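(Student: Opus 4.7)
The plan is to establish the non-trivial direction, $\dist(\calM) \leq \sup_{\calI \in \wc(\calM)} \dist(\calI\mid \calM)$, by showing that from any instance $\calI$ one can construct an instance $\calI' \in \wc(\calM)$ with $\dist(\calI'\mid\calM) \geq \dist(\calI\mid\calM)$. The construction proceeds in two stages of position modifications, enforcing (P1) and then (P2). Throughout, I assume $m := \calM(\calI) \leq o := \opt(\calI)$; the opposite case is handled symmetrically.

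For Stage~1 (enforcing (P1)), the plan is, for every agent $i$ with $x_i < m$, to move $x_i$ to $m$. The crucial elementary identity is that $|z - x_i| - |z - m| = m - x_i$ for every $z \geq m$, while the same quantity is strictly less than $m - x_i$ for $z < m$ (it can even be negative). Three consequences follow. First, the social cost at every $z \geq m$ drops by the same amount $m - x_i$, and at every $z < m$ it drops by strictly less, so $o$ remains a minimizer of social cost. Second, both the numerator $\SC(m\mid\cdot)$ and the denominator $\SC(o\mid\cdot)$ of the distortion ratio decrease by exactly $m - x_i$, and since the numerator is strictly larger to begin with, the ratio strictly increases (using the elementary fact that $\tfrac{a-c}{b-c} > \tfrac{a}{b}$ whenever $a > b > c > 0$). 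Third, both $\dmm$ and $\ddm$ are median-based and therefore monotone with respect to rightward shifts of agent positions, so each district representative and the overall median of representatives move only weakly rightwards. To ensure the output stays at $m$ exactly, I exploit the finiteness of $\calA$: the representatives are piecewise-constant functions of the $x_i$, so each shift can be performed incrementally and halted just before any threshold where the output would change, thereby preserving $\calM(\cdot) = m$ while the distortion strictly increases. Iterating over all relevant agents yields an instance satisfying (P1).

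For Stage~2 (enforcing (P2)), we may assume all agents lie in $[m, +\infty)$; moreover, each representative lies in $[m, +\infty)$ as well, because under either mechanism a location $z < m$ cannot minimize the district social cost (respectively, be closest to the district median) when every district agent is at a position $\geq m$. For every district $d$ with representative $z_d$ and every agent $i \in \calN_d$ whose position lies outside the interval between $z_d$ and $o$, the plan is to move $x_i$ to the nearest endpoint of that interval. To illustrate: if $z_d \leq o$ and $x_i > o$, then moving $x_i$ down to $o$ shifts the median of $d$ weakly leftwards but, since it does not cross $z_d$ (already left of $o$), the representative $z_d$ is unchanged and $\calM$'s output stays at $m$. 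Applying the identity from Stage~1 in reverse, $\SC(o\mid\cdot)$ strictly decreases while $\SC(m\mid\cdot)$ decreases by no more, so the distortion does not decrease. The symmetric subcases are handled analogously.

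The main obstacle is preserving the output $\calM(\cdot) = m$ throughout these transformations: a naive ``move-to-$m$'' or ``move-to-$o$'' step might push the overall median of representatives strictly past $m$, which would reduce the distortion. The resolution rests on two features of $\dmm$ and $\ddm$ — their median-based monotonicity in agent positions, and the piecewise-constancy of representatives as functions of individual positions (stemming from the finiteness of $\calA$) — which together allow the shifts to be executed in sufficiently small increments to preserve $\calM(\cdot)$ at every intermediate step.
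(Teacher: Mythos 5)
Your overall strategy coincides with the paper's: transform an arbitrary instance into one satisfying (P1) and (P2) by moving agents, while keeping the mechanism's output at $w$, the optimum at $o$, and the distortion non-decreasing. However, there is a genuine gap at the central step, namely preserving $\calM(\cdot)=w$. Weak rightward monotonicity of the representatives (and hence of their median) under a rightward shift of an agent is not enough: it still permits the median representative to move strictly past $w$, which is exactly the failure mode you need to exclude. Your proposed fix --- performing the shift incrementally and ``halting just before any threshold where the output would change'' --- does not repair this: if such a threshold exists strictly before the agent reaches $w$, your process terminates at an instance that violates (P1) and hence lies outside $\wc(\calM)$, so the claimed equality with $\sup_{\calI\in\wc(\calM)}\dist(\calI\mid\calM)$ is not established (the supremum over $\wc(\calM)$ could genuinely miss these near-threshold instances, since the representative map is discontinuous at the threshold). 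What is actually needed, and what the paper proves, is that no such threshold exists: when an agent at $x_i<w$ in a district with representative $z$ is moved to $w$, either $z>w$ and the representative is unchanged (via \cref{lem:optimality-preservation} for $\dmm$, and because the agent cannot be the district median for $\ddm$), or $z\le w$ and the new representative $y$ still satisfies $y\le w$ (via \cref{lem:sc-peaked} for $\dmm$), so the number of representatives on each side of $w$ is preserved and $w$ remains the median. An analogous argument (again via \cref{lem:optimality-preservation}) shows the representative is literally unchanged throughout your Stage~2.

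Two smaller points. First, your Stage~2 justification (``the median of $d$ does not cross $z_d$'') is the $\ddm$ argument; for $\dmm$ the representative is the district social-cost minimizer, not the alternative nearest the median agent, so you need a separate argument that moving an agent to the boundary of the interval between $z_d$ and $o$ preserves the optimality of $z_d$ within the district. Second, in the subcase $x_i\in[w,z_d)$ of Stage~2 the quantity $\SC(w\mid\cdot)$ actually \emph{increases} rather than ``decreases by no more''; the conclusion that the ratio does not decrease still holds, but for a different reason than the one you state.
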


\begin{proof}
Let $\calM \in \{\dmm,\ddm\}$.
It suffices to show that for every instance $\calJ \not\in \wc(\calM)$, there exists an instance $\calI \in \wc(\calM)$, such that $\dist(\calJ | \calM) \leq \dist(\calI | \calM)$. Due to symmetry, we only focus on the case where $\calM(\calJ) = w < o = \opt(\calJ)$. 
We gradually transform $\calJ$ into $\calI$ as follows:
\begin{itemize}
\item[(T1)] We move every agent with position strictly smaller than $w$ to $w$.
\item[(T2)] For every location $z$ which is representative for a set of districts $\calD_z \neq \varnothing$ in $\calJ$, we move every agent in $\calD_z$ whose position does not lie in the interval defined by $z$ and $o$ to the boundaries of this interval:
\begin{itemize}
\item For $z < o$, if the position of the agent is strictly smaller than $z$ we move her to $z$, and if it is strictly larger than $o$ we move her to $o$;
\item For $z > o$, if the position of the agent is strictly larger than $z$ we move her to $z$, and if it is strictly smaller than $o$ we move her to $o$.
\end{itemize}
\end{itemize}
Observe that, because (T1) is performed before (T2), an agent with position strictly smaller than $w < z < o$ who belongs to a district in $\calD_z$ can be moved twice: once from her initial position to $w$ and then again to $z$; see \cref{fig:upper-discrete} for an example. 
Naturally, these transformations define a sequence of intermediate instances with the same set of districts and alternative locations, but different position profiles. We will show that these instances preserve the following three properties, which are sufficient to show by induction that the distortion does not become smaller as we go from $\calJ$ to $\calI$:
\begin{itemize}
\item The facility location chosen by the mechanism is always $w$;
\item The optimal location is always $o$;
\item For any two consecutive intermediate instances with position profiles $\xx$ and $\yy$, $\frac{\SC(w|\xx)}{\SC(o|\xx)} \leq \frac{\SC(w|\yy)}{\SC(o|\yy)}$.
\end{itemize}

\begin{figure}
	\centering
	\begin{subfigure}[t]{0.45\linewidth}
		\centering
		\begin{tikzpicture}[thick,scale=0.8, every node/.style={scale=0.8}]
		\node [draw,fill=red,inner sep=1.5pt, label={below:$x_i$}] (i) at (-3, 0) {};
		\node [circle, draw=black,color=black,fill=white!50!gray, label={below:$w$}] (w) at (-1.25, 0) {};
		\node [circle, draw=black,color=black,fill=white!50!gray, label={below:$z$}] (z) at (0.75, 0) {};	
		\node [draw,fill=cadmiumgreen,inner sep=1.5pt, label={below:$x_j$}] (j) at (2, 0) {};
		\node [circle, draw=black,color=black,fill=white!50!gray, label={below:$o$}] (o) at (3, 0) {};
		\node [draw,fill=blue,inner sep=1.5pt, label={below:$x_t$}] (t) at (4.75, 0) {};
		\draw (i) to (w);
		\draw (w) to (z);
		\draw (z) to (j);
		\draw (j) to (o);
		\draw (o) to (t);
		\end{tikzpicture}		
		\caption{Initial instance: agent $i$ does not satisfy (P1) and (P2); agent $j$ satisfies both properties; agent $t$ does not satisfy (P2).}
	\end{subfigure}	
	\ \ \ \ \ \ \ \ \ \ \ 
	\begin{subfigure}[t]{0.45\linewidth}
		\centering
		\begin{tikzpicture}[thick,scale=0.8, every node/.style={scale=0.8}]
		\node [draw,fill=gray,inner sep=1.5pt, label={below:$x_i$}] (init) at (-3, 0) {};
		\node [circle, draw=black,color=black,fill=white!50!gray, label={below:$w$}] (w) at (-1.25, 0) {};
		\node [draw,fill=red,inner sep=1.5pt] (i) at (-1.25, 0) {};
		\node [circle, draw=black,color=black,fill=white!50!gray, label={below:$z$}] (z) at (0.75, 0) {};	
		\node [draw,fill=cadmiumgreen,inner sep=1.5pt, label={below:$x_j$}] (j) at (2, 0) {};
		\node [circle, draw=black,color=black,fill=white!50!gray, label={below:$o$}] (o) at (3, 0) {};
		\node [draw,fill=blue,inner sep=1.5pt, label={below:$x_t$}] (t) at (4.75, 0) {};
		\draw (init) to (w);
		\draw (w) to (z);
		\draw (z) to (j);
		\draw (j) to (o);
		\draw (o) to (t);
		\path[draw,-latex]  (init) edge[bend left] (w);
		\end{tikzpicture}		
		\caption{Application of (T1): agent $i$ is moved to $w$ so that (P1) is satisfied.}
	\end{subfigure}
	
	\vspace{15pt}
	
	\begin{subfigure}[t]{0.45\linewidth}
		\centering
		\begin{tikzpicture}[thick,scale=0.8, every node/.style={scale=0.8}]
		\node [draw,fill=gray,inner sep=1.5pt, label={below:$x_i$}] (init) at (-3, 0) {};
		\node [circle, draw=black,color=black,fill=white!50!gray, label={below:$w$}] (w) at (-1.25, 0) {};
		\node [circle, draw=black,color=black,fill=white!50!gray, label={below:$z$}] (z) at (0.75, 0) {};	
		\node [draw,fill=red,inner sep=1.5pt] (i) at (0.75, 0) {};
		\node [draw,fill=cadmiumgreen,inner sep=1.5pt, label={below:$x_j$}] (j) at (2, 0) {};
		\node [circle, draw=black,color=black,fill=white!50!gray, label={below:$o$}] (o) at (3, 0) {};
		\node [draw,fill=blue,inner sep=1.5pt, label={below:$x_t$}] (t) at (4.75, 0) {};
		\draw (init) to (w);
		\draw (w) to (z);
		\draw (z) to (j);
		\draw (j) to (o);
		\draw (o) to (t);
		\path[draw,-latex]  (w) edge[bend left] (z);
		\end{tikzpicture}		
		\caption{Application of (T2): agent $i$ is further moved to $z$ so that (P2) is satisfied.}
	\end{subfigure}
	\ \ \ \ \ \ \ \ \ \ \ 
	\begin{subfigure}[t]{0.45\linewidth}
		\centering
		\begin{tikzpicture}[thick,scale=0.8, every node/.style={scale=0.8}]
		\node [draw,fill=gray,inner sep=1.5pt, label={below:$x_i$}] (init) at (-3, 0) {};
		\node [circle, draw=black,color=black,fill=white!50!gray, label={below:$w$}] (w) at (-1.25, 0) {};
		\node [circle, draw=black,color=black,fill=white!50!gray, label={below:$z$}] (z) at (0.75, 0) {};	
		\node [draw,fill=red,inner sep=1.5pt] (i) at (0.75, 0) {};
		\node [draw,fill=cadmiumgreen,inner sep=1.5pt, label={below:$x_j$}] (j) at (2, 0) {};
		\node [circle, draw=black,color=black,fill=white!50!gray, label={below:$o$}] (o) at (3, 0) {};
		\node [draw,fill=blue,inner sep=1.5pt] (t) at (3, 0) {};
		\node [draw,fill=gray,inner sep=1.5pt, label={below:$x_t$}] (initt) at (4.75, 0) {};
		\draw (init) to (w);
		\draw (w) to (z);
		\draw (z) to (j);
		\draw (j) to (o);
		\draw (o) to (initt);
		\path[draw,-latex]  (initt) edge[bend right] (o);
		\end{tikzpicture}		
		\caption{Application of (T2): agent $t$ is moved to $o$ so that (P2) is satisfied.}
	\end{subfigure}

	\caption{An execution of the transformations used in the proof of \cref{lem:discrete-structure}. The three agents $i$, $j$ and $t$ belong to the same district with representative $z$; $w$ is the facility location chosen by mechanism $\calM \in \{\dmm,\ddm\}$ and $o$ is the optimal location. Transformation (T1) will first move agent $i$ from $x_i$ to $w$ so that (P1) is satisfied, and then transformation (T2) will move both $i$ and $t$ to the boundaries of the interval $[z,o]$ so that (P2) is satisfied. In the proof of the lemma, we show that moving the agents in this way does not affect the facility location chosen by the mechanism nor the optimal location, while at the same time the distortion can only become larger. 
	}
	\label{fig:upper-discrete}
\end{figure}
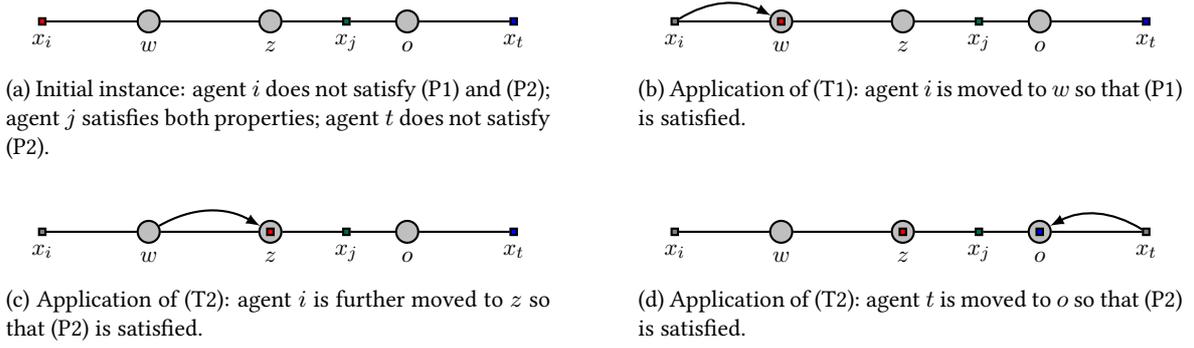

\noindent
Before we continue with the proof of the properties, we state here two useful technical lemmas, whose proofs can be found in the appendix.

\begin{lemma}\label{lem:sc-peaked}
Let $\xx$ be a vector consisting of the positions of a set of agents $S$ such that $\sum_{i \in S} \delta(x_i,z) \leq \sum_{i \in S} \delta( x_i,y)$ for $z < y$. Then, for every $p \in (z,y)$, it holds that $\sum_{i \in S} \delta(x_i,p) \leq \sum_{i \in S} \delta(x_i,y)$. 
\end{lemma}

\begin{lemma}\label{lem:optimality-preservation}
Let $\xx$ be a (district) position profile, and denote by $z$ the optimal location for $\xx$. Then, moving any single agent $i$ with $x_i < p \leq z$ or $x_i > p \geq z$ to $p$, induces a position profile $\pp = (p,\xx_{-i})$ for which the optimal location is again $z$.
\end{lemma}

\bigskip
\noindent
\underline{The facility location is always $w$: $\calM = \dmm$.} \\
For (T1), consider any intermediate instance with position profile $\xx$ such that there exists a district $d \in \calD$ with representative $z_d=z$, which contains some agent $i$ with $x_i < w$ who is moved to $w$. 
We distinguish between two cases:
\begin{itemize}
\item $z > w$. Since $z$ is the optimal location for $\xx_d$, \cref{lem:optimality-preservation} with $p=w$ implies that moving $i$ from $x_i < w < z$ to $w$ does not affect the optimality of $z$. Hence, $z$ remains the representative of $d$, and consequently $w$ remains the facility location chosen by $\dmm$.

\item $z \leq w$. In this case, moving $i$ to $w$, does not necessarily imply that $z$ remains the representative of $d$. However, we claim that the new representative location $y$ can only be such that $y \leq w$, which guarantees that $w$ remains the median representative, and thus the facility location chosen by $\dmm$.

Assume otherwise that the new representative location is $y > w$, in which case the total distance of the agents in $d$ from $y$ is strictly smaller than from $w$ (note that if it was equal, $w$ would become the representative because of the tie-breaking used by the mechanism):
\begin{align*}
\delta(w,y) + \sum_{j \in \calN_d\setminus\{i\}} \delta(x_j,y) < \delta(w,w) + \sum_{j \in \calN_d\setminus\{i\}} \delta(x_j,w).
\end{align*}
However, since $\delta(w,y) = \delta(x_i,y) - \delta(x_i,w)$ and $\delta(w,w)=0$, we equivalently have that
\begin{align*}
\sum_{j \in \calN_d} \delta(x_j,y) < \sum_{j \in \calN_d} \delta(x_j,w),
\end{align*}
which, since $z < w < y$ and $z$ minimizes the total distance under $\xx$, contradicts the fact that the total distance from $w$ is less than the total distance from $y$; this is implied by \cref{lem:sc-peaked} when restricted to the agents in $d$.
\end{itemize}

\noindent
For (T2), consider any intermediate instance with position profile $\xx$ such that there exists an alternative location $z\leq o$ (the case $z < o$ can be handled similarly) which is representative for a set of districts $\calD_z \neq \varnothing$, and some district $d \in \calD_z$ contains an agent $i$ with position $x_i \not\in [z,o]$. Since $z$ is optimal for $\xx_d$, and $i$ is either moved to $z$ if $x_i < z$ or to $o$ if $o < x_i$, \cref{lem:optimality-preservation} (constrained to $\xx_d$, with either $p=z$ or $p=o$ for the two cases, respectively) implies that the optimality of $z$ in $d$ is not affected by moving $i$. Hence, $z$ remains the representative of $d$, and $w$ is still chosen by $\dmm$.

\bigskip
\noindent
\underline{The facility location is always $w$: $\calM = \ddm$.} \\
For (T1), like in the case of $\dmm$, consider any intermediate instance with position profile $\xx$ such that there exists a district $d \in \calD$ with representative $z_d=z$, which contains some agent $i$ with position $x_i < w$ who is moved to $w$. 
We distinguish between two cases:
\begin{itemize}
\item $z > w$. Since $i$ is closer to $w$ than to $z$, she cannot be the median agent in district $d$. 
Therefore, moving agent $i$ to $w$ will not change the representative of $d$, and thus neither the location chosen by $\ddm$.

\item $z \leq w$. By moving agent $i$ to $w$, the representative of $d$ may change from $z$ to $w$, but the location chosen by $\ddm$ will remain the same. In particular, since $w$ is the median among all the district representatives, it will also remain the median after being the representative of more districts.  
\end{itemize}

\noindent
For (T2), consider again any intermediate instance such that there exists an alternative location $z\leq o$ which is representative for a set of districts $\calD_z \neq \varnothing$, and some district of $\calD_z$ contains an agent $i$ with position $x_i \not\in [z,o]$. We distinguish between two cases:
\begin{itemize}
\item $x_i < z$. If $i$ is the median agent in $d$, then the fact that $z$ is the representative of $d$ means that $z$ is the closest location to $i$, and thus moving $i$ to $z$ does change the representative of $d$. On the other hand, if $i$ is not the median, then moving $i$ to $z$ either does not alter who the median agent is (and thus the representative) or $i$ becomes the median and $z$ continues to be the representative. 

\item $x _i> o$. Since $i$ is closer to $o$ than to $z$, the only case in which $i$ might be the median agent of $d$ is when $z=o$, and moving $i$ to $o$ clearly does not affect the representative of $d$. In any other case, $i$ is not the median agent and cannot become the median (as there must exist an agent who is closer to $z$ than to $o$).
\end{itemize}
Therefore, since moving $i$ to the boundaries of $[z,o]$ does not affect the representative of $d$, the location $w$ chosen by $\ddm$ remains the same as well.

\bigskip
\noindent
\underline{The optimal location is always $o$.} \\
Observe that the transformations we perform define two symmetric types of moves: an agent $i$ can be moved either from a position $x_i < p \leq o$ to $p$, or from a position $x_i > p \geq o$ to $p$, where $p \in \calA$. Consequently, \cref{lem:optimality-preservation} immediately implies that any such move does not affect the optimality of $o$ in the induced instance.

\bigskip
\noindent
\underline{The distortion does not get smaller between consecutive instances.} \\
For (T1), consider two consecutive instances with position profiles $\xx$ and $\yy = (w,\xx_{-i})$ in which a single agent $i$ moves from a position $x_i < w$ to $w$. Since $\delta(x_i,o) = \delta(x_i,w) + \delta(w,o)$ and $\delta(w,w)=0$, we have
\begin{align*}
\frac{\SC(w | \xx)}{\SC(o | \xx)} 
&= \frac{\delta(y_i, w) + \sum_{j \neq i} \delta(x_j, w)}{\delta(x_i, o) + \sum_{j \neq i} \delta(x_j, o)} \\
&= \frac{\delta(x_i, w) + \delta(w,w) + \sum_{j \neq i} \delta(x_j, w)}{\delta(x_i, w) + \delta(w,o) + \sum_{j \neq i} \delta(x_i, o)} \\
&= \frac{\SC(w | \yy)  + \delta(x_i, w)}{\SC(o | \yy)  + \delta(x_i, w)}.
\end{align*}
Now, we can use the inequality
\begin{align}\label{ineq}
\frac{\alpha + \gamma}{\beta + \gamma} \leq \frac{\alpha}{\beta},
\end{align}
which holds for every $\alpha, \beta, \gamma$ such that $\alpha \geq \beta$ and $\gamma \geq 0$. In particular, since $\SC(o | \xx) \leq \SC(w | \xx)$, by setting $\alpha = \SC(w | \xx)$, $\beta = \SC(o | \xx)$, $\gamma= \delta(x_i, w) \geq 0$, we obtain that
\begin{align*}
\frac{\SC(w | \xx)}{\SC(o | \xx)} \leq \frac{\SC(w | \yy)}{\SC(o | \yy)}.
\end{align*}

For (T2), consider an intermediate instance with position profile $\xx$ and let $z \in \calA$ be a location such that $z \leq o$ (the case $z \geq o$ can be handled similarly) which is representative for a set of districts $\calD_z \neq \varnothing$ and there exists an agent $i$ in $\calD_z$ positioned at $x_i \neq [z,o]$. We will show that the distortion of the instance obtained by moving only agent $i$ to the boundaries of $[z,o]$ is at least as large as that of the previous instance. We distinguish between two cases:
\begin{itemize}
\item $x_i < z$. In this case, agent $i$ is moved to $z$ yielding an instance with position profile $\yy = (z,\xx_{-i})$.
By the fact that (T1) is performed before (T2) and the assumption of this case, we have that $x_i \in [w,z)$. Hence, 
$$\delta(x_i,w) = x_i - w = z-w - (z-x_i) = \delta(z,w) - \delta(x_i,z) \leq \delta(z,w).$$ 
Furthermore, since $x_i \leq o$, we also have that 
$$\delta(x_i,o) = o-x_i = o-z+z-x_i = \delta(z,o) + \delta(x_i,z) \geq \delta(z,o).$$ 
Consequently, 
\begin{align*}
\frac{\SC(w | \xx)}{\SC(o | \xx)} 
&= \frac{\delta(x_i,w) + \sum_{j \neq i} \delta(x_j,w)}{ \delta(x_i,o) + \sum_{j \neq i} \delta(x_j,o)} \\
&\leq \frac{\delta(z,w) + \sum_{j \neq i} \delta(x_j,w)}{\delta(z,o) + \sum_{j \neq i} \delta(x_j,o)} \\
&= \frac{\SC(w | \yy)}{\SC(o | \yy)}.
\end{align*}

\item $x_i > o$. Now, agent $i$ is moved to $o$ yielding an instance with position profile $\yy = (o,\xx_{-i})$.
Since $o > w$, we have that 
$$\delta(x_i,w) = x_i - w = x_i - o + o-w =\delta(x_i,o) + \delta(o,w).$$
Hence, we have
\begin{align*}
\frac{\SC(w | \xx)}{\SC(o | \xx)} 
&= \frac{\delta(x_i,w) + \sum_{j \neq i} \delta(x_j,w)}{\delta(x_i,o) + \sum_{j \neq i} \delta(x_j,o)} \\
&= \frac{\delta(x_i,o) + \delta(o,w) + \sum_{j \neq i} \delta(x_j,w)}{\delta(x_i,o) + \delta(o,o) + \sum_{j \neq i} \delta(x_j,o)} \\
&= \frac{\SC(w | \yy) + \delta(x_i,o) }{\SC(o | \yy) + \delta(x_i,o) } .
\end{align*}
Now, since $o$ is still the optimal location, we have that $\SC(o | \yy) \leq \SC(w | \yy)$, and by applying \eqref{ineq} with $\alpha = \SC(w|\yy)$, $\beta=\SC(o|\yy)$ and $\gamma = \delta(x_i,o)$, we finally obtain 
\begin{align*}
\frac{\SC(w | \xx)}{\SC(o | \yy)} 
\leq \frac{\SC(w | \xx)}{\SC(o | \yy)}.
\end{align*}
\end{itemize}
This completes the proof of the lemma.
\end{proof}

\subsection{Bounding the distortion}\label{sec:discrete-distortion}
Given the class of instances $\wc(\calM)$ for any $\calM \in \{\dmm,\ddm\}$ and the characterization of \cref{lem:discrete-structure}, we are ready to bound the distortion of both mechanisms. 
Before we dive into the analysis, we present some useful notation, which will be used throughout the proofs presented in this section.
Consider any instance $\calI = (\xx,\calD,\calA) \in \wc(\calM)$. Due to symmetry, it will suffice to consider only the case where $\calM(\calI) = w < o = \opt(\calI)$. For every alternative location $z \in \calA$, let $\calD_z$ be the set of districts for which $z$ is the representative; that is, $z_d = z$ for every $d \in \calD_z$. Also, let $Z = \{z \in \calA: \calD_z \neq \varnothing \}$ be the set of all alternative locations that are representative for at least one district. Observe that since the location $w$ is selected by the mechanism, it must be the case that $w \in Z$. 
For every $z \in Z$ and $y \in \calA$, let 
\begin{align*}
\SC_z(y | \xx) = \sum_{d \in \calD_z} \sum_{i \in \calN_d} \delta(x_i, y)
\end{align*}
be the total distance of all the agents in the districts of $\calD_z$ from $y$.  Also, recall that each district contains exactly $\lambda$ agents.

\begin{theorem}\label{thm:DMM-distortion}
The distortion of $\dmm$ is at most $3$.
\end{theorem}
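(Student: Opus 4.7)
The plan is to invoke \cref{lem:discrete-structure} to restrict attention to worst-case instances $\calI \in \wc(\dmm)$, and then run a median-based charging argument on such an instance. By symmetry, suppose $w := \dmm(\calI) < o := \opt(\calI)$; set $\lambda = n/k$.

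First I would establish three structural observations. By (P1) every agent has position $\geq w$; combined with the left-most tie-breaking rule of $\dmm$, this implies that no alternative strictly below $w$ can minimize the within-district social cost, so every representative $z \in Z$ satisfies $z \geq w$. Since $w$ is the median of representatives that all lie in $[w,\infty)$, at least $\lceil k/2\rceil$ of them must equal $w$, yielding $|\calD_w| \geq k/2$. Finally, by (P2), in any district $d \in \calD_z$ every agent lies in the interval bounded by $z$ and $o$.

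Next, I would bound the per-district excess $E_d := \SC_d(w|\xx) - \SC_d(o|\xx)$ in three cases. (i) For $d \in \calD_w$, the agents of $d$ lie in $[w,o]$, so $\SC_d(w|\xx) + \SC_d(o|\xx) = \lambda(o-w)$; since $w$ minimizes the district cost we get $E_d \leq 0$ and also the key lower bound $\SC_d(o|\xx) \geq \lambda(o-w)/2$. (ii) For $d \in \calD_z$ with $w < z \leq o$, the agents lie in $[z,o]$, and a direct computation gives $E_d = 2\sum_{i \in d} x_i - \lambda(w+o) \leq \lambda(o-w)$, maximized when all agents sit at $o$. (iii) For $d \in \calD_z$ with $z > o$, the agents lie in $[o,z]$, so $\delta(x_i,w) - \delta(x_i,o) = o-w$ pointwise and $E_d = \lambda(o-w)$.

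Summing the per-district excesses yields $\SC(w|\xx) - \SC(o|\xx) \leq (k - |\calD_w|)\,\lambda(o-w) \leq (k/2)\,\lambda(o-w)$, while the $\calD_w$ lower bound gives $\SC(o|\xx) \geq |\calD_w|\,\lambda(o-w)/2 \geq (k/4)\,\lambda(o-w)$. Dividing, the distortion is at most $1 + 2 = 3$. The main subtlety I expect is the structural step: carefully using (P1) together with the tie-breaking of $\dmm$ to rule out representatives strictly below $w$, which is what makes the median count $|\calD_w| \geq k/2$ go through; once this is in place the remaining calculations are essentially bookkeeping.
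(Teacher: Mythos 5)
Your proof is correct and follows essentially the same route as the paper's: reduce to $\wc(\dmm)$ via \cref{lem:discrete-structure}, use (P1) to force every representative to be at least $w$ so that $|\calD_w|\ge k/2$, use within-district optimality of the representative to get $\SC_d(o|\xx)\ge \lambda(o-w)/2$ for $d\in\calD_w$, and use (P2) to control the agents of the remaining districts. The only difference is bookkeeping: the paper derives separate bounds $\SC(w|\xx)\le\frac{3}{2}\sum_{z\in Z}\delta(z,o)\lambda|\calD_z|$ and $\SC(o|\xx)\ge\frac{1}{2}\sum_{z\in Z}\delta(z,o)\lambda|\calD_z|$ by grouping districts by representative, whereas you bound the additive per-district excess and write the ratio as $1+\mathrm{excess}/\opt$, which is an equivalent (and slightly leaner) way to finish.
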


\begin{proof}
Consider any instance $\calI = (\xx,\calD,\calA) \in \wc(\dmm)$ with $\dmm(\calI) = w < o = \opt(\calI)$.
We make the following observations:
\begin{itemize}
\item 
Consider any alternative location $z \in \calA$ such that $\calD_z \neq \varnothing$. By property (P2), for any district $d \in \calD_z$, we have that $\delta(z,o) = \delta(x_i,z) + \delta(x_i,o)$ for every agent $i \in \calN_d$. Hence, by summing over all agents in the districts of $\calD_z$, we have 
\begin{align*}
\SC_z(z | \xx)  + \SC_z(o | \xx) = \delta(z,o) \cdot \lambda |\calD_z|.
\end{align*}
Since $z$ is chosen as the representative of each district $d \in \calD_z$, it is the location that minimizes the total distance of the agents in $d$, that is, $\sum_{i \in \calN_d} \delta(x_i,z) \leq \sum_{i \in \calN_d} \delta(x_i,o)$. Thus, by summing over all districts in $\calD_z$, we have that
\begin{align*}
\SC_z(z | \xx) \leq \SC_z(o | \xx).
\end{align*} 
By combining the above two expressions, we obtain 
\begin{align}
\SC_z(z | \xx)  &=  \frac{1}{2} \delta(z,o) \cdot \lambda |\calD_z|. \label{eq:DMM-zz}
\end{align}
and 
\begin{align}
\SC_z(o | \xx) &\geq \frac{1}{2} \delta(z,o) \cdot \lambda |\calD_z|. \label{eq:DMM-zo}
\end{align}

\item   
Consider any alternative location $z \in Z \setminus\{w\}$. 
By property (P1), we have that $w$ is the left-most representative, and thus $z > w$.
By (P2), we have that every agent $i$ in a district of $\calD_z$ lies in the interval defined by $z$ and $o$, which means that 
\begin{itemize}
\item $\delta(x_i,w) \leq \delta(w,o)$ if $z \leq o$, and 
\item $\delta(x_i,w) \leq \delta(w,z) = \delta(w,o) + \delta(z,o)$ if $z > o$. 
\end{itemize}
Since $\delta(z,o) \geq 0$, by summing over all the agents in the districts of $\calD_z$, we obtain that
\begin{align}\label{eq:DMM-zw}
\SC_z(w | \xx) &\leq \bigg( \delta(w,o) + \delta(z,o) \bigg) \cdot \lambda |\calD_z|.
\end{align}

\item
Since $w$ is the left-most representative (implied by (P1)) and the median among all representatives (which is why it is selected by the mechanism), it must be the case that $w$ is the representative of more than half of the districts, and thus 
\begin{align}\label{eq:DMM-wz-size}
|\calD_w| \geq \sum_{z \in Z\setminus\{w\}} |\calD_z|.
\end{align} 
\end{itemize}

Given the above observations, we will now upper-bound the social cost of $w$ and lower-bound the social cost of $o$ (in order to obtain an upper bound on the distortion of $\calI$ subject to using $\dmm$). 
By the definition of $\SC(w | \xx)$, and by applying \eqref{eq:DMM-zz} for $y=w$ and \eqref{eq:DMM-zw} for $z \neq w$, 
we obtain
\begin{align*}
\SC(w|\xx) &= \SC_w(w | \xx) + \sum_{z \in Z \setminus\{w\}} \SC_z(w|\xx) \\
&\leq \frac{1}{2} \delta(w,o) \cdot \lambda |\calD_w| 
+ \sum_{z \in Z \setminus\{w\}} \bigg( \delta(w,o) + \delta(z,o) \bigg) \cdot \lambda |\calD_z|  \\
&= \frac{1}{2} \delta(w,o) \cdot \lambda |\calD_w| 
+ \delta(w,o) \cdot \lambda \sum_{z \in Z \setminus\{w\}} |\calD_z| 
+  \sum_{z \in Z \setminus\{w\}}  \delta(z,o) \cdot \lambda  |\calD_z|.
\end{align*}
By \eqref{eq:DMM-wz-size}, we further have that
\begin{align}
\SC(w | \xx) &\leq \frac{3}{2} \delta(w,o) \cdot \lambda |\calD_w| 
+  \sum_{z \in Z \setminus\{w\}}  \delta(z,o) \cdot \lambda  |\calD_z|  \nonumber \\
&\leq \frac{3}{2} \sum_{z \in Z}  \delta(z,o) \cdot \lambda  |\calD_z|.  \label{eq:DMM-mech}
\end{align}
On the other hand, by the definition of $\SC(o | \xx)$ and by applying \eqref{eq:DMM-zo}, we can lower-bound the optimal social cost as follows:
\begin{align}\label{eq:DMM-opt}
\SC(o | \xx) &= \sum_{z \in Z} \SC_z(o|\xx) \geq \frac{1}{2} \sum_{z \in Z}  \delta(z,o) \cdot \lambda  |\calD_z|.
\end{align}
Consequently, by combining \eqref{eq:DMM-mech} and \eqref{eq:DMM-opt}, the distortion of the instance $\calI$ subject to $\dmm$ is
\begin{align*}
\dist(\calI | \ddm) = \frac{\SC(w | \xx)}{\SC(o | \xx)} 
\leq 3.
\end{align*}
Since $\calI$ is an arbitrary (up to symmetry) instance of $\wc(\ddm)$, \cref{lem:discrete-structure} implies $\dist(\ddm) \leq 3$.
\end{proof}

Next, we bound the distortion of $\ddm$.

\begin{theorem}\label{thm:DDM-distortion}
The distortion of $\ddm$ is at most $7$.
\end{theorem}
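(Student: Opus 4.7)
By \cref{lem:discrete-structure}, it suffices to bound $\dist(\calI|\ddm)$ for $\calI=(\xx,\calD,\calA)\in\wc(\ddm)$, and by symmetry I may restrict attention to instances with $w:=\ddm(\calI)<o:=\opt(\calI)$. I reuse the notation $Z$, $\calD_z$, $\SC_z(\cdot|\xx)$, and $\lambda$ introduced before \cref{thm:DMM-distortion}, and, as in that proof, I use that $w$ is the left-most representative to conclude $z>w$ for every $z\in Z\setminus\{w\}$, together with the median-representative inequality $|\calD_w|\geq\sum_{z\in Z\setminus\{w\}}|\calD_z|$.

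The key new ingredient is that in $\ddm$ the representative $z$ of each $d\in\calD_z$ is the alternative \emph{closest to the median agent} $m_d$ of $d$, rather than the social-cost minimizer used by $\dmm$. Fix $z\in Z$ with $z\leq o$ (the case $z\geq o$ is symmetric). Since $o\in\calA$ is another alternative, $\delta(m_d,z)\leq\delta(m_d,o)$, and combined with (P2) this forces $m_d\in[z,(z+o)/2]$. Consequently at least $\lambda/2$ agents of $d$ lie in $[z,(z+o)/2]$, each at distance $\geq(o-z)/2$ from $o$, which after summing over $\calD_z$ yields
\begin{align*}
\SC_z(o|\xx) &\geq \tfrac{1}{4}\,\lambda\,\delta(z,o)\,|\calD_z|,\\
\SC_z(z|\xx) &\leq \tfrac{3}{4}\,\lambda\,\delta(z,o)\,|\calD_z|,
\end{align*}
the second bound combining the ``left-of-$m_d$'' agents (distance from $z$ at most $(o-z)/2$) with the at most $\lambda/2$ remaining agents (distance from $z$ at most $o-z$). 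In particular, $\SC_w(w|\xx)\leq \tfrac{3}{4}\,\lambda\,\delta(w,o)\,|\calD_w|$. For $z\in Z\setminus\{w\}$, exactly as in the $\dmm$ argument, (P1), (P2) and $z>w$ yield the trivial upper bound $\SC_z(w|\xx)\leq\lambda\,|\calD_z|\,(\delta(w,o)+\delta(z,o))$.

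Aggregating $\SC(w|\xx)=\SC_w(w|\xx)+\sum_{z\neq w}\SC_z(w|\xx)$ and substituting $|\calD_w|\geq\sum_{z\neq w}|\calD_z|$ to absorb the cross term, I expect to reach
\[
\SC(w|\xx)\leq \tfrac{7}{4}\,\lambda\,\delta(w,o)\,|\calD_w|+\lambda\sum_{z\neq w}|\calD_z|\,\delta(z,o)\leq \tfrac{7}{4}\,\lambda\sum_{z\in Z}|\calD_z|\,\delta(z,o),
\]
while the $z$-wise lower bound above gives $\SC(o|\xx)\geq \tfrac{1}{4}\,\lambda\sum_{z\in Z}|\calD_z|\,\delta(z,o)$. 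Dividing yields $\dist(\calI|\ddm)\leq 7$.

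The main obstacle is pinning down the district median in the half-interval $[z,(z+o)/2]$ from the DDM selection rule together with (P2); this asymmetric $1/4$--$3/4$ split replaces the balanced $1/2$--$1/2$ split that the social-cost-minimizing representative provides in the $\dmm$ analysis, and is precisely what inflates the final ratio from $3$ to $7$. The remaining aggregation step is essentially identical to the $\dmm$ proof.
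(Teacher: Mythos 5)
Your proposal is correct and follows essentially the same route as the paper's proof: the same reduction to $\wc(\ddm)$ via \cref{lem:discrete-structure}, the same observation that the DDM selection rule together with (P2) places the district median in $[z,(z+o)/2]$ and hence yields the $\tfrac14$--$\tfrac34$ bounds $\SC_z(o|\xx)\geq\tfrac14\lambda\delta(z,o)|\calD_z|$ and $\SC_z(z|\xx)\leq\tfrac34\lambda\delta(z,o)|\calD_z|$, and the same aggregation reusing \eqref{eq:DMM-zw} and \eqref{eq:DMM-wz-size}. The only cosmetic difference is that you derive the lower bound on $\SC_z(o|\xx)$ directly from the half of the agents sitting in $[z,(z+o)/2]$, whereas the paper obtains it by subtracting $\SC_z(z|\xx)$ from the identity $\SC_z(z|\xx)+\SC_z(o|\xx)=\delta(z,o)\lambda|\calD_z|$.
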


\begin{proof}
Consider any instance $\calI = (\xx,\calD,\calA) \in \wc(\ddm)$ with $\ddm(\calI) = w < o = \opt(\calI)$.
We make the following observations:
\begin{itemize}
\item 
Consider any district $d \in \calD_z$. By property (P2), we have that $\delta(z,o) = \delta(x_i,z) + \delta(x_i,o)$ for every agent $i \in \calN_d$.
Furthermore, by combining the fact that all agents in $\calN_d$ lie in the interval defined by $z$ and $o$, together with the fact that the median agent of $d$ is closer to $z$ than to $o$, we have that there exists a set $S_d \subseteq \calN_d$ of agents in $d$ with $|S_d| \geq \frac{1}{2}\lambda$ such that $\delta(x_i,z) \leq \delta(x_i,o)$ for every $i \in S_d$.  Consequently, $\delta(x_i,z) \leq \frac{1}{2}\delta(z,o)$ for every $i \in S_d$, and $\delta(x_i,z) \leq \delta(z,o)$ for every $i \in \calN_d \setminus S_d$. We obtain 
\begin{align}
\SC_z(z | \xx) &= \sum_{d \in \calD_z} \sum_{i \in \calN_d} \delta(x_i, z) \nonumber \\
&= \sum_{d \in \calD_z} \bigg( \sum_{i \in S_d} \delta(x_i, z) + \sum_{i \in \calN_d \setminus S_d} \delta(x_i, z)\bigg) \nonumber \\
&\leq \sum_{d \in \calD_z} \bigg( \frac{1}{2} \delta(z,o) \cdot |S_d| + \delta(z,o) \cdot |\calN_d \setminus S_d| \bigg) \nonumber \\
&= \delta(z,o) \cdot \sum_{d \in \calD_z} \bigg( \lambda - \frac{1}{2}|S_d| \bigg) \nonumber \\
&\leq \frac{3}{4} \delta(z,o) \cdot \lambda |\calD_z|. \label{eq:DDM-zz}
\end{align}

\item 
By summing the equality $\delta(x_i,o) = \delta(z,o) - \delta(x_i,z)$ over all agents $i$ in the districts of $\calD_z$ and using \eqref{eq:DDM-zz}, we obtain
\begin{align}
\SC_z(o | \xx)  &=  \delta_{z,o} \cdot \lambda |\calD_z| - \SC_z(z | \xx) \nonumber \\
&\geq \frac{1}{4} \delta(z,o) \cdot \lambda |\calD_z|. \label{eq:DDM-zo}
\end{align}

\item Using the same arguments as in the proof of \cref{thm:DMM-distortion} for $\dmm$, we can show that inequalities \eqref{eq:DMM-zw} and \eqref{eq:DMM-wz-size} hold for $\ddm$ as well. 
\end{itemize}
Using these observations, we can now upper-bound the social cost of $w$ and lower-bound the social cost of $o$. 
By the definition of $\SC(w | \xx)$, and by applying \eqref{eq:DDM-zz} for $y=w$ and \eqref{eq:DMM-zw} for $z \neq w$, 
we obtain
\begin{align*}
\SC(w|\xx) &= \SC_w(w | \xx) + \sum_{z \in Z \setminus\{w\}} \SC_z(w|\xx) \\
&\leq \frac{3}{4} \delta(w,o) \cdot \lambda |\calD_w| 
+ \sum_{z \in Z \setminus\{w\}} \bigg( \delta(w,o) + \delta(z,o) \bigg) \cdot \lambda |\calD_z|  \\
&= \frac{3}{4} \delta(w,o) \cdot \lambda |\calD_w| 
+ \delta(w,o) \cdot \lambda \sum_{z \in Z \setminus\{w\}} |\calD_z| 
+  \sum_{z \in Z \setminus\{w\}}  \delta(z,o) \cdot \lambda  |\calD_z|. 
\end{align*}
By \eqref{eq:DMM-wz-size}, we further have that
\begin{align}
\SC(w | \xx) &\leq \frac{7}{4} \delta(w,o) \cdot \lambda |\calD_w| 
+  \sum_{z \in Z \setminus\{w\}}  \delta(z,o) \cdot \lambda  |\calD_z|  \nonumber \\
&\leq \frac{7}{4} \sum_{z \in Z}  \delta(z,o) \cdot \lambda  |\calD_z|.  \label{eq:DDM-mech}
\end{align}
On the other hand, by the definition of $\SC(o | \xx)$ and by applying \eqref{eq:DDM-zo}, we can lower-bound the optimal social cost as follows:
\begin{align}\label{eq:DDM-opt}
\SC(o | \xx) &= \sum_{z \in Z} \SC_z(o|\xx) \geq \frac{1}{4} \sum_{z \in Z}  \delta(z,o) \cdot \lambda  |\calD_z|.
\end{align}
Consequently, by combining \eqref{eq:DDM-mech} and \eqref{eq:DDM-opt}, the distortion of the instance $\calI$ is
\begin{align*}
\dist(\calI | \ddm) = \frac{\SC(w | \xx)}{\SC(o | \xx)} 
\leq 7.
\end{align*}
Since $\calI$ is an arbitrary (up to symmetry) instance of $\wc(\ddm)$, \cref{lem:discrete-structure} implies $\dist(\ddm) \leq 7$.
\end{proof}

%%%%%%
%%%%%%

\section{Lower bounds on the distortion for the discrete setting}\label{sec:discrete-lower}

In this section, we present our lower bounds for the discrete setting. 
Specifically we show the following two statements:
\begin{itemize}
\item The distortion of any mechanism is at least $3-\varepsilon$, for any $\varepsilon >0$.
\item The distortion of any strategyproof mechanism is at least $7-\varepsilon$, for any $\varepsilon >0$.
\end{itemize}
These lower bounds match the upper bounds presented in \cref{sec:discrete}. 
Consequently, $\DMM$ is the best-possible among all mechanisms (in terms of distortion), while $\DDM$ is the best-possible strategyproof mechanism. 

Before we dive into the proofs of our lower bounds, we remark that it is without loss of generality to assume that, when given as input an instance with only two districts each of which has a different representative, any mechanism will choose the left-most representative as the facility location; if this is not the case, then we can obtain the very same bounds by symmetric arguments. Furthermore, in this section, as well as in \cref{sec:continuous-lower}, we will simply write $\SC(z)$ instead of $\SC(z|\xx)$ for the social cost of an alternative location $z$; the position profile $\xx$ will always be clear from context.

\subsection{An unconditional lower bound}\label{sec:discrete-lower-unconditional}

First, we will prove a general lemma about mechanisms that have approximation ratio less than $3-\varepsilon$, for any $\varepsilon > 0$. 

\begin{lemma}\label{lem:011winner-general}
Let $\calM$ be a mechanism with distortion strictly less than $3$. 
Let $\calI$ be an instance with set of alternative locations $\calA = \{0,1\}$, and $k=2\mu+1$ districts such that $0$ is the representative of $\mu$ districts and $1$ is the representative of $\mu+1$ districts, for every integer $\mu \geq 1$.
Then, 
\begin{itemize}
\item[(i)] $\calM(\calI) = 1$, and 
\item[(ii)] the representative of any district $d$ for which all agents are positioned at $\frac{2\mu+1}{4(\mu+1)}$ is $z_d = 0$.	
\end{itemize}
\end{lemma}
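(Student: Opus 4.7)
The plan is to prove (i) by induction on $\mu$, with (ii) at parameter $\mu$ obtained as a direct corollary of (i) at the same $\mu$. The key structural fact used everywhere is that, by the definition of a mechanism, the second step takes only the vector of district representatives as input; hence if two instances induce the same representatives (in the same district labelling), $\calM$ must output the same facility location on both. Combined with the locality of the first step and \cref{lem:unanimous}, this lets us freely substitute the profile of a district by any other profile that has the same representative under $\calM$.

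For (ii) given (i) at the same $\mu$, suppose for contradiction that a district whose agents are all positioned at $x=\frac{2\mu+1}{4(\mu+1)}$ has representative $1$ under $\calM$. Build an instance with $\mu+1$ copies of this district (each with representative $1$ by locality) together with $\mu$ districts in which all agents are at $0$ (each with representative $0$ by unanimity). The representative profile is then ($\mu$ zeros, $\mu+1$ ones), so by (i) the mechanism outputs $1$. A direct calculation gives $\SC(1) = \frac{3(2\mu+1)\lambda}{4}$ and $\SC(0) = \frac{(2\mu+1)\lambda}{4}$; hence $0$ is optimal and the distortion equals exactly $3$, contradicting $\dist(\calM) < 3$.

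For the inductive step of (i), suppose $\mu \geq 2$ and that (i) (and hence, by the previous paragraph, (ii)) holds at $\mu-1$. Assume for contradiction that $\calM$ outputs $0$ on some instance with representative profile ($\mu$ zeros, $\mu+1$ ones). Construct a new instance consisting of $\mu$ identical districts with all agents at $\frac{2\mu-1}{4\mu}$---each with representative $0$ by (ii) at $\mu-1$---together with $\mu+1$ districts with all agents at $1$. This preserves the representative profile, so $\calM$ still outputs $0$, and a symmetric calculation yields $\SC(0) = \frac{3(2\mu+1)\lambda}{4}$ and $\SC(1) = \frac{(2\mu+1)\lambda}{4}$. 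Since $1$ is optimal, the distortion is exactly $3$, a contradiction.

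The main obstacle is the base case $\mu=1$, where we cannot yet appeal to (ii) at $\mu-1$. I handle it by first proving a preliminary claim: the representative of any district with all agents at $1/4$ must be $0$. Indeed, if it were $1$, the instance consisting of three such districts would have representative profile $(1,1,1)$; because this coincides with the profile induced by the trivial all-agents-at-$1$ instance, on which finite distortion forces the output to be $1$, the mechanism would output $1$ here as well, giving $\SC(1) = \frac{9\lambda}{4}$ and $\SC(0) = \frac{3\lambda}{4}$ and hence distortion exactly $3$, a contradiction. Equipped with this, the base case of (i) proceeds exactly as the inductive step: one district with all agents at $1/4$ together with two districts with all agents at $1$ yields the representative profile ($1$ zero, $2$ ones); if $\calM$ outputs $0$, then $\SC(0) = \frac{9\lambda}{4}$ and $\SC(1) = \frac{3\lambda}{4}$, so the distortion equals $3$ and we again reach a contradiction.
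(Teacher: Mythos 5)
Your proof is correct and follows essentially the same route as the paper: a mutual induction in which part (i) at level $\mu$ is derived from part (ii) at level $\mu-1$ via districts of agents at $\frac{2\mu-1}{4\mu}$, and part (ii) at level $\mu$ follows from part (i) at the same level, with identical instances and social-cost computations. The only (harmless) deviations are that you make explicit the fact that the second aggregation step depends only on the vector of representatives---which the paper uses implicitly when applying (i) to arbitrary instances---and that you anchor the base case by a three-district all-at-$1/4$ instance rather than the paper's single-district instance, where the representative is automatically the facility location.
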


\begin{proof}
We will prove the statement by induction on $\mu$.

\bigskip
\noindent
\underline{Base case: $\mu=1$.} \\
For (i), assume towards a contradiction that there exists an instance $\calI$ such that $0$ is the representative of one district, and $1$ is the representative of two districts, but $\calM(\calI) = 0$. In particular, let $\calI$ be the following instance with three districts:
\begin{itemize}
\item In the first district, all $\lambda$ agents are positioned at $1/4$. The representative of this district must be $0$, as otherwise the distortion of the instance consisting only this district would be $3$.

\item In the other two districts, all $\lambda$ agents are positioned at $1$. Since $\calM$ is unanimous, the representative of this district is $1$.
\end{itemize}
See the left part of \cref{fig:lower-general-basecase} for a graphical representation of $\calI$. 
Since 
$$\SC(0) = \frac{\lambda}{4}+2\lambda = \frac{9\lambda}{4} \text{ \ \ and \ \ } \SC(1) = \frac{3\lambda}{4},$$ 
we have that $\dist(\calM) \geq \dist(\calI | \calM) = 3$, a contradiction. 

\medskip

For (ii), assume towards a contradiction that the representative of the district in which all agents are positioned at 
$\frac{2\mu+1}{4(\mu+1)} = 3/8$ is $1$.  Consider the following instance $\calJ$ with three districts:
\begin{itemize}
\item In the first district, all $\lambda$ agents are positioned at $0$. 
Since $\calM$ is unanimous, the representative of this district is $0$.

\item In the other two districts, all $\lambda$ agents are positioned at $3/8$. 
By our assumption, the representative of these two districts is $1$.
\end{itemize}
The right part of \cref{fig:lower-general-basecase} depicts instance $\calJ$. 
Since $\calJ$ satisfies the properties of the lemma, by (i), it must be $\calM(\calJ) = 1$. 
However, since 
$$\SC(0) = \frac{6\lambda}{8} = \frac{3\lambda}{4}  \text{ \ \ and \ \ } \SC(1) =\lambda+\frac{10 \lambda}{8} = \frac{9\lambda}{4},$$ 
we again have that $\dist(\calM) \geq \dist(\calJ | \calM) = 3$, a contradiction. 

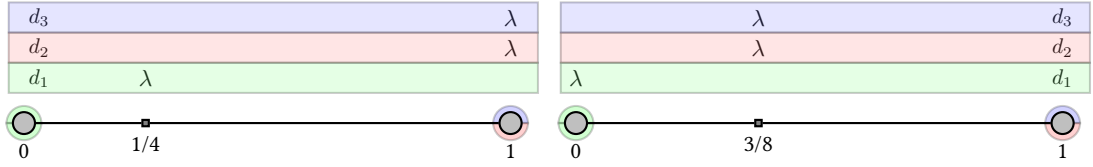
\begin{figure}
	\centering
	\begin{minipage}[b]{0.45\linewidth}
		\centering
		\begin{tikzpicture}[thick,scale=0.8, every node/.style={scale=0.8}]
		\node [semicircle, draw=black,fill=white!10!green,opacity=0.2] (x) at (-4, 0.125) {};
		\node [semicircle, draw=black,rotate=180,fill=white!10!green,opacity=0.2] (x) at (-4, -0.125) {};
		\node [circle, draw=black,color=black,fill=white!50!gray, label={below:0}] (0) at (-4, 0) {};
		\node [semicircle, draw=black,fill=white!10!blue,opacity=0.2] (x) at (4, 0.125) {};
		\node [semicircle, draw=black,rotate=180,fill=white!10!red,opacity=0.2] (x) at (4, -0.125) {};
		\node [circle, draw=black,color=black,fill=white!50!gray, label={below:1}] (1) at (4, 0) {};
		\node [draw,fill=gray,inner sep=1.5pt, label={below:1/4}] (1/4) at (-2, 0) {};
		\draw (0) to (1/4);
		\draw (1/4) to (1);
		\node (district1) at (-2,0.75) {$\lambda$};
		\node at (-3.75,0.75) {\small{$d_1$}};
		\draw[draw=black,fill=white!50!green,opacity=0.2] (-4.25,0.5) rectangle ++(8.7,0.5);
		\node (district2) at (4,1.25) {$\lambda$};
		\node at (-3.75,1.25) {\small{$d_2$}};
		\draw[draw=black,fill=white!50!red,opacity=0.2] (-4.25,1) rectangle ++(8.7,0.5);
		\node (district3) at (4,1.75) {$\lambda$};
		\node at (-3.75,1.75) {\small{$d_3$}};
		\draw[draw=black,fill=white!50!blue,opacity=0.2] (-4.25,1.5) rectangle ++(8.7,0.5);
		\end{tikzpicture}		
	\end{minipage}
	\begin{minipage}[b]{0.45\linewidth}
		\centering
		\begin{tikzpicture}[thick,scale=0.8, every node/.style={scale=0.8}]		
		\node [semicircle, draw=black,fill=white!10!green,opacity=0.2] (x) at (-4, 0.125) {};
		\node [semicircle, draw=black,rotate=180,fill=white!10!green,opacity=0.2] (x) at (-4, -0.125) {};
		\node [circle, draw=black,color=black,fill=white!50!gray, label={below:0}] (0) at (-4, 0) {};
		\node [semicircle, draw=black,fill=white!10!blue,opacity=0.2] (x) at (4, 0.125) {};
		\node [semicircle, draw=black,rotate=180,fill=white!10!red,opacity=0.2] (x) at (4, -0.125) {};
		\node [circle, draw=black,color=black,fill=white!50!gray, label={below:1}] (1) at (4, 0) {};
		\node [draw,fill=gray,inner sep=1.5pt, label={below:3/8}] (3/8) at (-1, 0) {};
		\draw (0) to (3/8);
		\draw (3/8) to (1);
		\node (district1) at (-4,0.75) {$\lambda$};
		\node at (4,0.75) {\small{$d_1$}};
		\draw[draw=black,fill=white!50!green,opacity=0.2] (-4.25,0.5) rectangle ++(8.7,0.5);
		\node (district2) at (-1,1.25) {$\lambda$};
		\node at (4,1.25) {\small{$d_2$}};
		\draw[draw=black,fill=white!50!red,opacity=0.2] (-4.25,1) rectangle ++(8.7,0.5);
		\node (district3) at (-1,1.75) {$\lambda$};
		\node at (4,1.75) {\small{$d_3$}};
		\draw[draw=black,fill=white!50!blue,opacity=0.2] (-4.25,1.5) rectangle ++(8.7,0.5);
		\end{tikzpicture}
	\end{minipage}
	\caption{The base case of \cref{lem:011winner-general}. 
	\emph{Left:} The proof of part (i). 
	In instance $\calI$, the representative of district $d_1$ is $0$ (shaded green), as otherwise the distortion of the instance consisting only of $d_1$ would be $3$. By unanimity, the representative of $d_2$ (shaded red) and $d_3$ (shaded blue) is $1$ (shaded both red and blue). 
	\emph{Right:} The proof of part (ii). 
	In instance $\calJ$, by unanimity, the representative of $d_1$ is $0$ (shaded green). We assume towards a contradiction that the representative of $d_2$ (shaded red) and $d_3$ (shaded blue) is $1$ (shaded both red and blue), and obtain a distortion of at least $3$.}
	\label{fig:lower-general-basecase}
\end{figure}

\bigskip
\noindent 
\underline{Induction step:} \\
We assume that (i) and (ii) are true for $\mu=\ell-1$, and will show that they are also true for $\mu=\ell$.

\medskip

\noindent 
For (i), consider an instance $\calI$ with $2\ell+1$ districts, such that $0$ is the representative of $\ell$ of them, 
and $1$ is the representative of the remaining $\ell+1$ districts. 
Specifically:
\begin{itemize}
\item In each of the first $\ell$ districts, all $\lambda$ agents are positioned at $\frac{2\ell-1}{4\ell} = \frac{2(\ell-1)+1}{4((\ell-1)+1)}$. 
By part (ii) of the induction hypothesis, the representative of all these districts is $0$.

\item In each of the remaining $\ell+1$ districts, all $\lambda$ agents are positioned at $1$. 
Since $\calM$ is unanimous, the representative of these districts is $1$. 
\end{itemize}
We now have that
$$\SC(0) = \ell \cdot \frac{(2\ell-1)\lambda}{4\ell} + (\ell+1)\cdot \lambda = \frac{3(2\ell+1)\lambda}{4}$$ 
and
$$\SC(1) = \ell \cdot \frac{(2\ell+1)\lambda}{4\ell} = \frac{(2\ell+1)\lambda}{4}.$$ 
If $\calM(\calI) = 0$ then $\dist(\calM) \geq \dist(\calI | \calM) = 3$.
Therefore, for the mechanism to achieve distortion strictly less than $3$, it must be the case that $\calM(\calI) = 1$.

For (ii), assume towards a contradiction that the representative of a district in which all agents are positioned at $\frac{2\ell+1}{4(\ell+1)}$ is $1$ instead. 
Then, consider the following instance $\calJ$ with $2\ell+1$ districts:
\begin{itemize}
\item In each of the first $\ell$ districts, all $\lambda$ agents are positioned at $0$. 
By unanimity, the representative of these districts is $0$.

\item In each of the remaining $\ell+1$ districts, all $\lambda$ agents are positioned at $\frac{2\ell+1}{4(\ell+1)}$. 
By our assumption, the representative of these districts is $1$.
\end{itemize}
Since (i) holds for $\mu=\ell$, it must be $\calM(\calJ) = \beta$. 
However, since  
$$\SC(0) = (\ell+1) \cdot \frac{(2\ell+1)\lambda}{4(\ell+1)} = \frac{\lambda(2\ell+1)}{4}$$ 
and 
$$\SC(1) = \ell \cdot \lambda + (\ell+1) \cdot \frac{(2\ell+3)\lambda}{4(\ell+1)} = \frac{3\lambda(2\ell+1)}{4},$$ 
we have that $\dist(\calM) \geq \dist(\calJ | \calM) = 3$, a contradiction.

\bigskip

\noindent 
This concludes the proof of the lemma.
\end{proof}

\noindent 
We are now ready to prove the main theorem.

\begin{theorem}\label{thm:discrete-lower-unconditional}
In the discrete setting, the distortion of any mechanism is at least $3-\varepsilon$, for any $\varepsilon > 0$.
\end{theorem}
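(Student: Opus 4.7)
My plan is to construct, for every $\varepsilon>0$, a very small instance -- with just two districts -- on which any mechanism with distortion strictly less than $3$ must suffer distortion strictly greater than $3-\varepsilon$. The key ingredient will be part (ii) of Lemma~\ref{lem:011winner-general}, combined with the locality of the first step: the representative a mechanism assigns to a district depends only on that district's position profile, so Lemma~\ref{lem:011winner-general}(ii) is really a property of the mechanism itself and can be invoked inside any ambient instance.

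Fix $\varepsilon>0$ and suppose for contradiction that $\calM$ has distortion strictly less than $3-\varepsilon$; in particular its distortion is less than $3$, so Lemma~\ref{lem:011winner-general} applies. Pick an integer $\mu\geq 1$ with
\[
\frac{6\mu+5}{2\mu+3}>3-\varepsilon,
\]
which is possible since this ratio tends to $3$ as $\mu\to\infty$. Now I would consider the instance $\calI=(\xx,\calD,\calA)$ with $\calA=\{0,1\}$ and $k=2$ districts, each of size $\lambda$: in district $d_1$ all agents sit at $\frac{2\mu+1}{4(\mu+1)}$, and in district $d_2$ all agents sit at $1$.

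Unanimity (Lemma~\ref{lem:unanimous}) forces the representative of $d_2$ to be $1$. By Lemma~\ref{lem:011winner-general}(ii) combined with first-step locality, the representative of $d_1$ must be $0$. Since the two districts have different representatives, the left-most tie-breaking convention adopted without loss of generality at the start of Section~\ref{sec:discrete-lower} yields $\calM(\calI)=0$. Computing social costs directly gives
\[
\SC(0)=\lambda\cdot\frac{6\mu+5}{4(\mu+1)}\qquad\text{and}\qquad\SC(1)=\lambda\cdot\frac{2\mu+3}{4(\mu+1)},
\]
so $\SC(1)<\SC(0)$ and the optimum is $1$. The distortion of $\calI$ under $\calM$ is therefore $(6\mu+5)/(2\mu+3)>3-\varepsilon$, contradicting the hypothesis.

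The only delicate point I expect is the invocation of Lemma~\ref{lem:011winner-general}(ii) outside the particular $(2\mu+1)$-district instance considered in its statement. I would handle this explicitly by noting that the first step of a mechanism is a local function of each district's position profile, so once Lemma~\ref{lem:011winner-general}(ii) pins the representative of a district whose agents all lie at $\frac{2\mu+1}{4(\mu+1)}$ to be $0$ in one ambient instance, the very same representative is assigned in every instance, including the two-district $\calI$ here. Everything else is elementary arithmetic.
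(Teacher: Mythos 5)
Your proof is correct, but it routes through \cref{lem:011winner-general} differently from the paper. The paper's proof never invokes part~(ii) of that lemma directly: it first uses a two-district gadget (all agents at $1/2$ versus all agents at $1$) together with the left-most tie-breaking convention to force the representative of the $1/2$-district to be $1$, and then builds a $(2\mu+1)$-district instance on which part~(i) forces the facility to be $1$, yielding distortion $\frac{3\mu+1}{\mu+1}$. You invert the roles: part~(ii) of the lemma (a property of the mechanism's local first step, hence portable to any ambient instance -- you are right to flag and justify this, and it is exactly how the lemma is used inside its own induction) pins the representative of the $\frac{2\mu+1}{4(\mu+1)}$-district to $0$, and then a single two-district instance plus the tie-breaking convention delivers the bad outcome directly, with distortion $\frac{6\mu+5}{2\mu+3}\to 3$; the arithmetic checks out. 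What your version buys is a smaller final instance (two districts instead of $2\mu+1$) and no need for the preliminary step establishing the representative of the $1/2$-district; what it costs is nothing substantive, though you should note that in the ``otherwise symmetric'' branch of the tie-breaking convention you must invoke the mirror image of \cref{lem:011winner-general}(ii) (representative $1$ for a district concentrated at $\frac{2\mu+3}{4(\mu+1)}$), which holds because the lemma's proof uses only unanimity and reflection-invariant distortion arguments -- the same implicit step the paper's own proof takes when it combines the convention with part~(i).
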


\begin{proof}
Let $\calM$ be any mechanism with distortion less than $3-\varepsilon$, for any $\varepsilon > 0$. 
Hence, by \cref{lem:unanimous}, $\calM$ is unanimous within districts. 
We consider instances with set of alternative locations $\calA = \{0,1\}$. 
We will first establish that $\calM$ must choose $1$ as the representative of any district in which all the agents are positioned at $1/2$. 
To see this, consider the following instance $\calI$ with two districts:
\begin{itemize}
\item In the first district, all $\lambda$ agents are positioned at $1/2$.

\item In the second district, all $\lambda$ agents are positioned at $1$. 
Since $\calM$ is unanimous, the representative of this district is $1$.
\end{itemize}	
See also the left part of \cref{fig:lower-general}. We claim that the representative of the first district must be $1$. 
Assume towards a contradiction that the representative of this district is $0$.
Then, we have one district with $0$ as its representative and one district with $1$ as its representative. 
Recall that in such a case it is without loss of generality to assume that $\calM$ will select the left-most district representative, that is, $\calM(\calI) = 0$.
However, since 
$$\SC(0) = \frac{\lambda}{2}+\lambda = \frac{3\lambda}{2} \text{\ \ and \ \ } \SC(1) = \frac{\lambda}{2},$$
this decision leads to $\dist(\calM) \geq \dist(\calI | \calM) = 3$, a contradiction.

Finally, consider the following instance $\calJ$ with $k=2\mu+1$ districts (see the right part of \cref{fig:lower-general}):
\begin{itemize}
\item In each of the first $\mu$ districts, all $\lambda$ agents are positioned at $0$. 
By unanimity, the representative of these districts is $0$.

\item In each of the remaining $\mu+1$ districts, all $\lambda$ agents are positioned at $1/2$. 
By the above discussion, the representative of these districts is $1$.
\end{itemize}
By \cref{lem:011winner-general}, we have that $\calM(\calJ) = 1$. 
Observe that 
$$\SC(0) = (\mu+1)\cdot \frac{\lambda}{2}$$ 
and 
$$\SC(1) = \mu \cdot  \lambda + (\mu+1) \cdot \frac{\lambda}{2} = \frac{(3\mu+1)\lambda}{2}.$$ 
Hence, $\dist(\calJ | \calM) = \frac{3\mu+1}{\mu+1}$. 
By choosing $\mu$ to be sufficiently large, we obtain $\dist(\calM) \geq 3-\varepsilon$, for any $\varepsilon > 0$. 

\begin{figure}
	\centering
	\begin{minipage}[b]{0.45\linewidth}
		\centering
		\begin{tikzpicture}[thick,scale=0.8, every node/.style={scale=0.8}]
		\node [circle, draw=black,color=black,fill=white!50!gray, label={below:0}] (0) at (-4, 0) {};
		\node [semicircle, draw=black,fill=white!10!red,opacity=0.2] (x) at (4, 0.125) {};
		\node [semicircle, draw=black,rotate=180,fill=white!10!green,opacity=0.2] (x) at (4, -0.125) {};
		\node [circle, draw=black,color=black,fill=white!50!gray, label={below:1}] (1) at (4, 0) {};
		\node [draw,fill=gray,inner sep=1.5pt, label={below:1/2}] (1/2) at (0, 0) {};
		\draw (0) to (1/2);
		\draw (1/2) to (1);
		\node (district1) at (0,0.75) {$\lambda$};
		\node at (-3.75,0.75) {\small{$d_1$}};
		\draw[draw=black,fill=white!50!green,opacity=0.2] (-4.25,0.5) rectangle ++(8.7,0.5);
		\node (district2) at (4,1.25) {$\lambda$};
		\node at (-3.75,1.25) {\small{$d_2$}};
		\draw[draw=black,fill=white!50!red,opacity=0.2] (-4.25,1) rectangle ++(8.7,0.5);
		\end{tikzpicture}		
	\end{minipage}
	\begin{minipage}[b]{0.45\linewidth}
		\centering
		\begin{tikzpicture}[thick,scale=0.8, every node/.style={scale=0.8}]
		\node [semicircle, draw=black,fill=white!10!blue,opacity=0.2] (x) at (-4, 0.125) {};
		\node [semicircle, draw=black,rotate=180,fill=white!10!blue,opacity=0.2] (x) at (-4, -0.125) {};
		\node [circle, draw=black,color=black,fill=white!50!gray, label={below:0}] (0) at (-4, 0) {};		
		\node [semicircle, draw=black,fill=white!10!green,opacity=0.2] (x) at (4, 0.125) {};
		\node [semicircle, draw=black,rotate=180,fill=white!10!green,opacity=0.2] (x) at (4, -0.125) {};
		\node [circle, draw=black,color=black,fill=white!50!gray, label={below:1}] (1) at (4, 0) {};
		\node [draw,fill=gray,inner sep=1.5pt, label={below:1/2}] (1/2) at (0, 0) {};
		\draw (0) to (1/2);
		\draw (1/2) to (1);
		\node (district1) at (-4,0.75) {$\lambda$};
		\node at (3.5,0.75) {\small{$d_1$ ... $d_\mu$}};
		\draw[draw=black,fill=white!50!blue,opacity=0.2] (-4.25,0.5) rectangle ++(8.7,0.5);
		\node (district2) at (0,1.25) {$\lambda$};
		\node at (3.3,1.25) {\small{$d_{\mu+1}$ ... $d_{2\mu+1}$}};
		\draw[draw=black,fill=white!50!green,opacity=0.2] (-4.25,1) rectangle ++(8.7,0.5);
		\end{tikzpicture}
	\end{minipage}
	\caption{The instances used in the proof of \cref{thm:discrete-lower-unconditional}. 
	Instance $\calI$ ({\em left}): By unanimity, the representative of district $d_2$ (shaded red) is $1$. 
	The representative of $d_1$ (shaded green) must also be $1$, as otherwise $\dist(\calI|\calM)=3$. 
	Instance $\calJ$ ({\em right}): Since districts $d_{\mu+1}$ to $d_{2\mu+1}$ are identical to $d_1$ in $\calI$, the representative of those districts must be $1$ (shaded green). 
	By unanimity, the representative of $d_1$ to $d_\mu$ is $0$ (shaded blue). 
	By \cref{lem:011winner-general}, the facility location on $\calJ$ is $1$, and thus $\dist(\calJ|\calM) = 3$.}
	\label{fig:lower-general}
\end{figure}
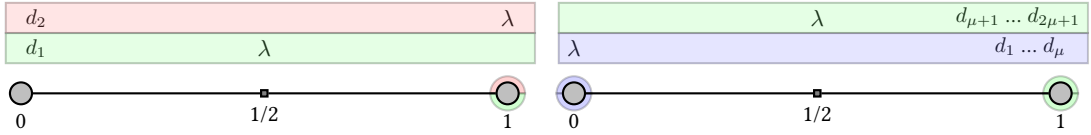

\bigskip
\noindent This concludes the proof.	
\end{proof}

\subsection{A lower bound for strategyproof mechanisms}\label{sec:discrete-lower-sp}

The following lemma will be very important for the lower bound. It establishes that a strategyproof mechanism is essentially \emph{ordinal}. 
That is, the outcome does not depend on the exact positions of the agents, but only on the preference orderings over the alternative locations that those positions induce. 

\begin{lemma}\label{lem:spordinal}
Let $\calM$ be a strategyproof mechanism with finite distortion. 
Let $\xx_d$ and $\yy_d$ be two different district position profiles for some district $d$, 
such that for every agent $i \in \calN_d$ and any two alternative locations $\alpha \neq \beta$,
\begin{itemize}
\item $\delta(x_i,\alpha) \neq \delta(x_i, \beta)$ and $\delta(y_i,\alpha) \neq \delta(y_i, \beta)$;
\item If $\delta(x_i,\alpha) < \delta(x_i, \beta)$ then $\delta(y_i,\alpha) < \delta(y_i, \beta)$.
\end{itemize}
Then, the representative of the district chosen by $\calM$ is the same under both $\xx_d$ and $\yy_d$. 
\end{lemma}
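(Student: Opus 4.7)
The plan is to invoke \cref{lem:spiout} to reduce to strategyproofness within districts, and then interpolate between $\xx_d$ and $\yy_d$ one agent at a time, showing that each single-agent change cannot alter the district's representative.

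More concretely, fix an arbitrary ordering of the agents in $\calN_d$ and define a sequence of district profiles $\xx_d = \zz^{(0)}, \zz^{(1)}, \ldots, \zz^{(\lambda)} = \yy_d$, where $\zz^{(j)}$ is obtained from $\zz^{(j-1)}$ by changing only the $j$-th agent's position from $x_{i_j}$ to $y_{i_j}$. It suffices to show that the representative chosen by $\calM$ is the same under $\zz^{(j-1)}$ and $\zz^{(j)}$ for every $j$. By the locality property of distributed mechanisms, this is a purely within-district question, so we may embed $\zz^{(j-1)}$ and $\zz^{(j)}$ into identical surrounding instances and appeal to \cref{lem:spiout}, which tells us that $\calM$ is strategyproof within districts.

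Fix such a $j$ and write $i = i_j$, and let $z$ and $z'$ denote the representative of $d$ under $\zz^{(j-1)}$ and $\zz^{(j)}$, respectively. Applying strategyproofness within districts twice (once with $x_i$ as the truth and $y_i$ as a deviation, once with $y_i$ as the truth and $x_i$ as a deviation) gives
\begin{equation*}
\delta(x_i, z) \leq \delta(x_i, z') \qquad \text{and} \qquad \delta(y_i, z') \leq \delta(y_i, z).
\end{equation*}
Suppose for contradiction that $z \neq z'$. By the first hypothesis of the lemma, the distances $\delta(x_i, z)$ and $\delta(x_i, z')$ are distinct (and similarly for $y_i$), so both inequalities are strict: $\delta(x_i, z) < \delta(x_i, z')$ and $\delta(y_i, z') < \delta(y_i, z)$. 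However, the second hypothesis asserts that agent $i$'s preference ordering over alternatives is the same at $x_i$ as at $y_i$, and therefore $\delta(x_i, z) < \delta(x_i, z')$ forces $\delta(y_i, z) < \delta(y_i, z')$, contradicting the inequality derived from strategyproofness. Hence $z = z'$, and telescoping over $j = 1, \ldots, \lambda$ yields the claim.

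The main subtlety I anticipate is justifying that strategyproofness within districts can be applied even though that property is stated for the representative of a district as a function of the full instance: one must use the locality of distributed mechanisms to argue that the representative of $d$ depends only on $\xx_d$, so that the comparison between $\zz^{(j-1)}$ and $\zz^{(j)}$ (and the corresponding representatives) is well-defined irrespective of what is happening in the other districts. Once this reduction is made, the argument is a standard strategyproofness-plus-ordinality chain and relies crucially on the strictness afforded by the distinct-distances hypothesis to convert weak into strict inequalities.
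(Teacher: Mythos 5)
Your proposal is correct and follows essentially the same route as the paper's proof: interpolate between $\xx_d$ and $\yy_d$ one agent at a time, invoke \cref{lem:spiout} for strategyproofness within districts, and use the distinct-distances and order-consistency hypotheses to rule out a change of representative at any single step. The only cosmetic difference is that you argue each step directly via the two strategyproofness inequalities, whereas the paper assumes the endpoints differ and locates the offending step, then exhibits the manipulation explicitly; the two framings are logically equivalent.
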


Next, we will show a lemma similar to \cref{lem:011winner-general}. 
Before we continue, we remark that in the proofs of this section we will construct sets of instances with one or more districts. 
In any instance with a single district, its representative must necessarily also be the facility location. For ease of notation, in case of a single-district instance $\calI$, we will use $\calI$ to denote both the instance and the (single) district of the instance.

\begin{lemma}\label{lem:011winner}
Let $\calM$ be a strategyproof mechanism with distortion less than $7-\varepsilon$, for any $\varepsilon > 0$.
Let $\calI$ be an instance with set of alternative locations $\calA = \{0,1\}$, and $k=2\mu+1$ districts such that $0$ is the representative of $\mu$ districts and $1$ is the representative of $\mu+1$ districts, for every integer $\mu \geq 1$.
Then, $\calM(\calI) = 1$. 	
\end{lemma}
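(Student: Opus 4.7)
The plan is to prove the lemma by induction on $\mu$, mirroring the structure of the proof of \cref{lem:011winner-general} but replacing its position-specific arguments with ordinal ones via \cref{lem:spordinal}. For any strategyproof mechanism of finite distortion, \cref{lem:spordinal} says that the representative chosen inside a district depends only on the agents' ordinal preferences. When $\calA=\{0,1\}$, an agent's ordinal preference is determined solely by which side of $1/2$ her position lies on, so the representative of a district is a function of the number $m$ of agents whose position is strictly less than $1/2$. Combining with unanimity, this function takes value $1$ at $m=0$ and value $0$ at $m=\lambda$, and strategyproofness within districts (\cref{lem:spiout}) forces monotonicity in $m$: hence there is a threshold $m^{*}$ such that the representative is $0$ precisely when $m\ge m^{*}$. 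Moreover, \cref{lem:spordinal} lets us freely reposition agents within each of the two sides of $1/2$ without altering the representatives.

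For the base case $\mu=1$, assume toward contradiction that $\calM(\calI)=0$ on some instance $\calI$ with one rep-$0$ district and two rep-$1$ districts. Since the second-step aggregation depends only on the representatives, $\calM$ must also output $0$ on any other instance with the same representative structure. We construct such an instance $\calJ$ as follows: in the rep-$0$ district we place $m^{*}$ agents at $1/2-\varepsilon$ and the remaining $\lambda-m^{*}$ agents at $1$, while in each of the two rep-$1$ districts we place all $\lambda$ agents at $1$. By \cref{lem:spordinal} and unanimity the representatives are preserved, and a direct computation gives $\SC(0)/\SC(1) \to 6\lambda/m^{*}-1$ as $\varepsilon\to 0$, which is at least $7$ whenever $m^{*}\le 3\lambda/4$, yielding the desired contradiction in this regime.

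For the inductive step, we mimic the general-case proof: the role of part~(ii) of \cref{lem:011winner-general}---that a district whose agents are all positioned on the ``prefers-$0$'' side of $1/2$ must have representative $0$---is now subsumed by \cref{lem:spordinal} together with the ordinal characterisation above, so in the cost bookkeeping we can plug in any extreme positions we like on each side of $1/2$. Carrying out the analogous construction with $\ell$ rep-$0$ and $\ell+1$ rep-$1$ districts under the inductive hypothesis for $\mu=\ell-1$, the distortion ratio for selecting $0$ approaches $7+\Theta(1/\ell)$, which again exceeds $7-\varepsilon$ and contradicts the distortion hypothesis on $\calM$.

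The principal obstacle is treating the regime where $m^{*}$ is large. When $m^{*}>3\lambda/4$, the multi-district construction above falls short of ratio $7$; however, in that case a single-district instance with $m^{*}-1$ agents at $0$ and $\lambda-m^{*}+1$ agents at $1$ has representative $1$ and incurs distortion $(m^{*}-1)/(\lambda-m^{*}+1)$, which exceeds $7-\varepsilon$ whenever $m^{*}$ is close enough to $\lambda$. The genuinely intermediate range of $m^{*}$---where neither simple construction alone suffices---is handled by a hybrid construction that mixes a few $0$-preferring agents near $1/2$ into some of the rep-$1$ districts (using \cref{lem:spordinal} to preserve the representatives) and rebalances the cost contributions, together with choosing $\lambda$ large so that the ratio can be made to approach $7$ for any prescribed $\varepsilon>0$; this tuning is the most delicate part of the argument.
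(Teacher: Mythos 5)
Your threshold framework is sound as far as it goes: by \cref{lem:spordinal} the representative of a district with alternatives $\{0,1\}$ is indeed a function of the number $m$ of agents preferring $0$, and strategyproofness within districts does force the monotonicity that yields a threshold $m^{*}$. Your base-case instance $\calJ$ and the computation $\SC(0)/\SC(1)\to 6\lambda/m^{*}-1$ are also correct. The problem is that the argument is only completed for $m^{*}\le 3\lambda/4$, and the two devices you offer for larger $m^{*}$ do not close the gap. Your single-district fallback places the $1$-preferring agents at $1$, which gives distortion $(m^{*}-1)/(\lambda-m^{*}+1)$ and only produces a contradiction when $m^{*}\gtrsim 7\lambda/8$; for the remaining range you merely assert that a ``hybrid construction'' exists without exhibiting it. The paper shows this entire case split is unnecessary: a \emph{single-district} instance with $3\lambda/4$ agents at $0$ and $\lambda/4$ agents at $1/2+\delta$ (crucially, \emph{not} at $1$) has $\SC(1)/\SC(0)\to 7$, so its representative must be $0$, which in your language is exactly the statement $m^{*}\le 3\lambda/4$. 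Placing the minority just past the midpoint rather than at the far endpoint is the missing trick.

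The more serious gap is in the inductive step. You claim that the analogue of part (ii) of \cref{lem:011winner-general} is ``subsumed by \cref{lem:spordinal} together with the ordinal characterisation,'' but \cref{lem:spordinal} only tells you that the representative depends on $m$; it says nothing about \emph{where} the threshold sits. If you redo your cost bookkeeping at level $\ell$ with $\ell$ rep-$0$ districts each containing $m^{*}$ agents at $1/2-\varepsilon$ and $\lambda-m^{*}$ at $1$, plus $\ell+1$ districts at $1$, the ratio is $\tfrac{2(2\ell+1)\lambda}{\ell m^{*}}-1$, which reaches $7$ only if $m^{*}\le \tfrac{\lambda}{2}+\tfrac{\lambda}{4\ell}$ --- a bound that tightens with $\ell$ and, for large $\ell$, is far below the $3\lambda/4$ obtainable from any single-district instance (with $m^{*}=3\lambda/4$ the step-$\ell$ ratio is only about $13/3$). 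These progressively tighter threshold bounds are precisely what the paper's instances $\calI_1^{(\ell)}$ establish: a $(2\ell-1)$-district configuration where, if the $\ell$ districts with $\tfrac{\lambda}{2}+\tfrac{\lambda}{4\ell}$ zero-preferrers had representative $1$, the induction hypothesis for $\mu=\ell-1$ would force the facility to $1$ and yield distortion $7$. This use of the induction hypothesis to constrain the \emph{first-step} (representative) behaviour is the heart of the proof, and it is absent from your proposal; without it the induction does not go through.
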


\begin{proof}	
We will prove the statement by induction on $\mu$. 

\bigskip
\noindent 
\underline{Base case: $\mu=1$.} \\ 
We will first define a particularly structured single-district instance $\calI_1$ and will argue that $\calM(\calI_1) = 0$. Next, we will define a second
single-district instance $\calI_2$ and, using the property $\calM(\calI_1) = 0$, we will show that $\calM(\calI_2) = 0$ as well. 
Finally, we will define a third instance $\calI_3$ consisting of (the district of) $\calI_2$ and two other districts, in which $1$ will be the unanimous representative. By using the structure of $\calI_2$ and the property $\calM(\calI_2) = 0$, we will argue that it must be $\calM(\calI_3) =1$.
Let $\delta$ be a parameter which can become infinitesimally small (tends to zero). 

\medskip
\noindent 
\textbf{Instance $\calI_1$:} 
There is a single district in which $3\lambda/4$ agents are positioned at $0$ and $\lambda/4$ agents are positioned at $1/2+\delta$. 
Hence, 
$$\SC(0) = \frac{\lambda}{4} \cdot \left(\frac{1}{2}+\delta\right) = \frac{\lambda}{8}+\frac{\lambda \delta}{4},$$ 
and
$$\SC(1) = \frac{\lambda}{4} \cdot \left(\frac{1}{2}-\delta\right) + \frac{3\lambda}{4} = \frac{7\lambda}{8}-\frac{\lambda \delta}{4}.$$ 
By taking $\delta$ to be sufficiently small, it is easy to see that if $\calM(\calI_1)=1$ then $\dist(\calM) \geq 7-\varepsilon$, for any $\varepsilon >0$. Hence, it must be $\calM(\calI_1)=0$. Because the instance consists of only one district, we also have that the representative of this district must be $0$.

\medskip
\noindent 
\textbf{Instance $\calI_2$:} 
There is a single district in which $3\lambda/4$ agents are positioned at $1/2-\delta$ and $\lambda/4$ agents are positioned at $1$. 
Note that in both $\calI_1$ and $\calI_2$, the positions of the agents induce the same ordering over the alternative locations $0$ and $1$,  and there are no agents that are indifferent between the two alternative locations in any of the two instances. 
Since $\calM(\calI_1)= 0$, by \cref{lem:spordinal}, it must also be the case that $\calM(\calI_2) = 0$. See also the left part of \cref{fig:lower-sp-basecase}.

\medskip
\noindent 
\textbf{Instance $\calI_3$:} 
There are three districts which will be such that the representative of the first one is $0$ and the representative of the remaining two is $1$. In particular:
\begin{itemize}
\item The first district is identical to the district of $\calI_2$: $3\lambda/4$ agents are positioned at $1/2-\delta$ and $\lambda/4$ agents are positioned at $1$. By the above discussion, the representative of this district is $0$.

\item In the other two districts, all agents are positioned at $1$. 
By unanimity, the representative of both districts is $1$. 
\end{itemize}
We have 
$$\SC(0) = \frac{3\lambda}{4}\left(\frac{1}{2}-\delta\right) + \frac{\lambda}{4} + 2\lambda = \frac{21\lambda}{8} - \frac{3\lambda\delta}{4}$$ 
and
$$\SC(1) = \frac{3\lambda}{4}\left(\frac{1}{2}+\delta\right) = \frac{3\lambda}{8} + \frac{3\lambda\delta}{4}.$$ 
By taking $\delta$ to be sufficiently small, it is easy to see that if $\calM(\calI_3) = 0$ then $\dist(\calM) \geq 7-\varepsilon$, for any $\varepsilon >0$. Therefore, it must be $\calM(\calI_3)=1$, as desired. See also the right part of \cref{fig:lower-sp-basecase}.

\begin{figure}
	\centering
	\begin{minipage}[b]{0.45\linewidth}
		\centering
		\begin{tikzpicture}[thick,scale=0.8, every node/.style={scale=0.8}]
		\node [semicircle, draw=black,fill=white!10!green,opacity=0.2] (x) at (-4, 0.125) {};
		\node [semicircle, draw=black,rotate=180,fill=white!10!DarkGreen,opacity=0.2] (x) at (-4, -0.125) {};
		\node [circle, draw=black,color=black,fill=white!50!gray, label={below:0}] (0) at (-4, 0) {};
		\node [circle, draw=black,color=black,fill=white!50!gray, label={below:1}] (1) at (4, 0) {};
		\node [draw,fill=gray,inner sep=1.5pt, label={below:1/2-$\delta$}] (1/2) at (-0.4, 0) {};
		\node [draw,fill=gray,inner sep=1.5pt, label={above:1/2+$\delta$}] (1/22) at (0.4, 0) {};
		\draw (0) to (1/2);
		\draw (1/2) to (1);
		\node (district1) at (-4,1) {$3\lambda/4$};
		\node (district1) at (0.4,1) {$\lambda/4$};
		\draw[draw=black,fill=white!50!DarkGreen,opacity=0.2] (-4.5,0.75) rectangle ++(9,0.5);
		\node (district2) at (4,1.75) {$\lambda/4$};
		\node (district2) at (-0.4,1.75) {$3\lambda/4$};
		\draw[draw=black,fill=white!50!green,opacity=0.2] (-4.5,1.5) rectangle ++(9,0.5);
		\end{tikzpicture}		
	\end{minipage}
	\begin{minipage}[b]{0.45\linewidth}
		\centering
		\begin{tikzpicture}[thick,scale=0.8, every node/.style={scale=0.8}]
		\node [semicircle, draw=black,fill=white!10!green,opacity=0.2] (x) at (-4, 0.125) {};
		\node [semicircle, draw=black,rotate=180,fill=white!10!green,opacity=0.2] (x) at (-4, -0.125) {};
		\node [circle, draw=black,color=black,fill=white!50!gray, label={below:0}] (0) at (-4, 0) {};
		\node [semicircle, draw=black,fill=white!10!blue,opacity=0.2] (x) at (4, 0.125) {};
		\node [semicircle, draw=black,rotate=180,fill=white!10!red,opacity=0.2] (x) at (4, -0.125) {};
		\node [circle, draw=black,color=black,fill=white!50!gray, label={below:1}] (1) at (4, 0) {};
		\node [draw,fill=gray,inner sep=1.5pt, label={below:1/2-$\delta$}] (1/2) at (-0.4, 0) {};
		\draw (0) to (1/2);
		\draw (1/2) to (1);
		\node (district1) at (-0.4,0.75) {$3\lambda/4$};
		\node (district1) at (4,0.75) {$\lambda/4$};
		\node at (-3.75,0.75) {\small{$d_1$}};
		\draw[draw=black,fill=white!50!green,opacity=0.2] (-4.25,0.5) rectangle ++(8.75,0.5);
		\node (district2) at (4,1.25) {$\lambda$};
		\node at (-3.75,1.25) {\small{$d_2$}};
		\draw[draw=black,fill=white!50!red,opacity=0.2] (-4.25,1) rectangle ++(8.75,0.5);
		\node (district3) at (4,1.75) {$\lambda$};
		\node at (-3.75,1.75) {\small{$d_3$}};
		\draw[draw=black,fill=white!50!blue,opacity=0.2] (-4.25,1.5) rectangle ++(8.75,0.5);
		\end{tikzpicture}
	\end{minipage}
	\caption{The base case of \cref{lem:011winner}. 
	\emph{Left:} The single-district instances $\calI_1$ (shaded in dark green) and $\calI_2$ (shaded in green). 
	The representative of the district of $\calI_1$ is $0$, as otherwise $\dist(\calI_1|\calM) \geq 7 - \varepsilon$. 
	By \cref{lem:spordinal}, $0$ is also the representative of the district of $\calI_2$. 
	\emph{Right:} The instance $\calI_3$. 
	District $d_1$ (shaded green) is identical to the district of $\calI_2$, and hence its representative is $0$. 
	By unanimity, the representative of $d_2$ (shaded red) and $d_3$ (shaded blue) is $1$ (shaded both red and blue). 
	The facility location on $\calI_3$ must be $1$, as otherwise $\dist(\calI_3|\calM)\geq 7-\varepsilon$.} 
	\label{fig:lower-sp-basecase}
\end{figure}

\bigskip
\noindent 
\underline{Induction step:} \\
Our induction hypothesis is that the lemma is true for $\mu=\ell-1$, that is, $\calM(\calI) =1$ for any instance $\calI$ with $k=2(\ell-1)+1$ districts such that $0$ is the representative of $\ell-1$ districts and $1$ is the representative of $\ell$ districts. 
Using this, we will show that the lemma is also true for $\mu=\ell$, that is, $\calM(\calJ) = 1$ for any instance $\calJ$ with $k=2\ell+1$ districts such that $0$ is the representative of $\ell$ districts and $1$ is the representative of $\ell+1$ districts.

As in the base case, we will define three instances with particular properties. 
The first instance $\calI_1^{(\ell)}$ will have $2\ell-1$ districts partitioned into two sets of identical districts, one of cardinality $\ell$ and one of cardinality $\ell-1$. Using the induction hypothesis, we will argue that the representative of each of the first identical $\ell$ districts in this instance must be $0$. This will then be used to show that $\calM(\calI_2^{(\ell)}) = 0$ for a particularly structured single-district instance $\calI_2^{(\ell)}$. Finally, in the third instance $\calI_3^{(\ell)}$, we will have $\ell$ districts identical to (the district of) $\calI_2^{(\ell)}$ and $\ell+1$ districts, in which $1$ will be the unanimous representative, and will show that it must be $\calM(\calI_3^{(\ell)}) = 1$. 

\medskip
	
\noindent 
\textbf{Instance $\calI_1^{(\ell)}$:} 
There are $2\ell-1$ districts.
\begin{itemize}
\item In each of the first $\ell$ districts, $\frac{\lambda}{2}+ \frac{\lambda}{4\ell}$ agents are positioned at $0$ 
and $\frac{\lambda}{2} - \frac{\lambda}{4\ell}$ agents are positioned at $1/2+\delta$.

\item In each of the remaining $\ell-1$ districts, all $\lambda$ agents are positioned at $0$. 
Since $\calM$ is unanimous, the representative of all these districts is $0$. 
\end{itemize}
We will argue that the representative of the first $\ell$ districts must also be $0$. 
Assume otherwise that the representative of these districts is $1$. 
Then, $\calI_1^{(\ell)}$ is an instance such that $0$ is the representative of $\ell-1$ districts and $1$ is the representative of $\ell$ districts. Hence, by the induction hypothesis, we have that $\calM(\calI_1^{(\ell)})=1$. 
The social cost of the two alternative locations is
$$\SC(0) = \ell \cdot \left(\frac{\lambda}{2} - \frac{\lambda}{4\ell}\right)\cdot \left(\frac{1}{2}+\delta\right) = \frac{2\ell\lambda-\lambda+\delta(4\ell\lambda+2\lambda)}{8}$$
and 
\begin{align*}
\SC(1) &= (\ell-1) \cdot \lambda +\ell \left( \frac{\ell\lambda}{2}+\frac{\lambda}{4\ell} \right) + \ell \cdot \left(\frac{\lambda}{2} - \frac{\lambda}{4\ell}\right)\cdot \left(\frac{1}{2}-\delta\right) \\
&= \frac{14\ell\lambda - 7\lambda - \delta(4\ell\lambda-2\lambda)}{8}.
\end{align*}
By choosing $\delta$ to be sufficiently small, we obtain that $\dist(\calM) \geq 7-\varepsilon$, for any $\varepsilon > 0$, a contradiction. Consequently, it must be the case that the representative of each the first $\ell$ districts is $0$.

\medskip
	
\noindent 
\textbf{Instance $\calI_2^{(\ell)}$:} 
There is a single district with $\frac{\lambda}{2}+ \frac{\lambda}{4\ell}$ agents positioned at $1/2-\delta$ and $\frac{\lambda}{2} - \frac{\lambda}{4\ell}$ agents positioned at $1$. Note that in the district of $\calI_2^{(\ell)}$ and in each of the first $\ell$ identical districts of $\calI_1^{(k)}$, the positions of the agents induce the same ordering over the alternative locations $0$ and $1$, and there are no agents that are indifferent between the two alternatives in any of the two different profiles. Since the representative of the first $\ell$ districts of $\calI_1^{(\ell)}$ is $0$ as we argued above, by \cref{lem:spordinal}, $\calM(\calI_2^{(\ell)}) = 0$.

\medskip	

\noindent 
\textbf{Instance $\calI_3^{(\ell)}$:} 
There are $2\ell+1$ districts.
\begin{itemize}
\item In each of the first $\ell$ districts, the positions of the agents are as in the district of $\calI_2^{(\ell)}$: 
$\frac{\lambda}{2}+ \frac{\lambda}{4\ell}$ agents are positioned at $1/2-\delta$, 
and $\frac{\lambda}{2} - \frac{\lambda}{4\ell}$ agents are positioned at $1$. 
By the discussion above, the representative of these districts is $0$.

\item In each of the remaining $\ell+1$ districts, all $\lambda$ agents are positioned at $1$. 
Since $\calM$ is unanimous, the representative of these districts is $1$. 
\end{itemize}
We have that
\begin{align*}
\SC(0) &= (\ell+1)\cdot \lambda
+ \ell \left(\frac{\lambda}{2}+\frac{\lambda}{4\ell}\right)\left(\frac{1}{2}-\delta\right) 
+ \ell \cdot \left(\frac{\lambda}{2}-\frac{\lambda}{4\ell}\right)\\
&= \frac{14\ell\lambda+7\lambda-\delta (4\ell\lambda+2\lambda)}{8}.
\end{align*}
and 
$$\SC(1) = \ell \cdot \left(\frac{\lambda}{2}+\frac{\lambda}{4\ell}\right) \left( \frac{1}{2}+\delta \right) = \frac{2\ell\lambda + \lambda + \delta(4\ell\lambda+2\lambda)}{8}.$$ 
By choosing $\delta$ to be sufficiently small, if $\calM(\calI_3^{(\ell)})= 0$ then $\dist(\calM) \geq 7-\varepsilon$, for any $\varepsilon > 0$. Therefore, it must be $\calM(\calI_3^{(\ell)}) = 1$, concluding the proof of the induction step.
\end{proof}

We are finally ready to prove our lower bound on the distortion of any strategyproof mechanism in the discrete setting.

\begin{theorem}\label{thm:sp6lower}
In the discrete setting, the distortion of any strategyproof mechanism is at least $7-\varepsilon$, for any $\varepsilon > 0$. 
\end{theorem}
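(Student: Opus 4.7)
The plan is to apply \cref{lem:011winner} to a tailored instance $\calJ$ with $\calA=\{0,1\}$ and $k=2\mu+1$ districts, in which $0$ represents $\mu$ districts and $1$ represents $\mu+1$ districts, crafted so that $0$ is the social-cost minimizer by a factor approaching $7$ as $\mu\to\infty$ and a parameter $\delta\to 0$. Since \cref{lem:011winner} then forces $\calM(\calJ)=1$, the instance will witness distortion tending to $7$, giving $\dist(\calM)\geq 7-\varepsilon$. This parallels how \cref{lem:011winner-general} was used to derive \cref{thm:discrete-lower-unconditional}.

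The two district types in $\calJ$ are chosen to keep $\SC(0)$ small and $\SC(1)$ large. The $\mu$ rep-$0$ districts are taken unanimous at $0$ (so $\calM$ assigns representative $0$ by unanimity), contributing $0$ to $\SC(0)$ and $\lambda$ to $\SC(1)$ each. The $\mu+1$ rep-$1$ districts are the reflection around $1/2$ of the district $\calI_2^{(\mu)}$ from the proof of \cref{lem:011winner}: namely, $\tfrac{\lambda}{2}-\tfrac{\lambda}{4\mu}$ agents at $0$ and $\tfrac{\lambda}{2}+\tfrac{\lambda}{4\mu}$ agents at $1/2+\delta$. A short calculation then yields $\SC(0)\approx(\mu+1)\lambda/4$ and $\SC(1)\approx \mu\lambda+3(\mu+1)\lambda/4=(7\mu+3)\lambda/4$, so $0$ is optimal and $\SC(1)/\SC(0)=(7\mu+3)/(\mu+1)\to 7$; choosing $\mu$ large and $\delta$ small enough delivers the bound $7-\varepsilon$.

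The main obstacle is proving that each reflected-$\calI_2^{(\mu)}$ district does have representative $1$ under $\calM$. The approach is to mirror the chain of arguments used to establish rep $0$ for $\calI_2^{(\mu)}$ in the proof of \cref{lem:011winner}. Concretely, I first establish a \emph{mirror} of \cref{lem:011winner}: if in any instance $\mu+1$ districts have representative $0$ and $\mu$ have representative $1$, then $\calM$ must output $0$. This mirror statement holds under the same distortion-$<7$ hypothesis because the inductive distortion argument of \cref{lem:011winner} is fully symmetric in the two alternative locations, and the leftmost tie-breaking convention is only invoked WLOG for two-district instances with two distinct representatives, a situation that does not occur inside the reflected constructions. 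Given the mirror lemma, I then mirror the $\calI_1^{(\mu)}$-instance argument from the proof of \cref{lem:011winner} to force rep $1$ inside a multi-district instance containing $\mu$ identical copies of the target reflected district (the contradiction computation gives $\SC(0)/\SC(1)=(14\mu-7)/(2\mu-1)=7$ if $\calM$ were to output $0$). Finally, a single application of \cref{lem:spordinal} lifts rep $1$ from that multi-district instance to the single reflected-$\calI_2^{(\mu)}$ district needed in $\calJ$, since the two district profiles induce the same preference ordering over $\{0,1\}$ without ties.

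With rep $1$ of the reflected districts in place, \cref{lem:011winner} immediately gives $\calM(\calJ)=1$, the social-cost calculation above shows $\dist(\calJ\mid \calM)\to 7$, and taking $\mu$ sufficiently large and $\delta$ sufficiently small produces the contradiction $\dist(\calM)\geq 7-\varepsilon$, completing the proof.
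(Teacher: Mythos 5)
Your proposal is correct, and its overall skeleton matches the paper's: force some ``mixed'' district (roughly half its agents preferring $0$, half preferring $1$) to have representative $1$, transfer that representative to a convenient district profile via \cref{lem:spordinal}, assemble a final instance with $\mu$ rep-$0$ and $\mu+1$ rep-$1$ districts, invoke \cref{lem:011winner} to get output $1$, and let the social-cost ratio tend to $7$. Where you diverge is in how you pin down that the mixed district has representative $1$. The paper does this with a single two-district instance $\calJ_1$ ($\lambda/2$ agents at $1/2-\delta$ and $\lambda/2$ at $1$, next to a unanimous-at-$1$ district): if the first district's representative were $0$, the leftmost-representative convention (assumed WLOG for two-district instances) forces output $0$ and distortion $7$. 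You instead prove a mirror image of \cref{lem:011winner} (majority rep-$0$ forces output $0$) and use it inside a $(2\mu-1)$-district instance mirroring $\calI_1^{(\mu)}$, whose contradiction computation indeed gives $(14\mu-7)/(2\mu-1)=7$; your mirror lemma is legitimate since the entire inductive argument of \cref{lem:011winner} is invariant under reflecting the line, and the two lemmas constrain disjoint families of instances so they coexist. What your route buys is independence from the two-district leftmost-WLOG (you correctly note no two-district instance with distinct representatives ever appears in your chain); what it costs is the extra mirror lemma and a slightly more elaborate district profile ($\frac{\lambda}{2}\pm\frac{\lambda}{4\mu}$ rather than the paper's even split), neither of which affects the limit of $7$. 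One small point worth making explicit if you write this up: your use of \cref{lem:011winner} on the final instance $\calJ$ (and of the mirror lemma inside the mirrored $\calI_1^{(\mu)}$ argument) implicitly relies on the second aggregation step being a function of the representatives only, which is how the paper's model defines mechanisms and how the paper itself uses \cref{lem:011winner} on $\calJ_3$ --- so this is consistent with the paper, just worth stating.
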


\begin{proof}
Assume towards a contradiction that there exists a strategyproof mechanism $\calM$ with distortion less than $7-\varepsilon$, for any $\varepsilon >0$.  To prove the theorem, we will use three instances $\calJ_1$ (with two districts), $\calJ_2$ (with one district) and $\calJ_3$ (with $2\mu+1$ districts); in all these instances the set of alternatives will be $\calA=\{0,1\}$. In $\calJ_1$, we will argue that the representative of the first district must be $1$, which in turn will imply that the representative of the single district of $\calJ_2$ will have to be $1$. Then, we will argue that it must $\calM(\calJ_3)=1$, which will contradict the assumption that the distortion of the mechanism is less than $7-\varepsilon$, for any $\varepsilon > 0$. Let $\delta > 0$ be an infinitesimally small constant.

\medskip

\noindent 
\textbf{Instance $\calJ_1$:} 
There are two districts.
\begin{itemize}
\item In the first district, $\lambda/2$ agents are positioned at $1/2-\delta$, and $\lambda/2$ agents are positioned at $1$. 

\item In the second district, all $\lambda$ agents are positioned at $1$. 
Since $\calM$ is unanimous, the representative of this district is $1$.
\end{itemize}
We will show that the representative of the first district must be $1$. 
Assume otherwise that it is $0$. 
Then, we have one district with $0$ as its representative and one district with $1$ as its representative. 
Recall that for such instances it is without loss of generality to assume that $\calM$ will select the left-most district representative, that is, $\calM(\calJ_1) = 0$.
We have 
$$\SC(0) = \frac{\lambda}{2}\left(\frac{1}{2}-\delta\right) +\frac{\lambda}{2} + \lambda = \frac{7\lambda}{4}-\frac{\lambda\delta}{2}$$ 
and
$$\SC(1) = \frac{\lambda}{2}\left(\frac{1}{2}+\delta\right) = \frac{\lambda}{4} + \frac{\lambda\delta}{2}.$$ 
By taking $\delta$ to be sufficiently small, we have that $\dist(\calM) \geq 7-\varepsilon$ for any $\varepsilon>0$, a contradiction. 
Therefore, the representative of the first district must be $1$. 

\medskip
	
\noindent 
\textbf{Instance $\calJ_2$:} 
There is a single district, in which $\lambda/2$ of the agents are positioned at $0$ and $\lambda/2$ agents are positioned at $1/2+\delta$. 
Note that in the district of $\calJ_2$ and in the first district of $\calJ_1$, the positions of the agents induce the same ordering over $0$ and $1$, and there are no agents that are indifferent between the two locations in any of the two profiles. 
Since the representative of the first district of $\calJ_1$ is $1$ as argued above, by \cref{lem:spordinal}, 
the representative of the district of $\calJ_2$ must be $1$ as well.

\medskip	

\noindent 
\textbf{Instance $\calJ_3$:} 
There are $2\mu+1$ districts.
\begin{itemize}
\item In each of the first $\mu$ districts, all $\lambda$ agents are positioned at $0$. 
By unanimity, the representative of these districts is $0$.

\item Each of the remaining $\mu+1$ districts is identical to the district of $\calJ_2$:
there are $\lambda/2$ agents positioned at $0$ and $\lambda/2$ agents positioned at $1/2+\delta$. 
By the discussion above, the representative of these districts is $1$.
\end{itemize}
From \cref{lem:011winner}, we have that $\calM(\calJ_3) = 1$. 
By the definition of the districts,
$$\SC(0) = (\mu+1)\cdot\frac{\lambda}{2}\left(\frac{1}{2}+\delta\right)= \frac{\mu\lambda}{4} + \frac{\lambda}{4} + \frac{(\mu+1) \lambda \delta}{2},$$ 
and
$$\SC(1) = (\mu+1)\cdot\frac{\lambda}{2}\left(\frac{1}{2}-\delta\right) + (\mu+1)\cdot \frac{\lambda}{2} + \mu \cdot \lambda 
= \frac{7\mu\lambda}{4}+\frac{3\lambda}{4}-\frac{(\mu+1) \lambda \delta}{4}.$$ 
By choosing $\delta$ to be small enough and $\mu$ to be large enough, we have that $\dist(\calM) \geq 7-\varepsilon$, for any $\varepsilon>0$.

\medskip
	
\noindent This completes the proof.
\end{proof}	

We remark that the lower bound of $7$ also extends to \emph{ordinal mechanisms}, whose decisions are based only on the orderings over the alternative locations induced by the positions of the agents, rather than the exact positions themselves. This follows by observing that the property established in \cref{lem:spordinal} is satisfied trivially by ordinal mechanisms (even without the strategyproofness requirement). Moreover, since $\DDM$ is an ordinal mechanism (observe that to pinpoint the median agent within a district and then her closest alternative, we do not really need to know the exact positions of the agents), this bound is also matched from above, leading to the following corollary.

\begin{corollary}
The distortion of any ordinal mechanism is at least $7-\varepsilon$, for any $\varepsilon > 0$.
Moreover, there exists an ordinal mechanism with distortion at most $7$.
\end{corollary}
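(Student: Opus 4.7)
The plan is to obtain both parts of the corollary essentially for free from the machinery already developed, with the main conceptual step being the observation that ordinality subsumes the only use strategyproofness makes of itself in the lower-bound proofs.

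For the lower bound, I would retrace the proofs of \cref{lem:011winner} and \cref{thm:sp6lower} and identify where strategyproofness is actually invoked. A careful reading shows that the sole role of strategyproofness in those arguments is to enable appeals to \cref{lem:spordinal}, which guarantees that two district position profiles inducing the same preference rankings over $\calA$ must be assigned the same representative. For an ordinal mechanism this conclusion is immediate by definition: the mechanism's output depends only on the induced preference orderings, so two profiles producing identical orderings are indistinguishable to it. Consequently, I would restate \cref{lem:spordinal} with ``ordinal'' replacing ``strategyproof'' — with a one-line tautological proof — and then replay the proofs of \cref{lem:011winner} and \cref{thm:sp6lower} verbatim. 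The only additional property of the mechanism used in those constructions is unanimity within districts, which \cref{lem:unanimous} supplies under the implicit finite-distortion assumption.

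For the upper bound, I would exhibit $\DDM$ as a witnessing ordinal mechanism and then invoke \cref{thm:DDM-distortion}. The key verification is that both aggregation steps of $\DDM$ can be executed from the preference orderings alone. For the first step, the alternative closest to an agent is precisely the top element of that agent's induced ranking over $\calA$, and the median agent of a district is determined by the combinatorial data of how agents rank adjacent alternatives — single-peakedness on the line turns this purely cardinal notion into an ordinal one. For the second step, the median among district representatives lives in $\calA$, so the only information required is the relative ordering of alternatives, which is fixed by $\calA \subseteq \RR$. Combining these two observations with \cref{thm:DDM-distortion} yields an ordinal mechanism with distortion at most $7$.

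I expect the only obstacle to be a careful bookkeeping check that $\DDM$'s median-selection step within a district really does use no more than ordinal information about the agents' positions, as opposed to cardinal distance data. Once this is spelled out, both halves of the corollary follow immediately from previously established results, with no new quantitative work required.
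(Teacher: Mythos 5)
Your proposal matches the paper's argument exactly: the paper likewise observes that strategyproofness enters the lower-bound proofs only through \cref{lem:spordinal}, whose conclusion holds trivially for ordinal mechanisms, so the proofs of \cref{lem:011winner} and \cref{thm:sp6lower} carry over verbatim, and it likewise obtains the upper bound by noting that $\DDM$ is ordinal (locating the median agent's cell and her closest alternative requires only the induced rankings) and invoking \cref{thm:DDM-distortion}. No gaps; this is essentially the same proof.
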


%%%%%%
%%%%%%

\section{Mechanisms for the continuous setting}\label{sec:continuous}

So far, we focused on the discrete setting, and showed that $\DMM$ achieves the best-possible distortion of $3$ among all mechanisms, while $\DDM$ achieves the best-possible distortion of $7$ among all strategyproof mechanisms. 
We now turn our attention to the continuous setting. We will first present a strategyproof mechanism with distortion $3$ in \cref{sec:continuous-upper}, followed by a lower bound of $2$ for all mechanisms and a lower bound of $3$ for all strategyproof mechanisms in Sections~\ref{sec:continuous-lower-general} and~\ref{sec:continuous-lower-sp}, respectively. 

\subsection{A strategyproof mechanism with distortion $3$}\label{sec:continuous-upper}

Let us recall how $\DMM$ and $\DDM$ choose the representative of a district in the discrete setting. $\dmm$ chooses the alternative location that minimizes the social cost of the agents whereas $\ddm$ chooses the location that is closer to the median agent. 
In the continuous setting, where the set of alternative locations is $\RR$, the location of the median agent is known to minimize the social cost of the agents in a district, and thus the continuous version of $\ddm$, which chooses as representative the position of the median agent, is an implementation of $\dmm$. As we show with the following theorem, $\CDM$ ($\cdm$) inherits the best properties of $\dmm$ and $\ddm$, leading  to the following statement

\begin{theorem}\label{thm:continuous-mechanism}
$\CDM$ is strategyproof and has distortion at most $3$.
\end{theorem}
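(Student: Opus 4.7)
The plan is to handle the two claims separately. For strategyproofness I will invoke the classical fact that the median is a strategyproof, monotone, onto aggregator on the real line, applied at both levels of $\cdm$. For the distortion bound I will give a direct counting proof that exploits the two-level median structure, rather than going through the worst-case characterization of \cref{lem:discrete-structure}; this turns out to be simpler in the continuous setting because the representative of each district is the exact median of its agents, so the step analogous to property (P2) is not needed.

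For the strategyproofness claim, consider any agent $i$ in district $d$ with true position $x_i$, and any deviation $y_i$. The report $y_i$ affects only the representative $m_d$ of her district. Since $m_d$ is a fixed-convention median of the district's reported positions, classical arguments imply that $m_d$ is monotone non-decreasing in $y_i$ and that $i$ cannot strictly move $m_d$ closer to $x_i$: if $x_i \leq m_d$ any deviation either keeps $m_d$ or pushes it weakly to the right, and symmetrically for $x_i \geq m_d$. In the second step, the facility $w$ is the median of $m_1,\ldots,m_k$, which is non-decreasing as a function of $m_d$ with all other $m_{d'}$ held fixed. Composing these two monotonicity facts, any deviation that moves $m_d$ away from $x_i$ also moves $w$ weakly in the same direction, hence weakly away from $x_i$; this establishes $\delta(x_i, w') \geq \delta(x_i, w)$ for any deviation $y_i$.

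For the distortion bound, let $w = \cdm(\calI)$ and let $o$ be a minimizer of $\SC(\cdot|\xx)$; by symmetry assume $w \leq o$. The crucial combinatorial fact is a two-level counting bound: since $w$ is the median of the district representatives, at least $\lceil k/2 \rceil$ districts satisfy $m_d \leq w$, and within any such district $d$ at least $\lceil \lambda/2 \rceil$ of its agents satisfy $x_i \leq m_d \leq w$. Thus the set $A = \{i : x_i \leq w\}$ has $|A| \geq n/4$. Using the geometry of the line, $\delta(x_i,w) = \delta(x_i,o) - (o-w)$ for $i \in A$, and $\delta(x_i,w) \leq \delta(x_i,o) + (o-w)$ otherwise, so summing gives
\begin{align*}
\SC(w|\xx) \leq \SC(o|\xx) + (n - 2|A|)(o-w) \leq \SC(o|\xx) + \tfrac{n}{2}(o-w).
\end{align*}
On the other hand, each agent in $A$ contributes at least $o-w$ to the optimal cost, so $\SC(o|\xx) \geq |A|(o-w) \geq (n/4)(o-w)$, equivalently $(o-w) \leq 4\,\SC(o|\xx)/n$. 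Substituting into the previous display yields $\SC(w|\xx) \leq 3\,\SC(o|\xx)$, which is the desired distortion bound of $3$.

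The main obstacle is the monotonicity step in the strategyproofness argument: one must verify that a move of $m_d$ that crosses the original $w$ does not cause the overall median to ``jump'' in a way that brings the facility closer to $x_i$, and this is precisely resolved by the coordinate-wise monotonicity of the median. A secondary technicality is the tie-breaking convention for even $\lambda$ or even $k$, but any fixed rule (such as choosing the lower median, as in \cref{alg:DMM}) preserves both the monotonicity needed for strategyproofness and the $\lceil k/2\rceil$, $\lceil\lambda/2\rceil$ counting bounds used for distortion.
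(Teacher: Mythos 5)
Your proof is correct, and for the distortion half it takes a genuinely different route from the paper. The paper proves the bound by first characterizing worst-case instances (a continuous analogue of \cref{lem:discrete-structure}, establishing properties (P1) and (P2)) and then decomposing the social cost by representative locations, mirroring the analysis of $\dmm$ in \cref{thm:DMM-distortion}; you instead give a direct counting argument: the two-level median structure forces at least $\lceil k/2\rceil\lceil\lambda/2\rceil \geq n/4$ agents onto the near side of $w$, and combining $\SC(w|\xx) \leq \SC(o|\xx) + (n-2|A|)(o-w)$ with $\SC(o|\xx)\geq |A|(o-w)$ immediately yields the factor $3$. This is essentially the Procaccia--Tennenholtz median-of-medians analysis that the paper itself cites as an alternative derivation; it is shorter and self-contained, whereas the paper's heavier machinery has the advantage of paralleling the discrete case and directly powering the asymmetric-districts extension (\cref{thm:asymmetric}). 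For strategyproofness the paper argues by cases on whether the deviating agent is her district's median, while you compose the within-district and across-district monotonicity of the median; this also works, but note that the step ``$w$ moves weakly in the same direction, hence weakly away from $x_i$'' needs one extra observation beyond coordinate-wise monotonicity: in the configuration where $m_d$ and $x_i$ lie strictly on the same side of $w$ (so that pushing $m_d$ further out would, if it moved $w$, move it \emph{toward} $x_i$), the representative $m_d$ is strictly beyond the relevant order statistic and moves further away from it, so $w$ in fact does not move at all. You correctly flag this as the main obstacle, and the argument does go through, but the resolution is this locality property of order statistics rather than monotonicity alone. Your remark on tie-breaking is also right provided the fixed convention is a lower (or upper) median rather than an arbitrary order statistic, since the $\lceil k/2\rceil$ and $\lceil\lambda/2\rceil$ counts are what the factor $n/4$ rests on.
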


Our proof of \cref{thm:continuous-mechanism} relies heavily on the techniques used in the proof of \cref{thm:sp-mechanisms} to show that $\ddm$ is strategyproof, and in the proof of \cref{thm:DMM-distortion} to show that the distortion of $\dmm$ is at most $3$. In particular, the arguments used in those proofs are adapted to the continuous version to accommodate the fact that the set of alternatives is $\RR$. To avoid being repetitive, we give a full proof in the appendix. As already briefly discussed in \cref{sec:related}, the proof of the distortion bound in \cref{thm:continuous-mechanism} also follows from the work of Procaccia and Tennenholtz~\cite{procaccia2009approximate}, who considered a setting with \emph{super-agents} that control multiple locations, and their cost is the total distance between those locations and the facility. They showed that the {\em median-of-medians} mechanism is $3$-approximate. By interpreting the super-agents as district representatives in our case, so that the social cost objectives in the two settings coincide, we obtain the theorem.

\subsection{Lower bounds on the distortion}\label{sec:continuous-lower}

We now turn our attention to showing lower bounds on the distortion of mechanisms in the continuous setting.

\subsubsection{An unconditional lower bound}\label{sec:continuous-lower-general}

We start with a lemma quite similar to \cref{lem:011winner-general}. A key difference is that we no longer have two fixed alternative locations as we did in the discrete setting, so we will establish the lemma for two arbitrary locations $y_1$ and $y_2$ with $y_1 < y_2$. 

\begin{lemma}\label{lem:contabb-general}
Let $\calM$ be any mechanism with distortion strictly less than $2$. 
Let $\calI$ be any instance with $k=2\mu+1$ districts such that 
(a) the representative of $\mu$ districts is some location $y_1$, 
(b) the representative of the remaining $\mu+1$ districts is some location $y_2$, and 
(c) $y_1 < y_2$.
Then,
\begin{itemize}
\item[(i)] $\calM(\calI)=y_2$, and 
\item[(ii)] for the representative $z$ of any district in which 
	\begin{itemize}
	\item[\normalfont \textbullet] $\frac{\lambda}{2} + \frac{\lambda}{4(\mu+1)}$ agents are positioned at some $z_1$,
	\item[\normalfont \textbullet] $\frac{\lambda}{2} - \frac{\lambda}{4(\mu+1)}$ agents are positioned at some $z_2 > z_1$,
	\end{itemize}	
it holds that $z < \frac{z_1+z_2}{2}$.
\end{itemize}
\end{lemma}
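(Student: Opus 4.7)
The proof proceeds by induction on $\mu$, adapting the strategy of \cref{lem:011winner-general} from the discrete setting. I will establish (i) and (ii) at level $\mu$ simultaneously, deriving (i) at level $\mu$ from (ii) at level $\mu-1$, and (ii) at level $\mu$ from (i) at level $\mu$.

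\emph{Base case $\mu=1$.} For (i), the unanimity instance with $1$ district of $\lambda$ agents at $y_1$ and $2$ districts of $\lambda$ agents at $y_2$ has representatives $y_1,y_2$, social costs $\SC(y_1)=2\lambda(y_2-y_1)$ and $\SC(y_2)=\lambda(y_2-y_1)$, and hence $\calM$ choosing $y_1$ would give distortion exactly $2$, violating $\dist(\calM)<2$. For (ii), suppose toward contradiction that a district with $\tfrac{5\lambda}{8}$ agents at $z_1$ and $\tfrac{3\lambda}{8}$ at $z_2$ has representative $z\geq\tfrac{z_1+z_2}{2}$; I then build a $3$-district instance with one unanimity district at $z_1$ and two copies of the hypothetical district. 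Part~(i) at $\mu=1$ applied with $y_1'=z_1<z=y_2'$ forces the facility location to be $z$; a short case analysis yields $\SC(z_1)=\tfrac{3\lambda}{4}(z_2-z_1)$ and $\SC(z)\geq\tfrac{3\lambda}{2}(z_2-z_1)$ for every feasible $z$, producing distortion $\geq 2$, a contradiction.

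\emph{Induction step.} Assume (i) and (ii) hold at level $\ell-1$. For (i) at level $\ell$, I construct an instance with $\ell$ districts each carrying the critical 2-point profile from (ii) at $\mu=\ell-1$ (namely $\tfrac{\lambda}{2}+\tfrac{\lambda}{4\ell}$ agents at $y_1$ and the remainder at $y_2$), together with $\ell+1$ unanimity districts at $y_2$. By (ii) at $\mu=\ell-1$ each first-type district has a common representative $z^*<\tfrac{y_1+y_2}{2}$, while the remaining ones equal $y_2$ by unanimity. Aggregating the agents' distances gives $\SC(z^*)=\bigl(\tfrac{3\ell\lambda}{2}+\tfrac{3\lambda}{4}\bigr)(y_2-y_1)-\bigl(\ell\lambda+\tfrac{\lambda}{2}\bigr)(z^*-y_1)$ for $z^*\in[y_1,y_2]$ (with $z^*<y_1$ only increasing the cost), whereas $\SC(y_2)=\bigl(\tfrac{\ell\lambda}{2}+\tfrac{\lambda}{4}\bigr)(y_2-y_1)$ is optimal, since the positional median sits at $y_2$. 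The ratio $\SC(z^*)/\SC(y_2)$ equals exactly $2$ at $z^*=\tfrac{y_1+y_2}{2}$ and strictly exceeds $2$ for every admissible $z^*<\tfrac{y_1+y_2}{2}$; so if $\calM$ picked $z^*$, the distortion would surpass $2$, forcing $\calM$ to pick $y_2$ instead. For (ii) at level $\ell$, I mirror the base case: if a district with the 2-point profile $\tfrac{\lambda}{2}\pm\tfrac{\lambda}{4(\ell+1)}$ had representative $z\geq\tfrac{z_1+z_2}{2}$, I construct $\ell$ unanimity districts at $z_1$ together with $\ell+1$ copies of the hypothetical district; part~(i) at level $\ell$ forces the facility location to $z$, and the fractions are designed so that the aggregated distortion computation reaches exactly $2$ at the boundary $z=\tfrac{z_1+z_2}{2}$, contradicting $\dist(\calM)<2$.

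The main obstacle is the induction step of (i): unlike the discrete analogue, where part~(ii) of the induction hypothesis pins the representative to a specific alternative, the continuous version yields only the inequality $z^*<\tfrac{y_1+y_2}{2}$. The distortion bound must therefore be checked uniformly over the whole admissible range of $z^*$. The coefficients $\tfrac{\lambda}{2}\pm\tfrac{\lambda}{4\mu}$ in the 2-point profile are precisely tuned so that the social-cost ratio is linear and monotone in $z^*-y_1$ and attains the threshold value $2$ exactly at the boundary $z^*=\tfrac{y_1+y_2}{2}$; the strict inequality guaranteed by (ii) then delivers the strict bound $>2$ needed to reach the contradiction.
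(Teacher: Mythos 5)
Your proposal is correct and follows essentially the same route as the paper's proof: the same induction coupling (ii) at level $\mu-1$ to (i) at level $\mu$ and (i) at level $\mu$ to (ii) at level $\mu$, the same instances (unanimity districts plus the two-point districts with fractions $\tfrac{\lambda}{2}\pm\tfrac{\lambda}{4\mu}$), and the same social-cost calculations reaching the ratio $2$ at the boundary $z=\tfrac{z_1+z_2}{2}$. The only differences are presentational, e.g.\ you write out the exact linear expression for $\SC(z^*)$ and track the strict versus non-strict inequality at the midpoint, where the paper simply lower-bounds the cost by its value at the midpoint.
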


\begin{proof}
We will prove the statement by induction on $\mu$. 

\bigskip

\noindent 
\underline{Base case: $\mu=1$.} \\
Let $y_1$ and $y_2 > y_1$ be any real numbers. Consider an instance $\calI$ with the following three districts:
\begin{itemize}
\item In the first district, all $\lambda$ agents are positioned at $y_1$. 
By unanimity, the representative of this district is $y_1$.

\item In the other two districts, all $\lambda$ agents are positioned at $y_2$. 
Again by unanimity, the representative of these districts is $y_2$.
\end{itemize}
Clearly, the social costs of the two representative locations are
$\SC(y_1) = 2\lambda(y_2-y_1)$  and $\SC(y_2) = \lambda(y_2-y_1)$. 
Since $\dist(\calM) <2$, it must be the case that the $\calM(\calI)=y_2$. 

For part (ii), let $z_1$ and $z_2 > z_1$ be any real numbers. We will show that some location $z < \frac{z_1+z_2}{2}$ must be the representative of a district $d$ such that (a) $5\lambda/8$ agents are positioned at $z_1$ and (b) $3\lambda/8$ agents are positioned at $z_2$.
Consider the following instance $\calJ$ with three districts: 
\begin{itemize}
\item The first two districts are identical to district $d$ described above. 
Let $z$ be the representative of these districts.

\item In the third district, all $\lambda$ agents are positioned at $z_1$. 
By unanimity, the representative of this district is $z_1$.
\end{itemize}
Assume towards a contradiction that $z \geq \frac{z_1+z_2}{2}$. 
Then, by part (i) of the statement proved above for the base case (which holds for any $y_1$ and $y_2 > y_1$), we know that $\calM(\calJ)=z$. 
We have
$$\SC(z_1) = 2 \cdot \frac{3\lambda}{8}(z_2-z_1) = \frac{3\lambda}{4}(z_2-z_1)$$
and 
$$\SC(z) = 2\cdot \frac{5\lambda}{8} (z-z_1) + 2\cdot \frac{3\lambda}{8} |z_2-z| + \lambda (z-z_1).$$ 
Observe that $\SC(z)$ is an increasing function of $z$, no matter whether $z < z_2$ or $z \geq z_2$. 
Since $z \geq \frac{z_1+z_2}{2}$,
$$\SC(z) \geq \SC\left(\frac{z_1+z_2}{2}\right) = \frac{6\lambda }{4}(z_2-z_1).$$
Therefore, we have $\dist(\calM) \geq \dist(\calJ|\calM) \geq 2$, a contradiction. See also \cref{fig:lower-continuous-basecase}.
\bigskip

\noindent 
\underline{Induction step:} \\
We will prove the statement for $\mu=\ell$, assuming that it holds for $\mu=\ell-1$. 

Let $y_1$ and $y_2 > y_1$ be any real numbers. Consider the following instance $\calI$ with $2\ell+1$ districts:
\begin{itemize}
\item In each of the first $\ell$ districts, $\frac{\lambda}{2} + \frac{\lambda}{4\ell}$ agents are positioned at $y_1$ and $\frac{\lambda}{2} - \frac{\lambda}{4\ell}$ agents are positioned at $y_2$. 
By part (ii) of the induction hypothesis, the representative of these districts is some location $z \leq \frac{y_1+y_2}{2}$. 

\item In each of the other $\ell+1$ districts, all $\lambda$ agents are positioned at $y_2$. 
By unanimity, the representative of these districts is $y_2$.
\end{itemize}
By the range of possible values of $z$, we have 
\begin{align*}
\SC(z) &= 
\ell\cdot \left(\frac{\lambda}{2}+\frac{\lambda}{4\ell}\right)\cdot |z-y_1| + \ell\cdot \left(\frac{\lambda}{2}-\frac{\lambda}{4\ell}\right)\cdot (y_2-z) + (\ell+1)\cdot \lambda(y_2-z) \\
&\geq \lambda \cdot \frac{(2\ell+1)(y_2-y_1)}{2}
\end{align*}
and 
$$ \SC(y_2) = \ell \left(\frac{\lambda}{2}+\frac{\lambda}{4\ell}\right)(y_2-y_1) = \lambda \cdot \frac{2\ell+1}{4} (y_2-y_1).$$
If $\calM(\calI)=z$, then $\dist(\calM) \geq \dist(\calI | \calM) \geq 2$, a contradiction. 
Consequently, it must be the case that $\calM(\calI)=y_2$. 

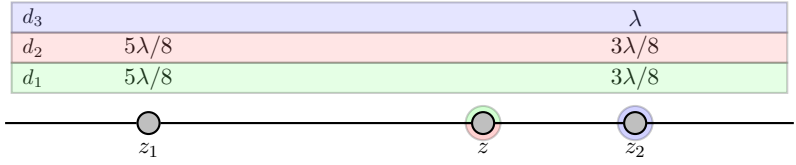
\begin{figure}
	\centering
	\begin{tikzpicture}[thick,scale=0.8, every node/.style={scale=0.8}]
	\node [circle, draw=black,color=black,fill=white!50!gray, label={below:$z_1$}] (0) at (-4, 0) {};
	\node [semicircle, draw=black,fill=white!10!blue,opacity=0.2] (x) at (4, 0.125) {};
	\node [semicircle, draw=black,rotate=180,fill=white!10!blue,opacity=0.2] (x) at (4, -0.125) {};
	\node [circle, draw=black,color=black,fill=white!50!gray, label={below:$z_2$}] (1) at (4, 0) {};
	\node [semicircle, draw=black,fill=white!10!green,opacity=0.2] (x) at (1.5, 0.125) {};
	\node [semicircle, draw=black,rotate=180,fill=white!10!red,opacity=0.2] (x) at (1.5, -0.125) {};
	\node [circle, draw=black,color=black,fill=white!50!gray, label={below:$z$}] (z) at (1.5, 0) {};
	\node (l) at (-6.5,0) {};
	\node (r) at (6.75,0) {};
	\draw (l) to (0);
	\draw (0) to (z);
	\draw (z) to (1);
	\draw (1) to (r);
	\node (district1) at (-4,0.75) {$5\lambda/8$};
	\node (district1) at (4,0.75) {$3\lambda/8$};
	\node at (-5.9,0.75) {\small{$d_1$}};
	\draw[draw=black,fill=white!50!green,opacity=0.2] (-6.25,0.5) rectangle ++(12.75,0.5);
	\node (district2) at (-4,1.25) {$5\lambda/8$};
	\node (district2) at (4,1.25) {$3\lambda/8$};
	\node at (-5.9,1.25) {\small{$d_2$}};
	\draw[draw=black,fill=white!50!red,opacity=0.2] (-6.25,1) rectangle ++(12.75,0.5);
	\node (district3) at (4,1.75) {$\lambda$};
	\node at (-5.9,1.75) {\small{$d_3$}};
	\draw[draw=black,fill=white!50!blue,opacity=0.2] (-6.25,1.5) rectangle ++(12.75,0.5);
	\end{tikzpicture}	
	\caption{The instance $\calJ$ used in part (ii) of the base case of \cref{lem:contabb-general}. 
	By unanimity, the representative of $d_3$ is $z_2$ (shaded blue). 
	We assume towards a contradiction that the representative of the two identical districts $d_1$ (shaded green) and $d_2$ (shaded red) is some point $z > \frac{z_1+z_2}{2}$ (shaded both green and red); in the figure it is shown to be below $z_2$. Then, by comparing the social cost of $z$ against the social cost of $z_1$, we obtain that $\dist(\calJ|\calM)\geq 2$.}
	\label{fig:lower-continuous-basecase}
\end{figure}

For part (ii), let $z_1$ and $z_2 > z_1$ be any real numbers, and consider the following instance $\calJ$ with $2\ell+1$ districts:
\begin{itemize}
\item In the first $\ell$ districts, all $\lambda$ agents are positioned at $z_1$. 
By unanimity, the representative of these districts is $z_1$.

\item In each of the remaining $\ell+1$ districts, $\frac{\lambda}{2}+\frac{\lambda}{4(\ell+1)}$ agents are positioned at $z_1$ and $\frac{\lambda}{2}-\frac{\lambda}{4(\ell+1)}$ agents are located at $z_2$. Let $z$ be the representative of these districts.
\end{itemize}
Assume towards a contradiction (to part (ii) of the lemma) that $z\geq \frac{z_1+z_2}{2}$. 
Then, by the proof of part (i) of the induction step above (which holds for any $y_1$ and $y_2 > y_1$), we know that $\calM(\calJ)=z$. 
By the range of possible values of $z$, we have
$$\SC(z_1) = (\ell+1)\cdot \left( \frac{\lambda}{2}-\frac{\lambda}{4(\ell+1)}\right) (z_2-z_1) = \lambda \cdot \frac{2\ell+1}{4}(z_2-z_1)$$
and
\begin{align*}
\SC(z) 
&= \ell \cdot \lambda (z-z_1) +  (\ell+1) \cdot \left( \frac{\lambda}{2}+\frac{\lambda}{4(\ell+1)}\right) (z-z_1) + (\ell+1) \cdot \left( \frac{\lambda}{2}-\frac{\lambda}{4(\ell+1)}\right)|z_2-z| \\
&\geq \lambda \cdot \frac{(2\ell+1)(z_2-z_1)}{2}.
\end{align*}
Therefore, $\dist(\calM) \geq \dist(\calJ | \calM) \geq 2$, a contradiction.

\bigskip

\noindent This completes the proof of the lemma.
\end{proof}

We are now ready to prove the main theorem.

\begin{theorem}
In the continuous setting, the distortion of any mechanism is at least $2-\varepsilon$, for any $\varepsilon > 0$.
\end{theorem}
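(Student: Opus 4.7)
The plan is to mirror the proof of Theorem~\ref{thm:discrete-lower-unconditional}: assume toward a contradiction that $\calM$ has distortion strictly less than $2-\varepsilon$, so in particular $\dist(\calM)<2$ and Lemma~\ref{lem:contabb-general} applies to $\calM$. I then exhibit a single instance whose distortion strictly exceeds $2-\varepsilon$, contradicting the assumption.

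I first need a symmetric counterpart to part (ii) of Lemma~\ref{lem:contabb-general}: for any district in which $\frac{\lambda}{2}-\frac{\lambda}{4(\mu+1)}$ agents sit at some location $z_1$ and $\frac{\lambda}{2}+\frac{\lambda}{4(\mu+1)}$ agents sit at $z_2>z_1$, the representative $\tilde z$ chosen by $\calM$ satisfies $\tilde z>\frac{z_1+z_2}{2}$. I obtain this by applying Lemma~\ref{lem:contabb-general}(ii) to the reflected mechanism $\calM^{-}$ defined by $\calM^{-}(\xx,\calD)=-\calM(-\xx,\calD)$; since $\dist(\calM^{-})=\dist(\calM)<2$, the lemma applies to $\calM^{-}$, and translating the conclusion back to $\calM$ flips the inequality on the representative. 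This is the ``symmetric argument'' already flagged in the paper just before Lemma~\ref{lem:011winner-general}.

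Now choose $\mu$ large enough that $\frac{4}{2\mu+3}<\varepsilon$ and construct the instance $\calJ$ with $k=2\mu+1$ districts. In $\mu$ of them all $\lambda$ agents sit at $0$, so unanimity gives representative $0$; in each of the remaining $\mu+1$ districts, $\frac{\lambda}{2}-\frac{\lambda}{4(\mu+1)}$ agents sit at $0$ and $\frac{\lambda}{2}+\frac{\lambda}{4(\mu+1)}$ at $1$. By the symmetric counterpart above, these latter districts share a common representative $\tilde z>\frac{1}{2}$. Thus $\calJ$'s representative profile consists of $\mu$ copies of $0$ and $\mu+1$ copies of $\tilde z>0$, so Lemma~\ref{lem:contabb-general}(i) applies directly with $y_1=0,\ y_2=\tilde z$ and forces $\calM(\calJ)=\tilde z$.

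The remaining step is a routine distortion computation. Summing agents gives $\frac{(6\mu+1)\lambda}{4}$ at $0$ and $\frac{(2\mu+3)\lambda}{4}$ at $1$, so for $\mu\geq 1$ the optimum is $0$ with $\SC(0)=\frac{(2\mu+3)\lambda}{4}$. A linear calculation of $\SC(\tilde z)$ for $\tilde z\in[\frac{1}{2},1]$ yields
\[
\frac{\SC(\tilde z)}{\SC(0)} \;=\; 1+\frac{(4\mu-2)\tilde z}{2\mu+3} \;>\; 1+\frac{2\mu-1}{2\mu+3} \;=\; 2-\frac{4}{2\mu+3} \;>\; 2-\varepsilon,
\]
while the case $\tilde z>1$ yields an even larger ratio. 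This contradicts the hypothesis. The only subtlety I expect is the derivation of the symmetric lower bound on $\tilde z$; once that is in hand, the rest is a specialization of Lemma~\ref{lem:contabb-general} plus arithmetic.
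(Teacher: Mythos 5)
Your proof is correct, and it shares the paper's overall skeleton: pin down the representative of a ``mixed'' district to be above the midpoint, build an instance with $\mu$ unanimous districts at $0$ and $\mu+1$ mixed districts, invoke \cref{lem:contabb-general}(i) to force the mechanism onto the larger representative, and let $\mu \to \infty$. Where you genuinely diverge is in the first step. The paper certifies the midpoint bound with a fresh two-district instance (a balanced district with $\lambda/2$ agents at each of $0$ and $1$, plus a unanimous district at $1$) and leans on the stated without-loss-of-generality convention that a mechanism facing two distinct representatives selects the left-most one, obtaining $y \geq 1/2$ for the \emph{balanced} district. You instead recycle part (ii) of \cref{lem:contabb-general} applied to the reflected mechanism $\calM^-(\xx,\calD)=-\calM(-\xx,\calD)$; this is legitimate, since reflection is an isometry that preserves locality and distortion, and the lemma's proof nowhere relies on the tie-breaking convention, so the conclusion transfers and flips to give $\tilde z > \tfrac{1}{2}$ for your \emph{slightly unbalanced} districts. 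Your route is somewhat more self-contained (it avoids the tie-breaking convention at this step) at the cost of carrying the $\pm\frac{\lambda}{4(\mu+1)}$ asymmetry into the final arithmetic, which you handle correctly: your bound $2-\frac{4}{2\mu+3}$ and the paper's $1+\frac{\mu}{\mu+1}$ both tend to $2$. Both arguments are sound.
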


\begin{proof}
Assume towards a contradiction that there exists a mechanism $\calM$ with distortion smaller than $2-\varepsilon$, for any $\varepsilon >0$. First, we will prove that the representative $y$ of a district $d$ such that $\lambda/2$ agents are positioned at $0$ and $\lambda/2$ agents are positioned at $1$, must satisfy $y \geq 1/2$, as otherwise $\dist(\calM) \geq 2$.

\medskip

\noindent Assume that $y < 1/2$ and consider the following instance $\calI$ with two districts: 
\begin{itemize}
\item The first district is identical to district $d$ above.
\item In the second district, all $\lambda$ agents are positioned at $1$. 
By unanimity, the representative of this district is $1$.
\end{itemize}
Recall that for any instance such that there are two districts with different representatives it is without loss of generality to assume that the facility location is the left-most representative, that is, $\calM(\calI)=y$ in our case. 
We have 
$$\SC(y) = \frac{\lambda}{2}\cdot y + \frac{\lambda}{2}\cdot |1-y| + \lambda \cdot |1-y| \geq \frac{\lambda(3-2y)}{2}$$
and 
$$\SC(1) = \frac{\lambda}{2}.$$ 
Therefore, $\dist(\calM) \geq \dist(\calI | \calM) = 3-2y \geq 2$, a contradiction.

\medskip

\noindent 
Now consider the following instance $\calJ$ with $2\mu+1$ districts:
\begin{itemize}
\item In each of the first $\mu$ districts, all $\lambda$ agents are positioned at $0$. 
By unanimity, the representative of these districts is location $0$.

\item Each of the $\mu+1$ districts are identical to $d$: $\lambda/2$ agents are positioned at $0$ and $\lambda/2$ agents are positioned at $1$. By the above discussion, the representative of these districts is some location $y \geq 1/2$.
\end{itemize}
By \cref{lem:contabb-general}, it is $\calM(\calJ) = y$.
We have
$$\SC(0) = (\mu+1)\cdot \frac{\lambda}{2}$$
and 
\begin{align*}
\SC(y) 
&= (\mu+1)\cdot\frac{\lambda}{2}y + (\mu+1)\cdot\frac{\lambda}{2}|1-y| + \mu \cdot \lambda y \\
&\geq (\mu+1)\cdot \frac{\lambda}{2} + \mu \cdot \lambda y.
\end{align*}
Hence, 
$$\dist(\calM) \geq \dist(\calJ | \calM) = 1+\frac{2 \mu y}{\mu+1} \geq 1 + \frac{\mu}{\mu+1}.$$
By choosing $\mu$ to be sufficiently large, the distortion of $\calM$ is at least $2-\varepsilon$, for any $\varepsilon > 0$, a contradiction.
 
\medskip

\noindent This completes the proof.
\end{proof}

Even though we have been unable to show a matching unconditional upper bound, we believe that this should be possible. To this end, we conjecture that there exists a mechanism with distortion $2$ for the continuous setting.

\subsubsection{A lower bound for strategyproof mechanisms}\label{sec:continuous-lower-sp}

Recall that the proof of our lower bound on the distortion of strategyproof mechanisms in the discrete setting relies heavily on \cref{lem:spordinal}. However, that lemma is no longer meaningful in the continuous setting, because every alteration of an agent's position immediately results in a new preference ordering over the locations. Instead, we will exploit the following lemma, which states that if we move {\em any subset} of the $\lambda$ agents in a district to the representative location of the district, then strategyproofness dictates that the representative remains the same. The proof of the lemma can be found in the appendix.

\begin{lemma}\label{lem:contsplemma}
Let $\calM$ be a strategyproof mechanism. 
Let $\xx_d$ be a district position profile and let $y$ be the representative of district $d$. 
Let $S \subseteq \calN_d$ be any subset of the agents in $d$. 
Then, $y$ remains the representative of $d$ under the district position profile $\yy_d$ which is obtained from $\xx_d$ be moving all agents in $S$ to $y$, that is, $y_i = y$ for every agent $i \in S$ and $y_i=x_i$ for every $i \in \calN_d \setminus S$. 
\end{lemma}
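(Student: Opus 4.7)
The plan is to prove the lemma by induction on $|S|$, moving the agents of $S$ to $y$ one at a time and invoking strategyproofness within districts at each step. Since in every application of the lemma in the paper the mechanism $\calM$ has finite distortion, Lemma~\ref{lem:spiout} guarantees that $\calM$ is strategyproof within districts, so a unilateral deviation by a single agent in $d$ can only (weakly) push the district's representative away from that agent's true position.

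The base case $|S| = 0$ is immediate since $\yy_d = \xx_d$. For the inductive step, enumerate $S = \{i_1, \ldots, i_m\}$ arbitrarily and define a chain of district position profiles $\xx^{(0)} = \xx_d, \xx^{(1)}, \ldots, \xx^{(m)} = \yy_d$, where $\xx^{(j)}$ is obtained from $\xx^{(j-1)}$ by moving agent $i_j$ from $x_{i_j}$ to $y$ (embedding each $\xx^{(j)}$ into the same fixed instance by keeping all other districts unchanged, which is legitimate since the representative of $d$ depends only on $\xx_d^{(j)}$ by the locality of the first aggregation step). Assuming inductively that the representative of $d$ under $\xx^{(j-1)}$ equals $y$, view $\xx^{(j)}$ as the truthful profile — in which agent $i_j$'s true position is $y$ — and $\xx^{(j-1)}$ as the profile in which agent $i_j$ misreports to $x_{i_j}$. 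Strategyproofness within districts then yields
$$\delta\bigl(y,\ \text{representative of } d \text{ under } \xx^{(j)}\bigr) \leq \delta\bigl(y,\ \text{representative of } d \text{ under } \xx^{(j-1)}\bigr) = \delta(y,y) = 0,$$
so the representative of $d$ under $\xx^{(j)}$ must equal $y$, completing the inductive step.

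I do not expect any genuine obstacle here: once strategyproofness within districts is in hand, each single-agent move from $x_{i_j}$ to $y$ is a textbook application of the strategyproofness inequality, and the induction chains these moves together. The only care needed is in the setup — ordering the agents of $S$, appealing to Lemma~\ref{lem:spiout} to pass from strategyproofness to strategyproofness within districts, and noting that the representative depends only on the district profile so that we may freely embed each $\xx^{(j)}$ into a common instance.
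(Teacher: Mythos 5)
Your proof is correct and follows essentially the same route as the paper's: invoke \cref{lem:spiout} to pass to strategyproofness within districts, move the agents of $S$ to $y$ one at a time, and at each step treat the new profile as truthful so that a deviation back to the old position would contradict strategyproofness (the paper phrases this as a contradiction, you as the inequality $\delta(y,\cdot)\le 0$, which is the same argument). Your explicit remark that finite distortion is needed to apply \cref{lem:spiout} is a point the paper glosses over, and handling it as you do is the right call.
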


Our next lemma is very similar to \cref{lem:contabb-general}, 
but applies only to \emph{strategyproof} mechanisms with distortion strictly less than $3$. %-\varepsilon$, for any $\varepsilon > 0$.

\begin{lemma}\label{lem:contabb}
Let $\calM$ be any strategyproof mechanism for the continuous setting with distortion strictly less than $3$. 
Let $\calI$ be any instance with $k=2\mu+1$ districts such that 
(a) the representative of $\mu$ districts is some location $y_1$, 
(b) the representative of the remaining $\mu+1$ districts is some location $y_2$, and 
(c) $y_1 < y_2$. 
Then, $\calM(\calI) = y_2$. 
\end{lemma}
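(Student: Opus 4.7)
The plan is to adapt the induction used in the proof of \cref{lem:contabb-general} to the strategyproof setting, trading the distortion target of $2$ for the target $3$, and using \cref{lem:contsplemma} in place of part~(ii) of \cref{lem:contabb-general}, which is absent from the present statement. As always, the mechanism's output depends only on the multiset of district representatives, so for each $\mu$ and each pair $y_1<y_2$ it suffices to exhibit a single instance with the required representative structure on which $\calM(\calI)=y_2$ is forced.

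For the base case $\mu=1$, the fully unanimous construction only yields distortion $2$ and gives no contradiction. Instead, I would use a ``biased'' district $d^*$ in which $(3/4)\lambda$ agents sit at $y_1$ and $(1/4)\lambda$ agents sit at $y_2$. A combination of the single-district distortion bound and \cref{lem:contsplemma} (which lets us move agents toward the unknown representative without changing it) pins the representative of $d^*$ to $y_1$. The three-district instance consisting of $d^*$ together with two unanimous-at-$y_2$ districts then has representative multiset $\{y_1, y_2, y_2\}$, and if $\calM(\calI)=y_1$ the distortion works out to exactly $3$, contradicting $\dist(\calM)<3$.

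For the inductive step at level $\ell$, the plan is to take the same template but with $d^*$ having $a\lambda$ agents at $y_1$ and $(1-a)\lambda$ at $y_2$, where $a = 1/2 + 1/(4\ell)$. Consider first $\calI_1^{(\ell)}$ consisting of $\ell$ copies of $d^*$ together with $\ell-1$ unanimous-at-$y_1$ districts. If the representative $r$ of $d^*$ satisfies $r\neq y_1$, then the representative multiset of $\calI_1^{(\ell)}$ is $\{y_1^{(\ell-1)}, r^{(\ell)}\}$ with $y_1 < r$; invoking the induction hypothesis at $\mu=\ell-1$ for the pair $(y_1, r)$ gives $\calM(\calI_1^{(\ell)})=r$. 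Together with \cref{lem:contsplemma} (used to rule out $r$ lying strictly between $y_1$ and $y_2$ via an iterated refinement of the split), we obtain a distortion at least $3$, contradicting the distortion hypothesis. Hence $r = y_1$. Then, building $\calI_3^{(\ell)}$ from $\ell$ copies of $d^*$ plus $\ell+1$ unanimous-at-$y_2$ districts, the representative multiset becomes $\{y_1^{(\ell)}, y_2^{(\ell+1)}\}$, and the assumption $\calM(\calI_3^{(\ell)}) = y_1$ yields distortion exactly $3$, again contradicting $\dist(\calM)<3$.

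The main technical obstacle is precisely this ``pinning'' of $r$ to $y_1$ rather than to an intermediate point of $(y_1, y_2)$. In the discrete \cref{lem:011winner}, representatives automatically live in $\{0,1\}$, so no pinning is required; in the continuous setting, a single application of the induction hypothesis only forces $r$ away from $y_2$, not down to $y_1$. The argument needs a more delicate iteration, appealing to the induction hypothesis at $\mu=\ell-1$ for successively narrower pairs $(y_1, r_j)$ and using \cref{lem:contsplemma} at each step to redistribute the biased district's agents toward the current representative without changing it. The key calibration is the choice $a = 1/2 + 1/(4\ell)$, which makes the distortion of both the auxiliary instance $\calI_1^{(\ell)}$ and the final instance $\calI_3^{(\ell)}$ saturate at exactly $3$ and thereby drive both contradictions from the strict bound $\dist(\calM)<3$.
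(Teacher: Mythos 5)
Your overall architecture (the three-instance template, the induction on $\mu$, \cref{lem:contsplemma} as the strategyproofness tool, and the $\tfrac{1}{2}+\tfrac{1}{4\ell}$ split) matches the paper's proof, but the step you yourself flag as ``the main technical obstacle'' --- pinning the representative $r$ of the biased district to exactly $y_1$ --- is a genuine gap, and no iteration will close it. The single-district distortion constraint on the district with $\tfrac{3\lambda}{4}$ agents at $y_1$ and $\tfrac{\lambda}{4}$ at $y_2$ only forces $\SC(r)<3\,\SC(y_1)$, which for $r\in[y_1,y_2]$ reads $2r+y_2-3y_1<3(y_2-y_1)$, i.e.\ merely $r<y_2$; every point of $[y_1,y_2)$ (and some points below $y_1$) remains admissible, and neither \cref{lem:contsplemma} (which only \emph{preserves} the representative when agents are moved onto it) nor the induction hypothesis (which yields inequalities, never equalities) can single out $y_1$. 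Without the pinning, your final distortion computation collapses: if $r$ is close to $y_2$, the ratio $\SC(r)/\SC(y_2)$ in your three-district instance tends to $1$, not $3$, so no contradiction with $\dist(\calM)<3$ is obtained. The same problem afflicts your auxiliary instance $\calI_1^{(\ell)}$: applying the induction hypothesis to the pair $(y_1,r)$ and then the distortion bound again only yields $r<y_2$, not ``distortion at least $3$.''

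The paper resolves this by \emph{not} pinning. Having established only $r<y_2$, it invokes \cref{lem:contsplemma} to move the majority block ($\tfrac{3\lambda}{4}$, resp.\ $\tfrac{\lambda}{2}+\tfrac{\lambda}{4\ell}$, agents) from $y_1$ onto $r$ itself, producing a district whose representative $r$ coincides with the position of its majority mass. In the resulting final instance the social costs are $\SC(r)=\tfrac{9\lambda}{4}(y_2-r)$ versus $\SC(y_2)=\tfrac{3\lambda}{4}(y_2-r)$ (and analogously $\tfrac{3\lambda(2\ell+1)}{4}$ versus $\tfrac{\lambda(2\ell+1)}{4}$ in the inductive step), so the ratio is exactly $3$ for \emph{every} $r<y_2$, and the contradiction goes through with the role of $y_1$ played by $r$. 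You already hold all the ingredients --- the relocation via \cref{lem:contsplemma} appears in your write-up --- but they are deployed in service of an unattainable goal (forcing $r=y_1$) rather than as the substitute for it.
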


\begin{proof}
We will prove the statement by induction on $\mu$.

\bigskip
	
\noindent 
\underline{Base case: $\mu=1$.} \\
Consider an instance $\calI_1$ with a single district in which the first $3\lambda/4$ agents are positioned at $y_1$, and the remaining $\lambda/4$ agents are positioned at $y_2$. We will argue that for the representative $z$ of the district, we have $z < y_2$; this is obvious when $z\leq y_1$, therefore assume $z > y_1$.
Its social cost is
$$
\SC(z) = \frac{3\lambda}{4}(z-y_1) + \frac{\lambda}{4}|y_2-z| \geq \frac{\lambda}{4}(2z+y_2-3y_1). 
$$
Observe that $y_1$ is the location that minimizes the social cost to $\SC(y_1) = \frac{\lambda}{4}(y_2-y_1)$. 
Since $\dist(\calM) < 3$, it has to be the case that $z < y_2$.

Next, consider an instance $\calI_2$ with a single district such that
the first $3\lambda/4$ agents are positioned at $z$,
and the remaining $\lambda/4$ agents are positioned at $y_2$.
Observe that the districts of $\calI_1$ and $\calI_2$ are the same, with the only difference that the $3\lambda/4$ agents who are positioned at $y_1$ in $\calI_1$ have been moved to $z$ in $\calI_2$. 
Hence, by \cref{lem:contsplemma}, the representative of the district in $\calI_2$ must also be $z$ . 

Finally, consider an instance $\calI_3$ with the following three districts:
\begin{itemize}
\item The first district is identical to the district in $\calI_2$: $3\lambda/4$ agents are positioned at $z$, and  $\lambda/4$ agents are positioned at $y_2$. By the above discussion, the representative of this district is $z < y_2$.

\item In the remaining two districts, all $\lambda$ agents are positioned at $y_2$. 
By unanimity, the representative of these districts is $y_2$.
\end{itemize}
We have
$$\SC(z) =  \frac{\lambda}{4}(y_2-z) +  2\lambda(y_2-z) =\frac{9\lambda}{4}(y_2-z)$$
and
$$\SC(y_2) = \frac{3\lambda}{4}(y_2-z).$$ 
If $\calM(\calI_3)=z$, then $\dist(\calM) \geq \dist(\calI_3 | \calM) \geq 3$. Hence, it must to be the case that $\calM(\calI_3) = y_2$. See also \cref{fig:lower-spcont-basecase}.

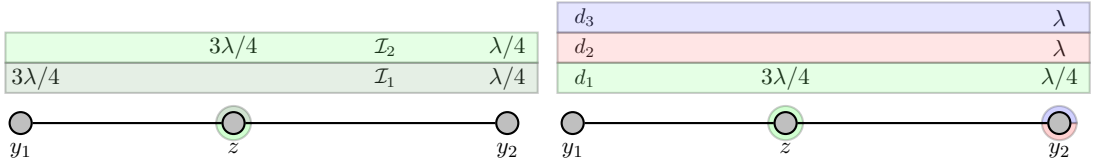
\begin{figure}
	\centering
	\begin{minipage}[b]{0.45\linewidth}
		\centering
		\begin{tikzpicture}[thick,scale=0.8, every node/.style={scale=0.8}]
		\node [circle, draw=black,color=black,fill=white!50!gray, label={below:$y_1$}] (0) at (-4, 0) {};
		\node [circle, draw=black,color=black,fill=white!50!gray, label={below:$y_2$}] (1) at (4, 0) {};
		\node [semicircle, draw=black,fill=white!10!DarkGreen,opacity=0.2] (x) at (-0.5, 0.125) {};
		\node [semicircle, draw=black,rotate=180,fill=white!10!green,opacity=0.2] (x) at (-0.5, -0.125) {};
		\node [circle, draw=black,color=black,fill=white!50!gray, label={below:$z$}] (z) at (-0.5, 0) {};
		\draw (0) to (z);
		\draw (z) to (1);
		\node (district1) at (-3.75,0.75) {$3\lambda/4$};
		\node (district1) at (4,0.75) {$\lambda/4$};
		\node at (2,0.75) {\small{$\calI_1$}};
		\draw[draw=black,fill=white!50!DarkGreen,opacity=0.2] (-4.25,0.5) rectangle ++(8.75,0.5);
		\node (district2) at (-0.5,1.25) {$3\lambda/4$};
		\node (district2) at (4,1.25) {$\lambda/4$};
		\node at (2,1.25) {\small{$\calI_2$}};
		\draw[draw=black,fill=white!50!green,opacity=0.2] (-4.25,1) rectangle ++(8.75,0.5);
		\end{tikzpicture}
	\end{minipage}
	\begin{minipage}[b]{0.45\linewidth}
		\centering
		\begin{tikzpicture}[thick,scale=0.8, every node/.style={scale=0.8}]
		\node [circle, draw=black,color=black,fill=white!50!gray, label={below:$y_1$}] (0) at (-4, 0) {};
		\node [semicircle, draw=black,fill=white!10!blue,opacity=0.2] (x) at (4, 0.125) {};
		\node [semicircle, draw=black,rotate=180,fill=white!10!red,opacity=0.2] (x) at (4, -0.125) {};
		\node [circle, draw=black,color=black,fill=white!50!gray, label={below:$y_2$}] (1) at (4, 0) {};
		\node [semicircle, draw=black,fill=white!10!green,opacity=0.2] (x) at (-0.5, 0.125) {};
		\node [semicircle, draw=black,rotate=180,fill=white!10!green,opacity=0.2] (x) at (-0.5, -0.125) {};
		\node [circle, draw=black,color=black,fill=white!50!gray, label={below:$z$}] (z) at (-0.5, 0) {};
		\draw (0) to (z);
		\draw (z) to (1);
		\node (district1) at (-0.5,0.75) {$3\lambda/4$};
		\node (district1) at (4,0.75) {$\lambda/4$};
		\node at (-3.8,0.75) {\small{$d_1$}};
		\draw[draw=black,fill=white!50!green,opacity=0.2] (-4.25,0.5) rectangle ++(8.75,0.5);
		\node (district2) at (4,1.25) {$\lambda$};
		\node at (-3.8,1.25) {\small{$d_2$}};
		\draw[draw=black,fill=white!50!red,opacity=0.2] (-4.25,1) rectangle ++(8.75,0.5);
		\node (district3) at (4,1.75) {$\lambda$};
		\node at (-3.8,1.75) {\small{$d_3$}};
		\draw[draw=black,fill=white!50!blue,opacity=0.2] (-4.25,1.5) rectangle ++(8.75,0.5);
		\end{tikzpicture}
	\end{minipage}
	\caption{The base case of \cref{lem:contabb}. 
	\emph{Left:} Instance $\calI_1$ consists of a single district with representative $z$ (shaded dark green). 
	By \cref{lem:contsplemma}, $z$ is also the representative of the single district of instance $\calI_2$ (shaded green). 
	\emph{Right:} In instance $\calI_3$, $d_1$ is identical to the single district of instance $\calI_2$, and thus its representative is $z$ (shaded green). By unanimity, the representative of both $d_2$ (shaded red) and $d_3$ (shaded blue) is $y_2$ (shaded both red and blue). 
	If $\calM(\calI_3) = z$, then $\dist(\calI_3|\calM) = 3$, which implies that $\calM(\calI_3)=y_2$.}
	\label{fig:lower-spcont-basecase}
\end{figure}
\bigskip
	
\noindent 
\underline{Induction step:} \\
We will now prove the statement for $\mu=\ell$, assuming that it holds for $\mu=\ell-1$. 

First, consider an instance $I^{(\ell)}$ with the following $2\ell-1$ districts:
\begin{itemize}
\item In each of the first $\ell-1$ districts, all $\lambda$ agents are positioned at some position $y$. 
By unanimity, the representative of these districts is $y$.

\item In each of the remaining $\ell$ districts, $\lambda/2 + \lambda/(4\ell)$ agents are positioned at $y$, and $\lambda/2 - \lambda/(4\ell)$ agents are positioned at some $y_2>y$. Let $z$ be the representative of these districts. 
\end{itemize}
Again, we will argue that $z<y_2$; this is obviously true when $z\leq y$, therefore assume $z>y$. In that case, Observe that it must be $\calM(\calI^{(\ell)}) = z$ by the induction hypothesis. We have
\begin{align*}
\SC(z) 
&= (\ell-1) \cdot \lambda (z-y) + \ell \cdot \left( \frac{\lambda}{2} + \frac{\lambda}{4\ell} \right)(z-y) + \ell \cdot \left( \frac{\lambda}{2} - \frac{\lambda}{4\ell} \right)|y_2-z| \\
&\geq \lambda \frac{2\ell-1}{4}(2z+y_2-3y).
\end{align*}
At the same time, we have that 
$$\SC(y) = \lambda \frac{2\ell-1}{4}(y_2-y).$$
and hence, since $\dist(\calM) < 3$, it must be the case that $z < y_2$. 

Our next goal is to identify a district $d$ such that $\lambda/2 + \lambda/(4\ell)$ agents are positioned at some location $y_1<y_2$, $\lambda/2 - \lambda/4(\ell)$ agents are positioned at $y_2$, and the representative of the district is $y_1$. 
\begin{itemize}
\item If $z=y$, then any of the last $\ell$ districts in $\calI^{(\ell)}$ is such a district.

\item If $z \neq y$, consider a district $d$ such that $\lambda/2 + \lambda/(4\ell)$ agents are positioned at $z$ and $\lambda/2 - \lambda/(4\ell)$ agents are positioned at $y_2$. Observe that this district is similar to each of the last $\ell$ districts in $\calI^{(\ell)}$, with the difference that the $\lambda/2+\lambda/4\ell$ agents who are positioned at $y$ in $\calI^{(\ell)}$ are now moved to $z$. 
Therefore, by \cref{lem:contsplemma}, the representative of $d$ must be $z$, and since $z <y_2$, we have identified the desired instance.
\end{itemize}
So, in any case we have identified the district $d$ we have been looking for, with $y_1=z$.
	
Finally, consider an instance $\calJ^{(\ell)}$ with the following $2\mu+1$ districts:
\begin{itemize}
\item Each of the first $\ell$ districts is identical to $d$ above: $\lambda/2+\lambda/(4\ell)$ agents are positioned at $y_1$ and $\lambda/2 - \lambda/(4\ell)$ agents are positioned at $y_2$. 
So, the representative of these districts is $y_1$.

\item In each of the other $\ell+1$ districts, all $\lambda$ agents are positioned at $y_2$. 
By unanimity, the representative of these districts is $y_2$.
\end{itemize} 
We have
\begin{align*}
\SC(y_1) = \ell \cdot \left( \frac{\lambda}{2} - \frac{\lambda}{4\ell} \right) (y_2-y_1) 
+ (\ell+1) \cdot \lambda (y_2-y_1)
= \frac{3\lambda(2\ell+1)}{4} (y_2-y_1)
\end{align*}
and
$$\SC(y_2) = \ell \cdot \left( \frac{\lambda}{2} + \frac{\lambda}{4\ell} \right) (y_2-y_1) = \frac{\lambda(2\ell+1)}{4} (y_2-y_1).$$
If $\calM(\calJ^{(\ell)}) = y_1$ then $\dist(\calM) \geq \dist(\calJ^{(\ell)} | \calM) \geq 3$, a contradiction. Hence, it must be $\calM(\calJ^{(\ell)}) = y_2$.
\end{proof}

We are now ready to prove the lower bound.

\begin{theorem}
In the continuous setting, the distortion of any strategyproof mechanism is at least $3-\varepsilon$, for any $\varepsilon > 0$.
\end{theorem}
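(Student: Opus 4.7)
The plan is to mirror the architecture of \cref{thm:sp6lower}: I assume for contradiction that $\dist(\calM) < 3-\varepsilon$ for some $\varepsilon>0$, and exhibit an instance with distortion at least $3-\varepsilon$. The central object is the balanced district $d^* = (\lambda/2$ agents at $0$, $\lambda/2$ agents at $1)$, and the proof splits on $z$, the representative of $d^*$ under $\calM$ (a value in $\RR$ fixed by $\calM$ via district-locality).

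In the main case $z \in (0,1)$, I would first apply \cref{lem:contsplemma} to move the $\lambda/2$ agents of $d^*$ at $0$ to $z$, obtaining a district $d^{\dagger}$ with $\lambda/2$ agents at $z$ and $\lambda/2$ at $1$ whose representative remains $z$. I then form the instance $\calJ$ with $2\mu+1$ districts consisting of $\mu+1$ copies of $d^{\dagger}$ (each with representative $z$) together with $\mu$ unanimous districts all at $1$ (each with representative $1$). Since $z<1$, the mirrored version of \cref{lem:contabb} discussed below forces $\calM(\calJ)=z$. Counting masses gives $(\mu+1)\lambda/2$ agents at $z$ and $\lambda(3\mu+1)/2$ agents at $1$, so $\opt=1$ with $\SC(1)=(\mu+1)\lambda(1-z)/2$ while $\SC(z)=\lambda(3\mu+1)(1-z)/2$. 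The distortion of $\calJ$ is thus exactly $(3\mu+1)/(\mu+1)$, which exceeds $3-\varepsilon$ as soon as $\mu \geq 2/\varepsilon - 1$, contradicting the assumption.

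The remaining cases are treated analogously. When $z\in\{0,1\}$ no invocation of \cref{lem:contsplemma} is needed: $d^*$ itself is paired with $\mu$ unanimous districts at the opposite endpoint, and \cref{lem:contabb} (for $z=1$) or its mirror (for $z=0$) pins the facility. When $z\notin[0,1]$, I use \cref{lem:contsplemma} to move the $\lambda/2$ agents of $d^*$ lying on the side of $z$ opposite to the ``far'' endpoint of $\{0,1\}$ to $z$, and pair with $\mu$ unanimous districts at that far endpoint; again either \cref{lem:contabb} or its mirror applies. In each subcase the same bookkeeping yields distortion $(3\mu+1)/(\mu+1)\to 3$.

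The principal obstacle is establishing the mirrored analog of \cref{lem:contabb}: for any strategyproof mechanism with $\dist<3$, any instance in which $\mu+1$ districts have representative $y_1$ and $\mu$ districts have representative $y_2>y_1$ must have $\calM(\calI)=y_1$. This is proved by an induction on $\mu$ parallel to that of \cref{lem:contabb}, but on mirror-image auxiliary instances. In the base case one starts from a single district with $\lambda/4$ agents at $y_1$ and $3\lambda/4$ at $y_2$, argues from distortion that its representative $z^{\mathrm{sym}}$ satisfies $z^{\mathrm{sym}}>y_1$, uses \cref{lem:contsplemma} to move the $\lambda/4$ agents at $y_1$ to $z^{\mathrm{sym}}$, and observes that in the $3$-district instance with one such modified district plus two unanimous at $y_1$, a facility of $z^{\mathrm{sym}}$ would produce distortion exactly $3$, forcing the facility to be $y_1$. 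The inductive step follows the same template.
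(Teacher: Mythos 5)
Your overall architecture is sound and genuinely diverges from the paper's: the paper first pins down the representative $y$ of the balanced district to satisfy $y \geq 1$ (using the two-district left-most tie-breaking convention together with \cref{lem:contsplemma}), after which only the already-proven \cref{lem:contabb} is needed; you instead allow an arbitrary representative $z$ and case-split on its location, which forces you to supply a mirrored analog of \cref{lem:contabb} for the case $z<1$. That mirrored statement is in fact true, and the cheapest way to get it is to apply \cref{lem:contabb} to the reflected mechanism that on input $(\xx,\calD)$ outputs $-\calM(-\xx,\calD)$ (which is strategyproof with the same distortion, and the proof of \cref{lem:contabb} nowhere invokes the two-district orientation convention). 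However, your sketched direct proof of the mirrored lemma's base case contains a concrete error, and as written the step fails.

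Specifically, starting from the single district with $\lambda/4$ agents at $y_1$ and $3\lambda/4$ agents at $y_2$ whose representative is some $z^{\mathrm{sym}}>y_1$, you move the $\lambda/4$ agents at $y_1$ to $z^{\mathrm{sym}}$ and pair the modified district with two unanimous districts at $y_1$. The resulting instance has $2\lambda$ agents at $y_1$, $\lambda/4$ at $z^{\mathrm{sym}}$ and $3\lambda/4$ at $y_2$, and placing the facility at $z^{\mathrm{sym}}$ does \emph{not} give distortion $3$: taking $y_1=0$, $y_2=1$ and $z^{\mathrm{sym}}=1$ (a perfectly plausible representative for that district, e.g.\ the within-district median), the instance is $2\lambda$ agents at $0$ and $\lambda$ agents at $1$, so the ratio is $\SC(1)/\SC(0)=2$, and no contradiction with $\dist(\calM)<3$ is obtained. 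The correct mirror of the paper's base case moves the \emph{other} group --- the $3\lambda/4$ agents at $y_2$ --- to $z^{\mathrm{sym}}$ via \cref{lem:contsplemma}, yielding a district with $3\lambda/4$ agents at $z^{\mathrm{sym}}$ and $\lambda/4$ at $y_1$; combined with two unanimous districts at $y_1$ this gives $\SC(z^{\mathrm{sym}})=\frac{9\lambda}{4}(z^{\mathrm{sym}}-y_1)=3\,\SC(y_1)$, which is what forces the facility to $y_1$. The analogous care is needed in the inductive step, which you leave unspecified. With that repair (or with the reflection shortcut), your case analysis and the final computation $(3\mu+1)/(\mu+1)\to 3$ go through.
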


\begin{proof}
Assume towards a contradiction that there exists a strategyproof mechanism $\calM$ with distortion smaller than $3-\varepsilon$, for any $\varepsilon >0$. We start from an instance $\calI_1$ with a single district, in which $\lambda/2$ agents are positioned at $0$, and $\lambda/2$ agents are positioned at $1$. Let $y$ be the representative of the district (and thus the facility location). 
We will argue that it must be $y\geq 1$.
Assume otherwise that $y< 1$, and let $\calI_2$ be an instance with a single district that is obtained from the district of $\calI_1$ by moving the first $\lambda/2$ agents from $0$ to $y$ (the remaining $\lambda/2$ agents are still positioned at $1$). 
Note that if $y=0$, then $\calI_1 \equiv \calI_2$. 
Therefore, by \cref{lem:contsplemma}, the representative of (the district of) $\calI_2$ is $y$. 
Next, consider an instance $\calI_3$ with the following two districts:
\begin{itemize}
\item The first district is identical to the district of $\calI_2$: $\lambda/2$ agents are positioned at $y$, and $\lambda/2$ agents are positioned at $1$. So, the representative of this district is $y$.

\item In the second district, all $\lambda$ agents are positioned at $1$.
By unanimity, the representative of this district is $1$.
\end{itemize}
We have
\begin{align*}
\SC(y) = \frac{\lambda}{2}(1-y) + \lambda (1-y) = \frac{3\lambda}{2}(1-y)
\end{align*}
and
\begin{align*}
\SC(1) = \frac{\lambda}{2}(1-y).
\end{align*}
Recall that it is without loss of generality to assume that $\calM$ selects the left-most representative for any instance with two districts such that their representatives are difference. So, in our case, $\calM(\calI_3) = y$. However, this leads to $\dist(M) \geq \dist(\calI_3 | \calM) \geq 3$, a contradiction. We have now established that the representative of (the district of) $\calI_1$ must be $y \geq 1$. 
	
To complete the proof, consider an instance $\calI$ with the following $2\mu+1$ districts:
\begin{itemize}
\item In each of the first $\mu$ districts, all $\lambda$ agents are positioned at $0$.
By unanimity, the representative of these districts is $0$.

\item Each of the remaining $\mu+1$ districts is identical to the district of $\calI_1$: $\lambda/2$ agents are positioned at $0$ and $\lambda/2$ agents are positioned at $1$. By the above discussion, the representative of these districts is $y \geq 1$.
\end{itemize}
By \cref{lem:contabb}, it is $\calM(\calI) = y$. We have
\begin{align*}
\SC(0) = (\mu+1) \cdot \frac{\lambda}{2}
\end{align*}
and
\begin{align*}
\SC(y) = \mu \lambda y + (\mu+1)\cdot \frac{\lambda}{2}y + (\mu+1)\cdot\frac{\lambda}{2}(y-1) \geq (3\mu+1) \cdot \frac{\lambda}{2}.
\end{align*}
Therefore,
$$\dist(\calM) \geq \dist(\calI | \calM) \geq \frac{3\mu+1}{\mu+1}.$$
By choosing $\mu$ to be sufficiently large, the distortion of $\calM$ is at least $3-\varepsilon$ for any $\varepsilon >0$.
This concludes the proof.
\end{proof}

%%%%%%
%%%%%%

\section{Extensions and open problems}\label{sec:extensions}

\subsection{Asymmetric districts}
Our discussion in the previous sections revolves around the assumption that the districts are symmetric. In general however, the districts might be {\em asymmetric}, where every district $d \in \calD$ might consist of a different number $n_d$ of agents. It is not hard to observe that our mechanisms ($\dmm$ and $\ddm$) can be applied in the asymmetric case as well. In addition, the structure of their worst-case instances defined in Section~\ref{sec:discrete-worst-case} is exactly the same; the proof of the lemma {\em does not} require that $n_d = \lambda$ for every $d \in \calD$. Exploiting this, we can show the following result, which generalizes Theorems~\ref{thm:DMM-distortion}, \ref{thm:DDM-distortion} and~\ref{thm:continuous-mechanism}. 

\begin{theorem}\label{thm:asymmetric}
Let $\alpha = \frac{\max_{d \in \calD} n_d}{\min_{d \in \calD} n_d}$. The distortion of $\dmm$ is at most $3\alpha$ and the distortion of $\ddm$ is at most $7 \alpha$.
\end{theorem}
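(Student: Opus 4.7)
The plan is to adapt the proofs of \cref{thm:DMM-distortion,thm:DDM-distortion} to the asymmetric setting, essentially tracking a single factor of $\alpha$ that arises from the mismatch between counting districts and counting agents. First, I would verify (as noted in the paper) that the worst-case structural \cref{lem:discrete-structure} continues to hold in the asymmetric case, since its proof nowhere relies on the equality $n_d = \lambda$: all of the transformations (T1), (T2) and the supporting \cref{lem:sc-peaked,lem:optimality-preservation} are local to individual agents and districts. So it suffices to consider an arbitrary instance $\calI \in \wc(\calM)$ with $\calM(\calI)=w < o = \opt(\calI)$ and reprove the upper bounds.

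Next I would redo the three observations at the start of the proofs of \cref{thm:DMM-distortion,thm:DDM-distortion}, replacing every occurrence of $\lambda|\calD_z|$ by $N_z := \sum_{d \in \calD_z} n_d$. The first two (equations \eqref{eq:DMM-zz}, \eqref{eq:DMM-zo}, and \eqref{eq:DDM-zz}, \eqref{eq:DDM-zo}) go through unchanged because they rely on per-district optimality (for $\dmm$) or on a per-district median argument counting at least $n_d/2$ agents (for $\ddm$). The bound \eqref{eq:DMM-zw} on $\SC_z(w|\xx)$ similarly becomes $\SC_z(w|\xx) \leq (\delta(w,o)+\delta(z,o)) N_z$ by summing the per-agent bounds.

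The crucial change is in the counting inequality \eqref{eq:DMM-wz-size}. The median property gives $|\calD_w| \geq \sum_{z \neq w}|\calD_z|$ as before (voting is one district, one vote). Translating to agent counts yields
\[
\sum_{z \neq w} N_z \;\leq\; (\max_d n_d)\cdot \sum_{z\neq w}|\calD_z| \;\leq\; (\max_d n_d)\cdot |\calD_w| \;\leq\; \alpha\cdot N_w,
\]
which is exactly where the factor $\alpha$ enters. This is the only place where asymmetry interacts nontrivially with the analysis, and I expect it to be the main (and essentially only) obstacle: one must make sure the bound $\sum_{z\neq w}N_z \leq \alpha N_w$ is tight enough to carry the rest of the calculation.

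Finally, substituting the bounds into the social cost estimates exactly as in \cref{thm:DMM-distortion,thm:DDM-distortion}, one obtains for $\dmm$ that
\[
\SC(w|\xx) \leq \tfrac{1}{2}\delta(w,o) N_w + \alpha\,\delta(w,o) N_w + \sum_{z\neq w}\delta(z,o)N_z \leq \tfrac{3\alpha}{2}\sum_{z\in Z}\delta(z,o)N_z,
\]
using $\alpha \geq 1$, while $\SC(o|\xx) \geq \tfrac{1}{2}\sum_z \delta(z,o)N_z$, giving distortion at most $3\alpha$. The same substitution with the constant $\tfrac{3}{4}$ replacing $\tfrac{1}{2}$ yields $\SC(w|\xx) \leq \tfrac{7\alpha}{4}\sum_z\delta(z,o)N_z$ against $\SC(o|\xx) \geq \tfrac{1}{4}\sum_z\delta(z,o)N_z$, giving distortion at most $7\alpha$ for $\ddm$. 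For the continuous case, since $\cdm$ is an implementation of $\dmm$, the same argument gives distortion at most $3\alpha$.
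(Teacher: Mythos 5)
Your proposal is correct and follows essentially the same route as the paper: both observe that the worst-case characterization of \cref{lem:discrete-structure} is unaffected by asymmetry and then rerun the social-cost estimates of \cref{thm:DMM-distortion,thm:DDM-distortion} with $\lambda|\calD_z|$ replaced by agent counts. The only (harmless) difference is bookkeeping --- the paper sandwiches every group count between $|\calD_z|\min_{d}n_d$ and $|\calD_z|\max_{d}n_d$ so that $\alpha$ emerges as the ratio of the two final estimates, whereas you keep $N_z=\sum_{d\in\calD_z}n_d$ exact and inject $\alpha$ exactly once via $\sum_{z\neq w}N_z\le\alpha N_w$ --- and both yield the same bounds of $3\alpha$ and $7\alpha$.
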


Unfortunately, our lower bounds are tailor-made for the symmetric case, and thus it is an interesting open problem to extend them to the case of asymmetric districts. As $\dmm$ and $\ddm$ do not take into account the district sizes, it would also be interesting to see whether using this information could lead to mechanisms with improved distortion guarantees (besides the symmetric case).

\subsection{Proxy voting}
Another ingredient of our distributed setting is that the facility location is chosen from the set of district representatives, thus modeling scenarios in which decisions of independent groups are aggregated into a common outcome.  Alternatively, one could assume that the location can be chosen from the set of {\em all} alternative locations, in which case the district representatives are used as {\em proxies} in a district-based election (e.g. see \cite{anshelevich2021representative} and references therein). This captures situations where the alternatives are agents themselves, and the groups select as representatives those alternatives that more closely reflect their collective opinions.
Since the set of district representatives is a subset of the alternative locations, it is straightforward to see that our upper bounds also hold for this proxy model. Our lower bounds in the discrete setting extend as well, since there are only two alternative locations in the instances used in the proofs, and each of them is a representative for at least one district. Hence, our mechanisms are best possible for the proxy model in the discrete setting. 

\begin{corollary}
In the proxy model, the distortion of $\dmm$ is at most $3$ and the distortion of $\ddm$ is at most $7$. Furthermore, in the discrete setting, $\dmm$ and $\ddm$ are the best possible among general and strategyproof mechanisms. 
\end{corollary}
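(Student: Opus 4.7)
My plan is to argue that both the upper and lower bounds transfer from the standard model to the proxy model essentially verbatim, with the lower bounds requiring only a quick inspection of the instances constructed in \cref{sec:discrete-lower}.

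For the upper bounds, I would first observe that $\dmm$ and $\ddm$, as defined in \cref{alg:DMM}, always return the median of the district representatives, which by construction lies in $\calA$, so both mechanisms are also well-defined in the proxy model. Moreover, the benchmark $\opt(\calI) = \arg\min_{z \in \calA} \SC(z|\xx)$ is taken over all of $\calA$ (not just over the set of representatives), so it is identical in the standard and the proxy models; and the numerator $\SC(\calM(\calI)|\xx)$ is also identical, because the mechanisms produce the same output in both models. Hence the distortion of $\dmm$ and $\ddm$ in the proxy model coincides with the distortion bounded in \cref{thm:DMM-distortion} and \cref{thm:DDM-distortion}, giving $3$ and $7$ respectively.

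For the lower bounds, I would revisit the constructions in the proofs of \cref{thm:discrete-lower-unconditional} and \cref{thm:sp6lower} (together with the auxiliary \cref{lem:011winner-general} and \cref{lem:011winner}), and verify two properties of every instance appearing there: (i) the alternative set is $\calA = \{0,1\}$, forcing any mechanism's output to lie in $\{0,1\}$ and thus affording no additional freedom in the proxy model; and (ii) on every pivotal multi-district instance used in the distortion calculations, both $0$ and $1$ serve as the representative of at least one district, so the set of representatives coincides with $\calA$ and the two models select from the same pool. The only tie-breaking convention invoked in the proofs, namely that on two districts with representatives $0$ and $1$ the mechanism picks the leftmost, remains WLOG in the proxy model by the $0 \leftrightarrow 1$ symmetry. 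With these observations, each contradiction derivation in the chain of lemmas and theorems carries through word-for-word, yielding lower bounds of $3-\varepsilon$ and $7-\varepsilon$, and combining with the upper bounds certifies that $\dmm$ and $\ddm$ are best possible in the discrete proxy model.

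The main obstacle I expect is not mathematical but bookkeeping: one must re-read the existing lower bound arguments and confirm that nowhere do they implicitly exploit the fact that $\calM(\calI)$ must be a representative rather than an arbitrary element of $\calA$. Since on the instances used the set of representatives already equals $\calA$, any such implicit appeal is harmless, and no new combinatorial content is required to complete the proof.
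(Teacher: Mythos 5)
Your proposal is correct and follows essentially the same route as the paper, which likewise disposes of the corollary in two sentences: the upper bounds transfer because the set of district representatives is a subset of $\calA$ (so the mechanisms' outputs and the benchmark are unchanged), and the lower bounds transfer because the constructions use $\calA=\{0,1\}$ with both locations appearing as representatives. The one point worth double-checking --- which the paper also leaves implicit and which your ``word-for-word'' claim slightly glosses over --- is the single-district instances used to pin down representatives, where only one location is a representative and the proxy model does grant extra freedom; the argument still goes through there because the second stage sees only the representative multiset and must therefore agree with its (finite-distortion-forced) output on the corresponding unanimous district.
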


\noindent
In the continuous setting, our lower bounds do not immediately carry over, and it is an intriguing question to identify the exact bound for general and strategyproof mechanisms.

\subsection{Other directions} 
In terms of extending and generalizing our model, there is ample ground for future work. As is typical in the facility location literature, one could consider objectives different than the social cost, such as the {\em maximum cost} or the {\em sum of squares}. Again, the goal would be to show bounds on the distortion, and also design good strategyproof mechanisms. Other possible extensions could include multiple facilities, more general metric spaces, different cost functions, or studying the many different variants of the facility location problem in the distributed setting.

%%%%%%
%%%%%%

\bibliographystyle{plainnat}
\bibliography{references}

%%%%%%%%%%%%%
%%%%%%%%%%%%%

\appendix

\section{Omitted proofs}

\subsection*{Proof of \cref{lem:unanimous}}
Let $\calM$ be a distributed mechanism with finite distortion, and assume towards a contradiction that $\calM$ is {\em not} unanimous. 
This means that there exists an instance $\calI = (\xx,\calZ,\calD)$ such that all agents in some district $d \in \calD$ have the same position $z \in \calZ$, but $z_d \neq z$. By the definition of a mechanism (and in particular its the locality property), the same must be true for the instance $\calJ$ consisting of only district $d$ (and the same agent positions). In this case the social cost of the optimal position $z$ is $0$, whereas the social cost of the location $z_d$ chosen by the mechanism is strictly positive, resulting in an infinite distortion. 
\hfill $\qed$

\subsection*{Proof of \cref{lem:spiout}}
Assume towards a contradiction that $\calM$ is not strategyproof within districts. 
This means that there exists an instance $\calI=(\xx,\calD,\calZ)$ such that for some district $d \in \calD$, some agent $i \in \calN_d$ has a beneficial manipulation over the decision of the representative of the district. 
In particular, agent $i$ can report a position $\tilde{x}_i^{d}$ such that
\begin{itemize}
\item the representative of $d$ when the district position profile is  $\xx_d$ (corresponding to instance $\calI$) is some alternative $z \in \calZ$;

\item the representative of $d$ when the district position profile is  $(\tilde{x}_i^d, \xx_{-i,d})$ (corresponding to instance $\tilde{\calI}$, where the positions of all agents remain unchanged, except the position of agent $i$) is some alternative $y \in \calZ$; 

\item $x_i$ is closer to $y$ than to $z$. 
\end{itemize}
By the locality property of $\calM$, such a manipulation would be possible for any district that is identical to $d$. 
So, we can without loss of generality assume that $\calI$ consists of $k$ identical districts. 
Since $z$ is the representative of all districts in $\calI$, it must be the case that $\calM(\calI)=z$. 
Furthermore, since $\calI$ and $\tilde{\calI}$ differ only on the reported position of agent $i$ (that is, $x_i$ versus $\tilde{x}_i^d$) in one of the districts, the fact that $\calM$ is strategyproof implies that $\calM(\tilde{\calI}) = z$.
By repeating the above argument for every district, we build a sequence of instances starting from $\calI$, such that every two consecutive instances differ only on the reported position of a single agent who instead of $x_i$ reports $\tilde{x}_i^d$. Hence, the representative of the district containing this agent changes from $z$ to $y$, but the facility location chosen by $\calM$ remains $z$. 
Now, consider the last instance $\calJ$ in this sequence, which consists of $k$ districts that are identical to $d$ in $\tilde{\calI}$, and thus have $y$ as their representative. As a result, it must be $\calM(\calJ)=y$, which however contradicts the property $\calM(\calJ)=z$ that $\calJ$ inherits as an instance of the  sequence. 
\hfill $\qed$

%%%%%%%%%%%
%%%%%%%%%%%

\subsection*{Proof of \cref{thm:sp-mechanisms}}
For $\dmm$, it suffices to show that an agent can manipulate the mechanism within some district. 
Consider an instance with two alternative locations $z_0=0$ and $z_1=1$, respectively. The district consists of just two agents located at $x+\varepsilon$ and $1-x$, for some $x < 1/2$. If the agents report their positions truthfully, then the social cost of $z_0$ is $1+\varepsilon$, while that of $z_1$ is $1-\varepsilon$, and thus $\dmm$ will choose $z_1$ as the facility location. However, since the first agent prefers $z_0$, she can misreport her position as $0$. Then, the social cost of $z_0$ is $1-x$, while the social cost of $b$ is $1+x$, leading $\dmm$ to choose $z_0$.    

For $\ddm$, consider any instance $\calI = (\xx,\calD,\calA)$ and let $w=\ddm(\calI)$ be the facility location chosen by the mechanism. Let $i$ be any agent who belongs to some district $d \in \calD$. We will argue that if $i$ misreports her position as being $x$ instead of $x_i$, the distance between $x_i$ and the location $\ddm( (x,\xx_{-i}), \calD,\calA)$ chosen by the mechanism  will be at least the distance between $x_i$ and $w$. We distinguish between two cases:
\begin{itemize}
\item \underline{$i$ is not the median agent of $d$.} In order for $i$ to affect the outcome of the mechanism, she must first become the median of $d$. Let $j$ be the median agent of $d$ and assume that $x_i \leq x_j$; the case $x_i > x_j$ is similar. To become the median, agent $i$ has to report a position $x > x_j$. Then, either the outcome does not change, in which case the agent does not gain anything, or the median among the representatives changes from $w$ to some other location $z \in \calA$ which is the closest to $x$. However, this can only happen when $z > w \geq x_j$, meaning that the distance between $x_i$ and the facility location has increased. 

\item \underline{$i$ is the median agent of $d$.} If $z_d=w$, then $i$ has no incentive to misreport her true position, so assume $z_d < w$; the case $z_d > w$ is symmetric. Since $w$ is the median among all representatives, to affect the outcome of the mechanism, agent $i$ has to deviate to a position $x$ so that the closest alternative $z$ becomes the new median representative. Since this can only happen if $z > w$, the agent has nothing to gain by doing so.
\end{itemize} 
Hence, $\ddm$ is strategyproof.
\hfill $\qed$

\subsection*{Proof of \cref{lem:sc-peaked}}
Let $p \in (z,y)$ by any position. We partition the set of agents $S$ into the following four sets: $A = \{i: x_i \leq z \}$; $B = \{i: x_i \in (z,w] \}$; $\Gamma = \{i: x_i \in (w,y]\}$; $\Delta = \{i: x_i > y\}$. Now, observe that
\begin{itemize}
\item For every $i \in A$, $\delta(x_i,y) = \delta(x_i,z) + \delta(z,y)$ and $\delta(x_i,p) = \delta(x_i,z) + \delta(z,p)$;
\item For every $i \in B$, $\delta(x_i,y) = \delta(z,y)-\delta(x_i,z)$ and $\delta(x_i,p) =  \delta(z,p)-\delta(x_i,z)$;
\item For every $i \in \Gamma$, $\delta(x_i,y) = \delta(z,y)-\delta(x_i,z)$ and $\delta(x_i,p) = \delta(x_i,z) - \delta(z,p)$;
\item For every $i \in \Delta$, $\delta(x_i,y) = \delta(x_i,z) - \delta(z,y)$ and $\delta(x_i,p) = \delta(x_i,z) - \delta(z,p)$.
\end{itemize}
By using $\SC_T(z) = \sum_{i \in T} \delta(x_i,z)$ to denote the total distance of the agents in set $T \in \{S,A,B,\Gamma,\Delta\}$ from $z$, 
and using the above observations, we have
\begin{align*}
\SC_S(z | \xx) 
&= \SC_A(z) + \SC_B(z) + \SC_\Gamma(z) + \SC_\Delta(z), \\
\SC_S(p | \xx) 
&= \SC_A(z) - \SC_B(z) + \SC_\Gamma(z) + \SC_\Delta(z) +  ( |A| + |B| - |\Gamma| - |\Delta|) \delta(z,p), \\
\SC_S(y | \xx) 
&= \SC_A(z) - \SC_B(z) - \SC_\Gamma(z) + \SC_\Delta(z) +  ( |A| + |B| + |\Gamma| - |\Delta|) \delta(z,y). 
\end{align*}
Now, the assumption that $\SC_S(z | \xx) \leq \SC_S(y | \xx)$ implies that
\begin{align}\label{sc-peaked-assumption}
2\SC_B(z) + 2\SC_\Gamma(z) \leq ( |A| + |B| + |\Gamma| - |\Delta|) \delta(z,y).
\end{align}
On the other hand, we want to show that  $\SC(p | \xx) \leq \SC(y | \xx)$, or, equivalently, 
\begin{align*}
2\SC_\Gamma(z) &\leq (|A| + |B| + |\Gamma| - |\Delta|) \delta(z,y) + ( |\Gamma| - |A| - |B| + |\Delta|) \delta(z,p).
\end{align*}
Because of \eqref{sc-peaked-assumption}, the above expression (and thus our goal) is true in case $|\Gamma| \leq |A| + |B| - |\Delta|$. Otherwise, by rearranging terms in \eqref{sc-peaked-assumption}, it becomes
\begin{align*}
\SC_\Gamma(z) \leq ( |A| + |B| + |\Gamma| - |\Delta|) \delta(z,y) - 2\SC_B(z)  - \SC_\Gamma(z),
\end{align*}
and substituting it in the definition of $\SC(p | \xx)$, we obtain
\begin{align*}
\SC_S(p | \xx) 
&= \SC_A(z) - \SC_B(z) + \SC_\Gamma(z) + \SC_\Delta(z) +  ( |A| + |B| - |\Gamma| - |\Delta|) \delta(z,p) \\
&\leq \SC_A(z) - \SC_B(z) - \SC_\Gamma(z) + \SC_\Delta(z) + ( |A| + |B| + |\Gamma| - |\Delta|) \delta(z,y) \\
&\quad - 2\SC_B(z) +  ( |A| + |B| - |\Gamma| - |\Delta|) \delta(z,w) \\
&= \SC_S(y | \xx) - 2\SC_B(z) +  ( |A| + |B| - |\Gamma| - |\Delta|) \delta(z,w) \\
&\leq  \SC_S(y | \xx),
\end{align*}
where the second equality follows by the definition of $\SC(y | \xx)$ above, and the last inequality follows by the assumption that 
$|A| + |B| - |\Gamma| - |\Delta| < 0$.
\hfill $\qed$

\subsection*{Proof of \cref{lem:optimality-preservation}}
Since the two moves are symmetric, it suffices to show the lemma for the case where an agent $i$ with $x_i < p \leq z$ is moved to $p$. Consider any alternative location $y \in \calA$. By the optimallity of $z$ under $\xx$, we have that
\begin{align}
& \SC(z | \xx) \leq \SC(y | \xx) \nonumber \\
&\Leftrightarrow \delta(x_i,z) + \sum_{j \neq i} \delta(x_i,z) \leq \delta(x_i,y) + \sum_{j \neq i} \delta(x_i,p) \nonumber \\
&\Leftrightarrow \delta(x_i,z) - \delta(x_i,y)  \leq \sum_{j \neq i} \delta(x_i,p) - \sum_{j \neq i} \delta(x_i,z). \label{eq:optimality-condition}
\end{align}
If we show that
\begin{align*}
\delta(p,z) - \delta(p,y) \leq \delta(x_i,z) -  \delta(x_i,y),
\end{align*}
then, by \eqref{eq:optimality-condition}, we will obtain
\begin{align*}
&\delta(p,z) - \delta(p,y) \leq \sum_{j \neq i} \delta(x_i,p) - \sum_{j \neq i} \delta(x_i,z) \\
&\Leftrightarrow \delta(p,z) + \sum_{j \neq i} \delta(x_i,z) \leq \delta(p,y) + \sum_{j \neq i} \delta(x_i,p) \\
&\Leftrightarrow \SC(z | \pp) \leq \SC(y | \pp).
\end{align*}

\noindent
Now, let $\Delta_{iz} = \delta(x_i,z) - \delta(p,z) = \delta(x_i,p) > 0$ and $\Delta_{iy} = \delta(x_i,y) - \delta(p,y)$. We will show that $\Delta_{iz} \geq \Delta_{iy}$, which is equivalent to the desired inequality. 
\begin{itemize}
\item 
If $y \leq x_i$, we obviously have that $\Delta_{iy} < 0 < \Delta_{iz}$. 

\item
If $y \in (x_i, p)$, we have $\Delta_{iz} = \delta(x_i,p) = \delta(x_i,y) + \delta(y,p) > \delta(x_i,y) - \delta(p,y) = \Delta_{iy} $.

\item 
If $y \geq p$, the decrease is exactly the same: $\Delta_{iy} = \delta(x_i,p) = \Delta_{iz}$. 
\end{itemize}
This completes the proof.
\hfill $\qed$

%%%%%%%%%%%
%%%%%%%%%%%

\subsection*{Proof of \cref{lem:spordinal}}
Since we focus only on one district, let us enumerate the agents therein as $\calN_d = [\lambda] = \{1, ..., \lambda\}$.
We consider a sequence of district position profiles $\{ \xx_d^{(0)}, \xx_d^{(1)}, ..., \xx_d^{(\lambda)} \}$ for district $d$ such that
\begin{itemize}
\item $\xx_d^{(0)} = \xx_d$; 
\item $\xx_d^{(j)} = (y_j, \xx_{-j,d}^{(j-1)})$, for  $j \in [\lambda-1]$; 
\item $\xx_d^{(\lambda)} = \yy_d$.
\end{itemize}
That is, profile $\xx_d^{(j)}$ is obtained from profile $\xx_d^{(j-1)}$ by changing only the position of agent $j$ from $x_j$ to $y_j$ (even if $x_j = y_j$). 
	
Assume towards a contradiction that $\calM$ outputs a different representative under $\xx_d$ and $\yy_d$; let $\alpha$ and $\beta$ be those locations, respectively. This means that there exists $j \in [\lambda]$ such that the representative under profile $\xx_d^{(j-1)}$ is $\alpha$, while the representative under profile $\xx_d^{(j)}$ is $\beta$. Consider the corresponding agent $j$ whose reported positions $x_j$ and $y_j$ differentiate the two profiles $\xx_d^{(j-1)}$ and $\xx_d^{(j)}$. Note that since both profiles induce the same ordering of alternatives, agent $j$ prefers the same location in both profiles over the other. 
\begin{itemize}
\item If agent $j$ prefers $\beta$ over $\alpha$, then assume that the true district profile is $\xx_d^{(j-1)}$. 
By reporting $y_j$ instead of $x_j$, agent $j$ can manipulate $\calM$ to choose $\beta$ as the representative of $d$ instead of $\alpha$.

\item If agent $j$ prefers $\alpha$ over $\beta$, then assume that the true district profile is $\xx_d^{(j)}$.
By reporting $x_j$ instead of $y_j$, agent $j$ can manipulate $\calM$ to choose $\alpha$ as the representative of $d$ instead of $\beta$.
\end{itemize}
Therefore, in either case, agent $j$ is able to manipulate $\calM$ within $d$, which contradicts the fact that $\calM$ is strategyproof within districts by \cref{lem:spiout}.
\hfill $\qed$

\subsection*{Proof of \cref{thm:continuous-mechanism}}
We follow the steps used in \cref{sec:discrete}. We first establish that $\cdm$ is strategyproof in \cref{thm:CDM-sp} using arguments similar to those in the proof of \cref{thm:sp-mechanisms}. Then, similarly to \cref{lem:discrete-structure}, in \cref{lem:continuous-structure}, we characterize the worst-case instances of the mechanism in terms of distortion by showing that every instance can be transformed into another one satisfying two particular properties. Finally, exploiting our characterization we show the upper bound of $3$ on the distortion of the mechanism in \cref{thm:CDM-distortion}, following the roadmap in the proof of \cref{thm:DMM-distortion}.

\begin{theorem}\label{thm:CDM-sp}
$\cdm$ is strategyproof.
\end{theorem}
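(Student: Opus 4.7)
The plan is to mimic the proof of \cref{thm:sp-mechanisms} for $\ddm$ in the discrete setting, leveraging two standard properties of the median operator on the real line: (i) \emph{single-peaked strategyproofness} --- an agent's unilateral deviation can only move the median away from her true position or leave it unchanged; and (ii) \emph{monotonicity} --- if a single input weakly increases (resp.\ weakly decreases), the median weakly increases (resp.\ weakly decreases). Since $\cdm$ is a median-of-medians mechanism (the representative of each district is the position of its median agent, and the facility is the median representative), strategyproofness should follow by applying (i) and (ii) at the two aggregation levels in turn.

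Fix any instance $\calI$, agent $i \in \calN_d$, and let $z_d$, $w$ denote the representative of $d$ and the facility location chosen by $\cdm$ on $\calI$; let $z'_d$, $w'$ denote the analogous quantities when $i$ deviates to some $x'_i \neq x_i$. By property (i) applied within district $d$, if $x_i \leq z_d$ then $z'_d \geq z_d$, and symmetrically if $x_i \geq z_d$ then $z'_d \leq z_d$. Hence $z'_d$ never lies strictly between $x_i$ and $z_d$: any change in the representative moves it (weakly) away from $x_i$.

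By symmetry it suffices to consider $x_i \leq z_d$, so $z'_d \geq z_d$. I would then split on the position of $w$. If $x_i \leq w$, property (ii) applied to the outer median (only the input $z_d$ is affected, and it weakly increases) gives $w' \geq w$, so $|x_i - w'| \geq |x_i - w|$ and $i$ does not gain. If instead $w < x_i \leq z_d$, then $z_d > w$, and since $z'_d \geq z_d$ the modified representative stays strictly on the same (right) side of $w$; altering an input that is already strictly above the median to another value strictly above the median leaves the median of the representatives unchanged, so $w' = w$ and $i$ is again indifferent. The case $x_i \geq z_d$ is handled symmetrically.

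The main subtlety I expect is tie-breaking. $\cdm$ must commit to a rule for selecting the median agent (e.g.\ the leftmost median when the district size is even) and for selecting the median representative; one must verify that properties (i) and (ii) continue to hold under the chosen tie-breaking convention, so that the boundary cases (where a small deviation can make the median jump because of a tie) do not allow a beneficial manipulation. This is a standard but necessary check, and it is where the argument must be written out with care before the two-level case analysis above can be invoked cleanly.
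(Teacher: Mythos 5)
Your overall strategy is the same as the paper's: treat $\cdm$ as a median-of-medians and argue via a two-level case analysis that no unilateral deviation can pull the facility closer to the deviating agent. However, there is a genuine gap in the key claim you derive from ``property (i)'': the implication ``$x_i \leq z_d \Rightarrow z'_d \geq z_d$'' is false exactly when $x_i = z_d$, i.e., when agent $i$ \emph{is} the median agent of her district. In $\cdm$ the representative is the position of the median agent, so this agent always satisfies $x_i = z_d$; by reporting a position to the left of $x_i$ she can move the representative strictly to the left (e.g., positions $\{0,5,10\}$ with the median agent at $5$ reporting $1$ gives $z'_d = 1 < 5 = z_d$). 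Indeed, your claim applied with both inequalities $x_i \le z_d$ and $x_i \ge z_d$ would force $z'_d = z_d$ for the median agent, which is absurd since she is precisely the agent who controls the representative. What property (i) actually gives is that $z'_d$ is never strictly \emph{closer} to $x_i$ than $z_d$, which says nothing about the direction of movement when $x_i = z_d$. Both branches of your subsequent case analysis invoke $z'_d \ge z_d$, so as written the argument simply does not cover the district-median agent --- the most important agent for strategyproofness, and the one the paper's proof (like its proof of \cref{thm:sp-mechanisms}) treats as a separate case.

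The gap is fixable within your framework, but it requires a new argument rather than a tie-breaking check: when $x_i = z_d$, either $z_d = w$ (so agent $i$ is at the facility and cannot gain), or, say, $z_d < w$, in which case any deviation either keeps $z'_d < w$ (leaving the outer median at $w$) or pushes $z'_d$ to at least $w$ (which can only weakly increase the outer median, moving it further from $x_i < w$); the case $z_d > w$ is symmetric. Your remaining analysis for non-median agents ($x_i$ strictly on one side of $z_d$) is correct, and your closing remark about tie-breaking is a legitimate but secondary concern; the missing case above is the substantive one.
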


\begin{proof}
Consider any instance $\calI = (\xx,\calD)$, and let $w=\cdm(\calI)$.
Let $i$ be any agent in some district $d \in \calD$, and denote by $w_x = \cdm( (x,\xx_{-i}), \calD)$ the facility location chosen by the mechanism when $i$ unilaterally misreports her position as $x \neq x_i$. We will argue that $\delta(x_i,w) \leq \delta(x,w_x)$ for any $x \in \RR$.  We distinguish between two cases:
\begin{itemize}
\item \underline{$i$ is not the median agent of $d$.} In order for $i$ to affect the outcome, she must become the median of $d$. 
Let $j$ be the median agent of $d$ and assume that $x_i \leq x_j$ (the case $x_i > x_j$ is symmetric). 
To become the median, agent $i$ has to report a position $x > x_j$, which is going to be the new representative of $d$.  
Then, either $w_x = w$, or the median among the representatives becomes $w_x = x$. 
However, the latter can only happen when $x > w \geq x_j \geq x_i$ , and thus we overall have that $\delta(x_i,w) \leq \delta(x,w_x)$.

\item \underline{$i$ is the median agent of $d$.} If $w=x_i$, then $i$ has no incentive to misreport her true position, so let us assume that $z_d = x_i < w$ (the case $x_i > w$ is symmetric). Since $w$ is the median among all representatives, to affect the outcome of the mechanism, agent $i$ has to deviate to a position $x$ so that $w_x = x$. Since this can only happen if $x > w$, it will then be $\delta(x_i,w) < \delta(x,w_x)$.
\end{itemize} 
Hence, $\cdm$ is strategyproof.
\end{proof}

Next, we focus on bounding the distortion of the mechanism. Similarly to the analysis of $\dmm$ and $\ddm$ in the discrete setting, we first  characterize the structure of worst-case instances for $\cdm$. It turns out that the worst-case instance for $\cdm$ have the exact same structure as those for $\dmm$ and $\ddm$. In particular, let $\wc(\cdm)$ be the class of instances $\calI=(\xx,\calD)$ such that
\begin{itemize}
\item[(P1)]
For every agent $i \in \calN$, 
\begin{itemize}
\item $x_i \geq \cdm(\calI)$ if $\cdm(\calI) < \opt(\calI)$, or 
\item $x_i \leq \cdm(\calI)$ if $\cdm(\calI) > \opt(\calI)$.
\end{itemize}

\item[(P2)]
For every $z \in \RR$ which is representative for a set of districts $\calD_z \neq \varnothing$,  
the positions of all agents in the districts of $\calD_z$ are in the interval defined by $z$ and $\opt(\calI)$.
\end{itemize}
We now have the following characterization lemma.

\begin{lemma}\label{lem:continuous-structure}
The distortion of $\cdm$ is equal to 
$$\sup_{\calI \in \wc(\cdm)} \dist(\calI | \cdm).$$
\end{lemma}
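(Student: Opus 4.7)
The plan is to mimic the proof of Lemma~\ref{lem:discrete-structure} almost verbatim, substituting the continuous median rule for the two discrete aggregation rules. Fix an instance $\calJ \not\in \wc(\cdm)$; by symmetry assume $\cdm(\calJ) = w < o = \opt(\calJ)$. I will apply the two transformations (T1) and (T2) defined in the proof of Lemma~\ref{lem:discrete-structure} in sequence, obtaining a sequence of intermediate position profiles culminating in some $\calI \in \wc(\cdm)$. Exactly as before, it suffices to maintain the three invariants: (a) $\cdm$ still returns $w$, (b) the optimum is still $o$, and (c) between two consecutive intermediate profiles $\xx,\yy$ we have $\SC(w|\xx)/\SC(o|\xx)\le \SC(w|\yy)/\SC(o|\yy)$.

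Invariants (b) and (c) follow from exactly the same arguments as in Lemma~\ref{lem:discrete-structure}: Lemma~\ref{lem:optimality-preservation} preserves the optimizer under both transformations (its statement does not depend on the alternative set being discrete, because the median of any agent set is optimal in $\RR$), and the algebraic manipulation together with inequality~\eqref{ineq} gives invariant~(c) without any change. So the only genuine work is (a), the stability of $w$ under $\cdm$.

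For (a) the key observation is that, since the representative of each district under $\cdm$ is the median agent within that district, the behaviour under (T1) and (T2) closely mirrors the $\ddm$ analysis in Lemma~\ref{lem:discrete-structure}. For (T1), consider moving an agent $i$ with $x_i<w$ in a district $d$ with representative $z_d$. If $z_d>w$ then $i$ is not the median of $d$ (her initial position is strictly to the left of $w<z_d$, and the median is at $z_d$), so moving her to $w$ leaves the median and therefore $z_d$ unchanged. If $z_d\le w$ the representative may increase from $z_d$ to some $z_d'\le w$ (moving a left agent rightward can only shift the median rightward by at most to $w$); since $w$ was the median representative before the move, adding weight to the left of $w$ or at $w$ keeps $w$ as the median representative. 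For (T2), consider an agent $i$ in a district whose representative is $z$, with $x_i\notin[z,o]$ (assume $z\le o$; the other case is symmetric). If $x_i<z$, then by an analogous argument the median either remains unchanged or moves to $z$, so the representative stays $z$; if $x_i>o\ge z$, then $i$ lies strictly above the median $z$, so moving her down to $o$ leaves the median unchanged.

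The main subtlety is verifying this medial stability carefully in the boundary cases where $i$ is exactly the median agent or where ties occur at $w$ or at $z$; here one uses the tie-breaking convention for the median as in the definition of \textsc{Median} in \cref{alg:DMM}. Once (a) is established, inductively applying (T1) and (T2) delivers an instance $\calI\in\wc(\cdm)$ with $\dist(\calJ|\cdm)\le\dist(\calI|\cdm)$, completing the proof.
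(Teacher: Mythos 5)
Your proposal is correct and follows essentially the same route as the paper: the paper likewise reuses the (T1)/(T2) transformations and the three invariants from Lemma~\ref{lem:discrete-structure}, observes that the optimality and distortion-monotonicity arguments carry over unchanged to $\calZ=\RR$, and only re-verifies that the chosen location stays at $w$, using the same median-stability reasoning you give (in fact the paper's version of this last step is even shorter, since in the continuous setting $x_i\neq z$ immediately rules out $i$ being the median agent in the (T2) cases).
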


\begin{proof}
We follow the reasoning used in the proof of \cref{lem:discrete-structure} for the mechanisms in the discrete setting. 
We transform every instance $\calJ \not\in \wc(\cdm)$ with $\calM(\calJ) = w < o = \opt(\calJ)$ to an instance $\calI \in \wc(\cdm)$ as follows:
\begin{itemize}
\item[(T1)] We move every agent with position strictly smaller than $w$ to $w$.

\item[(T2)] For every location $z$ which is representative for a non-empty set of districts in $\calJ$, we move every agent therein whose position does not lie in the interval defined by $z$ and $o$ to the boundaries of this interval.
\end{itemize}
We will argue that the sequence of instances obtained by the above transformations satisfy the following three properties, which by induction imply that $\dist(\calJ | \cdm) \leq \dist(\calI | \cdm)$:
\begin{itemize}
\item The facility location chosen by the mechanism is always $w$;
\item The optimal location is always $o$;
\item For any two consecutive intermediate instances with position profiles $\xx$ and $\yy$, $\frac{\SC(w|\xx)}{\SC(o|\xx)} \leq \frac{\SC(w|\yy)}{\SC(o|\yy)}$.
\end{itemize}
The proofs of the second and third properties are similar to the proofs of the corresponding properties in \cref{lem:discrete-structure} with the only difference that the set of alternatives is $\RR$ instead of $\calA$. 
Furthermore, the proof of the first property resembles the proof of the corresponding property for $\ddm$ in \cref{lem:discrete-structure}. For completeness, since the aggregation step of $\cdm$ in the districts is a bit different than that of $\ddm$, we present a self-contained proof for the first property, which is overall much simpler.

For (T1), consider any instance in the sequence with position profile $\xx$ such that there exists a district $d \in \calD$ with representative $z_d=z$, which contains some agent $i$ with position $x_i < w$ who is moved to $w$. 
We distinguish between two cases:
\begin{itemize}
\item $z > w$. Clearly, agent $i$ is not the median in $d$, and thus moving her to $w$ will not change the representative of $d$ nor the outcome of the mechanism.

\item $z \leq w$. By moving agent $i$ to $w$, the representative of $d$ can change from $z$ to $w$ if $i$ becomes the median agent. However, the location chosen by $\ddm$ will remain the same, since $w$ will remain the median representative. 
\end{itemize}
For (T2), consider any instance such that there exists an alternative location $z$ which is representative for a non-empty set of districts $\calD_z$, and a district $d \in \calD_z$ contains  an agent $i$ with position $x_i$ outside the interval defined by $z$ and $o$. Since $x_i \neq z$, $i$ clearly cannot be the median agent of $d$, and thus moving $i$ to either $z$ or to $o$ will not change the representative of $d$ nor the outcome of $\cdm$.
\end{proof}

Given the above characterization lemma about the worst-case instances, we are now ready to complete the proof of the theorem by bounding the distortion of $\cdm$. Similarly to the notation used in \cref{sec:discrete-distortion}, let $\calD_z$ be the set of districts for which $z \in \RR$ is the representative, let $Z = \{z \in \RR: \calD_z \neq \varnothing \}$ be the set of all alternative locations which are representative for at least one district, and for every $z \in Z$, $y \in \RR$ let 
\begin{align*}
\SC_z(y | \xx) = \sum_{d \in \calD_z} \sum_{i \in \calN_d} \delta(x_i, y)
\end{align*}
be the total distance of all the agents in the districts of $\calD_z$ from $y$. Also, recall that each district contains exactly $\lambda$ agents. The arguments used in the proof of our next statement follow closely those used in the proof of \cref{thm:DMM-distortion}.

\begin{theorem}\label{thm:CDM-distortion}
The distortion of $\cdm$ is at most $3$.
\end{theorem}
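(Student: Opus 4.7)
The plan is to mimic the proof of \cref{thm:DMM-distortion} for $\dmm$ almost verbatim, leveraging the fact that in the continuous setting the median of a district---which is exactly what $\cdm$ selects as its representative---is a minimizer of the social cost within the district over $\RR$. By \cref{lem:continuous-structure}, it suffices to bound $\dist(\calI \mid \cdm)$ for $\calI = (\xx,\calD) \in \wc(\cdm)$. By symmetry, I would restrict to the case $w := \cdm(\calI) < o := \opt(\calI)$, and reuse the notation $\calD_z, Z, \SC_z(\cdot\mid\xx)$ from \cref{sec:discrete-distortion}.

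The first step is to extract, for every $z \in Z$, the pair of inequalities
\begin{align*}
\SC_z(z\mid\xx) \leq \tfrac{1}{2}\delta(z,o)\cdot \lambda |\calD_z|
\quad\text{and}\quad
\SC_z(o\mid\xx) \geq \tfrac{1}{2}\delta(z,o)\cdot \lambda |\calD_z|.
\end{align*}
Both follow exactly as in the proof of \cref{thm:DMM-distortion}: property (P2) forces every agent of a district in $\calD_z$ to lie between $z$ and $o$, so summing $\delta(z,o)=\delta(x_i,z)+\delta(x_i,o)$ over the $\lambda|\calD_z|$ agents gives $\SC_z(z\mid\xx)+\SC_z(o\mid\xx)=\delta(z,o)\cdot\lambda|\calD_z|$; and since the median on $\RR$ minimizes total absolute deviation, $z$ minimizes the within-district social cost, so in particular $\SC_z(z\mid\xx)\le\SC_z(o\mid\xx)$.

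The second step is the bound $\SC_z(w\mid\xx) \leq (\delta(w,o)+\delta(z,o))\cdot \lambda|\calD_z|$ for every $z \in Z\setminus\{w\}$, together with $|\calD_w| \geq \sum_{z\in Z\setminus\{w\}} |\calD_z|$. The first is obtained by case-analysing $z\le o$ and $z>o$ as in \cref{thm:DMM-distortion}, using (P1) (which forces $z>w$) and (P2). The second is immediate from (P1) plus the fact that $w$ is the chosen median representative, so $w$ must represent strictly more than half of the districts.

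The final step is the same arithmetic assembly: split $\SC(w\mid\xx) = \SC_w(w\mid\xx)+\sum_{z\neq w}\SC_z(w\mid\xx)$, plug in the bounds, use $|\calD_w|\ge\sum_{z\neq w}|\calD_z|$ to absorb the extra $\delta(w,o)\cdot\lambda\sum_{z\neq w}|\calD_z|$ term into $\tfrac{3}{2}\delta(w,o)\cdot\lambda|\calD_w|$, and conclude $\SC(w\mid\xx) \le \tfrac{3}{2}\sum_{z\in Z}\delta(z,o)\cdot\lambda|\calD_z|$; combined with $\SC(o\mid\xx) \ge \tfrac{1}{2}\sum_{z\in Z}\delta(z,o)\cdot\lambda|\calD_z|$ from step one, this yields $\dist(\calI\mid\cdm)\le 3$. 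There is no real obstacle here: the only ``new'' ingredient relative to \cref{thm:DMM-distortion} is the observation that on $\RR$ the median is a social-cost minimizer, which is precisely what makes $\cdm$ an implementation of $\dmm$ in the continuous setting.
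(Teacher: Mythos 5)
Your proposal is correct and matches the paper's own proof of \cref{thm:CDM-distortion} essentially step for step: both invoke the worst-case characterization of \cref{lem:continuous-structure}, derive the same four inequalities (the analogues of \eqref{eq:DMM-zz}--\eqref{eq:DMM-wz-size}) from (P1), (P2) and the fact that the median minimizes total absolute deviation on $\RR$, and assemble them with the same arithmetic to get the bound of $3$. No differences worth noting.
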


\begin{proof}
Consider any instance $\calI = (\xx,\calD) \in \wc(\cdm)$.
We make the following observations:
\begin{itemize}
\item 
Let $z \in \RR$ be any location which is representative for the set of districts $\calD_z \neq \varnothing$. 
By property (P2), for any district $d \in \calD_z$, we have that $\delta(z,o) = \delta(x_i,z) + \delta(x_i,o)$ for every agent $i \in \calN_d$. 
Hence, by summing over all agents in the districts of $\calD_z$, we have 
\begin{align*}
\SC_z(z | \xx)  + \SC_z(o | \xx) = \delta(z,o) \cdot \lambda |\calD_z|.
\end{align*}
As the representative of each district $d \in \calD_z$, $z$ is the position of the median agent in $d$ and thus minimizes the total distance of the agents in $d$, that is, $\sum_{i \in \calN_d} \delta(x_i,z) \leq \sum_{i \in \calN_d} \delta(x_i,o)$. Hence, by summing over all districts in $\calD_z$, we have that
\begin{align*}
\SC_z(z | \xx) \leq \SC_z(o | \xx).
\end{align*} 
By combining the above two expressions, we obtain 
\begin{align}
\SC_z(z | \xx)  &=  \frac{1}{2} \delta(z,o) \cdot \lambda |\calD_z|. \label{eq:CDM-zz}
\end{align}
and 
\begin{align}
\SC_z(o | \xx) &\geq \frac{1}{2} \delta(z,o) \cdot \lambda |\calD_z|. \label{eq:CDM-zo}
\end{align}

\item   
Consider any alternative location $z \in Z \setminus\{w\}$. 
By property (P1), we have that $w$ is the left-most representative, and thus $z > w$.
By (P2), we have any agent $i$ in a district of $\calD_z$ lies in the interval defined by $z$ and $o$, which means that 
\begin{itemize}
\item $\delta(x_i,w) \leq \delta(w,o)$ if $z \leq o$, and 
\item $\delta(x_i,w) \leq \delta(w,z) = \delta(w,o) + \delta(z,o)$ if $z > o$. 
\end{itemize}
Since $\delta(z,o) \geq 0$, by summing over all the agents in the districts of $\calD_z$, we obtain that
\begin{align}\label{eq:CDM-zw}
\SC_z(w | \xx) &\leq \bigg( \delta(w,o) + \delta(z,o) \bigg) \cdot \lambda |\calD_z|.
\end{align}

\item
Since $w$ is the left-most representative (implied by (P1)) and the median among all representatives (which is why it is selected by the mechanism), it must be the case that $w$ is the representative of more than half of the districts in $Z$, and thus 
\begin{align}\label{eq:CDM-wz-size}
|\calD_w| \geq \sum_{z \in Z\setminus\{w\}} |\calD_z|.
\end{align} 
\end{itemize}

Given the above observations, we will now upper-bound the social cost of $w$ and lower-bound the social cost of $o$. 
By the definition of $\SC(w | \xx)$, and by applying \eqref{eq:CDM-zz} for $y=w$ and \eqref{eq:CDM-zw} for $z \neq w$, 
we obtain
\begin{align*}
\SC(w|\xx) &= \SC_w(w | \xx) + \sum_{z \in Z \setminus\{w\}} \SC_z(w|\xx) \\
&\leq \frac{1}{2} \delta(w,o) \cdot \lambda |\calD_w| 
+ \sum_{z \in Z \setminus\{w\}} \bigg( \delta(w,o) + \delta(z,o) \bigg) \cdot \lambda |\calD_z|  \\
&= \frac{1}{2} \delta(w,o) \cdot \lambda |\calD_w| 
+ \delta(w,o) \cdot \lambda \sum_{z \in Z \setminus\{w\}} |\calD_z| 
+  \sum_{z \in Z \setminus\{w\}}  \delta(z,o) \cdot \lambda  |\calD_z|. 
\end{align*}
By \eqref{eq:CDM-wz-size}, we further have that
\begin{align}
\SC(w | \xx) &\leq \frac{3}{2} \delta(w,o) \cdot \lambda |\calD_w| 
+  \sum_{z \in Z \setminus\{w\}}  \delta(z,o) \cdot \lambda  |\calD_z|  \nonumber \\
&\leq \frac{3}{2} \sum_{z \in Z}  \delta(z,o) \cdot \lambda  |\calD_z|.  \label{eq:CDM-mech}
\end{align}
On the other hand, by the definition of $\SC(o | \xx)$ and by applying \eqref{eq:CDM-zo}, we can lower-bound the optimal social cost as follows:
\begin{align}\label{eq:CDM-opt}
\SC(o | \xx) &= \sum_{z \in Z} \SC_z(o|\xx) \geq \frac{1}{2} \sum_{z \in Z}  \delta(z,o) \cdot \lambda  |\calD_z|.
\end{align}
Consequently, by combining \eqref{eq:CDM-mech} and \eqref{eq:CDM-opt}, the distortion of the instance $\calI$ subject to $\cdm$ is
\begin{align*}
\dist(\calI | \cdm) = \frac{\SC(w | \xx)}{\SC(o | \xx)} 
\leq 3.
\end{align*}
Since $\calI$ is an arbitrary instance of $\wc(\cdm)$, \cref{lem:continuous-structure} implies $\dist(\cdm) \leq 3$.
\end{proof}

\subsection*{Proof of \cref{lem:contsplemma}}
Since $\calM$ is strategyproof, by \cref{lem:spiout}, it is also strategyproof within districts. 
We enumerate the agents in $S$ as $\{1, ...,|S|\}$, and consider a sequence of district position profiles  $\{ \xx_d^{(1)}, \xx_d^{(2)}, ..., \xx_d^{(|S|)} \}$ such that $\xx_d^{(\ell)}$ is the same as $\xx$, with the exception that the first $\ell \in [|S|]$ agents of $S$ are now positioned at $y$. 
Hence, $\xx_d^{(|S|)} = \yy_d$. We will argue that for every district position profile $\xx_d^{(\ell)}$ in the sequence, the representative has to be $y$. 

First, consider $\xx_d^{(1)}$ and the corresponding agent $j \in \calN_d$ (agent $1$ in $S$), who is moved from $x_j$ to $y_j = y$. Suppose that $y$ is not the representative of $d$ under $\xx_d^{(1)}$. Then, if the true district position profile were $\xx^{(1)}$, agent $j$ would have incentive to misreport her position as being $x_j$ instead of $y_j=y$, so that the district position profile becomes $\xx$ and the representative of $d$ changes to her true position $y$, thus violating strategyproofness within districts. Using this, we can now easily show the statement by induction. In particular, assuming that $y$ is the representative of $d$ under district position profile $\xx_d^{(\ell-1)}$ for every $\ell \in [|S|]$, we can apply the same argument for the corresponding agent who is moved to obtain $\xx_d^{(\ell)}$.
\hfill $\qed$

\subsection*{Proof of \cref{thm:asymmetric}}
As we have already said, the structure of worst-case equilibria is the same as in the symmetric case. Therefore, the proof follows by the very same arguments used in the proofs of Theorems~\ref{thm:DMM-distortion}, \ref{thm:DDM-distortion} and \ref{thm:continuous-mechanism}. The main difference is that for every district $d \in \calD_z$, $\lambda$ (which is the size of every district in the symmetric case) will now be substituted by $n_d$. As a result, the inequalities will include the term 
$\sum_{d \in \calD_z} n_d$ instead of the term $\lambda |\calD_z|$. So, to obtain the desired bound on the distortion, we will use the fact that 
$$|\calD_z| \cdot \min_{d \in \calD_z} n_d \leq \sum_{d \in \calD_z} n_d \leq |\calD_z| \cdot \max_{d \in \calD_z}.$$ 

Let us now demonstrate exactly how the inequalities used in the proofs will change. We do this for Theorem~\ref{thm:DMM-distortion} which bounds the distortion of $\dmm$ in the discrete case; the inequalities used in the proofs of Theorems~\ref{thm:DMM-distortion} and \ref{thm:continuous-mechanism} change similarly. We have:
\begin{itemize}
\item Inequality \eqref{eq:DMM-zz} becomes
\begin{align}
\SC_z(z | \xx)  
&\leq  \frac{1}{2} \delta(z,o) \cdot \sum_{d \in \calD_z} n_d 
\leq  \frac{1}{2} \delta(z,o) \cdot |\calD_z| \cdot \max_{d \in \calD_z}. \label{eq:DMM-zz-asymmetric}
\end{align}

\item Inequality \eqref{eq:DMM-zo} becomes
\begin{align}
\SC_z(o | \xx) 
&\geq \frac{1}{2} \delta(z,o) \cdot \sum_{d \in \calD_z} n_d 
\geq \frac{1}{2} \delta(z,o) \cdot |\calD_z| \cdot \min_{d \in \calD_z} n_d. \label{eq:DMM-zo-asymmetric}
\end{align}

\item Inequality \eqref{eq:DMM-zw} becomes
\begin{align}
\SC_z(w | \xx) 
&\leq \bigg( \delta(w,o) + \delta(z,o) \bigg) \cdot \sum_{d \in \calD_z} n_d 
\leq \bigg( \delta(w,o) + \delta(z,o) \bigg) \cdot |\calD_z| \cdot \max_{d \in \calD_z} \label{eq:DMM-zw-asymmetric}
\end{align}

\item Inequality \eqref{eq:DMM-wz-size} remains the same:
\begin{align*}
|\calD_w| \geq \sum_{z \in Z\setminus\{w\}} |\calD_z|.
\end{align*} 
\end{itemize}
Hence, to lower bound $\SC(w | \xx)$, we now apply \eqref{eq:DMM-zz-asymmetric} for $z=w$, \eqref{eq:DMM-zw-asymmetric} for $z \neq w$, and \eqref{eq:DMM-wz-size}:
\begin{align}
\SC(w|\xx) 
&= \SC_w(w | \xx) + \sum_{z \in Z \setminus\{w\}} \SC_z(w|\xx) \nonumber \\
&\leq \delta(w,o) \bigg( \frac{1}{2} |\calD_w| + \! \! \sum_{z \in Z \setminus\{w\}} |\calD_z| \bigg) \cdot \max_{d \in 
\calD} n_d \nonumber 
 +  \bigg( \sum_{z \in Z \setminus\{w\}}  \delta(z,o) |\calD_z| \bigg) \cdot \max_{d \in \calD} n_d  \nonumber \\
&\leq \bigg( \frac{3}{2} \delta(w,o)  |\calD_w| 
+ \! \! \sum_{z \in Z \setminus\{w\}}  \delta(z,o)  |\calD_z| \bigg) \cdot \max_{d \in \calD} n_d  \nonumber   \nonumber \\
&\leq \frac{3}{2} \bigg( \sum_{z \in Z}  \delta(z,o)  |\calD_z| \bigg) \cdot \max_{d \in \calD} n_d.  \label{eq:DMM-mech-asymmetric}
\end{align}
To lower bound $\SC(o | \xx)$ we apply \eqref{eq:DMM-zo-asymmetric}:
\begin{align}
\SC(o | \xx) 
= \sum_{z \in Z} \SC_z(o|\xx)  
&\geq \frac{1}{2} \bigg( \sum_{z \in Z}  \delta(z,o) |\calD_z| \bigg) \cdot \min_{d \in \calD} n_d. \label{eq:DMM-opt-asymmetric}
\end{align}
So, by \eqref{eq:DMM-mech-asymmetric} and \eqref{eq:DMM-opt-asymmetric}, the distortion of $\calI$ subject to $\dmm$ is at most $3\alpha$, where $\alpha = \frac{\max_{d\in \calD}n_d}{\min_{d\in \calD}n_d}$. Since $\calI$ is an arbitrary instance of $\wc(\ddm)$, Lemma~\ref{lem:discrete-structure} implies the same upper bound on the distortion of the mechanism.
\hfill $\qed$

\end{document}